\documentclass[11pt]{article}
\usepackage{amsfonts}
\usepackage[latin9]{inputenc}
\usepackage{geometry}
\usepackage{setspace}
\usepackage{amsmath}
\usepackage{amssymb}
\usepackage{bbm}
\usepackage{dcolumn}
\usepackage{graphicx}
\usepackage{enumerate}
\usepackage[inline, shortlabels]{enumitem}
\usepackage[multiple]{footmisc}
\usepackage{hhline}
\usepackage[position=bottom]{subfig}
\usepackage[authoryear]{natbib}
\usepackage[colorlinks,linkcolor=blue,citecolor=blue]{hyperref}
\usepackage{algorithm,algorithmic}
\usepackage{verbatim}

\newcommand{\norm}[1]{\left\lVert#1\right\rVert}

\newcolumntype{d}{D{.}{.}{-1}}
\newtheorem{theorem}{Theorem}
\newtheorem{remark}{Remark}
\newtheorem{lemma}{Lemma}
\newtheorem{corollary}{Corollary}
\newtheorem{definition}{Definition}
\newtheorem{example}{Example}
\newtheorem{assumption}{Assumption}
\newtheorem{proposition}{Proposition}

\newenvironment{proof}[1][Proof]{\noindent \textbf{#1.} }{\  \rule{0.5em}{0.5em}}

\title{Local Asymptotic Power of Honest Confidence Intervals}
\author{Hugo Freeman\footnote{
                   Department of Economics,
                   Michigan State University,
                   East Lansing,
                   USA.
                   }
}
\begin{document}

\maketitle

\begin{abstract}
    Confidence intervals that are conservative against an untestable bias, called bias-aware or honest, are now standard in DiD, IV, RD, and factor-model settings. This paper characterises the local power of the tests they induce. Power is governed by the rate of the bias bound relative to the sampling rate, giving three regimes: when the bound vanishes faster than the standard error, conservatism is asymptotically free; when the two are of the same order it costs a bounded, explicit amount; and when the bound dominates, the typical case at the parametric rate, the honest test has \emph{zero} local power: it covers every local alternative with probability approaching one. A minimax argument shows this loss is intrinsic to honesty itself, not a property of any particular construction. No honest procedure recovers it, and the standard bias-aware interval is rate-optimal. Broadly, any confidence interval whose width fails to shrink fast enough has no local power in the interior of the set it traces out, and at best one-sided power at the boundary. Partial identification is the limiting case of this argument. Simulations and two empirical applications illustrate the regimes in which the loss binds. The practical recommendation is to report the power ``dead zone'' half-width alongside bias-aware intervals.
\end{abstract}

\medskip
\noindent\textbf{Keywords:} honest inference; bias-aware confidence intervals; local asymptotic power; dead zone.\\
\noindent\textbf{JEL classification:} C12, C14, C23.

\section{Introduction}

A growing body of applied work reports confidence intervals that are deliberately conservative to untestable biases, where an estimate is widened by a worst-case bound on its bias so that coverage holds uniformly over a class of data-generating processes the data cannot rule out. These \emph{honest} or \emph{bias-aware} intervals are now standard for difference-in-differences under possible violations of parallel trends \citep{rambachan2023more}, instrumental variables under possible exclusion failures \citep{conley2012plausibly}, regression discontinuity \citep{armstrong2018optimal,armstrong2020simple}, and panels with potentially weak factors \citep{armstrong2022robust}.
This paper asks what such intervals cost in power, and shows the cost can be total. 
Indeed, at the usual parametric rate, an honest interval may cover every local alternative with probability approaching one, so that no nearby false value is ever rejected. The loss is not a flaw of one procedure; a minimax argument shows it is intrinsic to honesty against an untestable bias.
The same argument cuts the other way: the standard bias-aware interval is rate-optimal among honest procedures, and the directional critical value of \citet{armstrong2018optimal} attains the sharp honest power frontier of Proposition~\ref{prop:sharp}. The results therefore price the honesty requirement itself, rather than fault the constructions that meet it.
Taking a local-alternative approach makes the size--power trade-off of conservative tests precise, and the duality between the length of a confidence interval and the power of its \emph{dual} test, the test that rejects a candidate value when the interval excludes it, is itself classical \citep{pratt1961length}.

The organising idea is that this power loss is not, in essence, about conservatism or bias-awareness, but about \emph{interval width that does not shrink}. A level-$\alpha$ confidence interval whose width fails to vanish fast enough relative to the alternatives traces out a set of positive volume, and against any alternative in the interior of that set the dual test is powerless. 
Only at the boundary can one-sided power survive. Partial identification is the limiting case, in which the width converges to a fixed identified set that does not shrink at all \citep[cf.][]{imbens2004confidence,stoye2009more}. 
Honest inference under an untestable bias is the same phenomenon at a slower rate, the bound playing the role of the identified-set width. Seen this way, the three regimes developed below are one statement: local power degenerates exactly when the suitably rescaled interval width has positive limiting volume. Conservatism against an untestable bias is the leading economic instance, and the one developed in detail.

Assume the analyst possesses an estimator $\hat\beta$ and an upper bound $\sup_\theta |\hat{B}|$, such that,
\begin{align*}
    \hat\beta - \beta = \hat{B} + O_p(n^{-\epsilon}),
\end{align*}
where $\sup_\theta |\hat{B}|$ is a function of untestable user-defined parameters $\theta$ and may be much larger than the true bias, $\hat{B}$.
The term $O_p(n^{-\epsilon})$ represents error from noise, e.g. in a parametric model $\epsilon = 1/2$. 
Confidence intervals built from such a worst-case bias bound are the \emph{bias-aware} intervals of \citet{armstrong2018optimal,armstrong2020simple} and \citet{imbens2019optimized}. That the bound is governed by $\theta$, which cannot be learned from the data and so must be supplied by the analyst, is fundamental rather than a deficiency of any particular estimator \citep{low1997nonparametric,bahadur1956nonexistence}.

For $\hat{B} = o_p\left(n^{-\epsilon}\right)$, there is negligible asymptotic bias. 
For $\hat{B} \gtrsim c\cdot n^{-\epsilon}$, non-negligible asymptotic bias persists. 
Note, the analyst only has $\sup_\theta |\hat{B}|$, not $\hat B$, so cannot use this for debiasing. 
Also note, the bound $\sup_\theta |\hat{B}|$ may be very conservative: indeed this framework admits estimators with negligible true asymptotic bias, $\hat B$, but non-negligible bounds $\sup_\theta |\hat{B}|$.

The object that organises everything is the test's \emph{dead zone}: the set of true effects
so close to the hypothesised value that an honest test cannot tell them apart, rejecting them no
more often than its nominal size. A dead zone is the unavoidable price of the bound. Because the
worst-case $\sup_\theta|\hat B|$ lets the bias move the centre of $\hat\beta$ by as much as
$\pm\sup_\theta|\hat B|$, a true effect and the null produce the same data whenever they lie
within twice that bound of each other, exactly so in the leading applied settings, such that no honest test can separate them
(Figure~\ref{fig:deadzone}). The half-width of this zone, twice the bias bound, is the paper's headline disclosure recommendation.

A two-number example makes the accounting concrete.
Let the bound be $\sup_\theta|\hat B|=1$, suppose the realised bias also equals one, and ignore sampling noise, so $\hat\beta=\beta_0+1$.
The honest interval $\hat\beta\pm1=[\beta_0,\beta_0+2]$ covers the truth exactly at its edge and retains every value out to twice the bound above the truth: its far half defends not the world that generated the data but the observationally identical twin with bias $-1$, in which the truth would be $\beta_0+2$ (Figure~\ref{fig:deadzone}(b)).
Here the width is well spent: with bias genuinely at the bound, the full allowance is needed for coverage.
The genuine cost appears when the bound is conservative, sharpest at $\hat B=0$.
Then $\hat\beta=\beta_0$, yet the interval $[\beta_0-1,\beta_0+1]$ still retains every value within one bound of the truth, spending the full width $2\sup_\theta|\hat B|$ where an oracle who knew $\hat B=0$ would spend only the sampling width $2z_{1-\alpha/2}\,se(\hat\beta)$.
Because the bias is untestable, the saving can never be identified or harvested: the analyst pays for the worst admissible world in every world, and the dead zone is that payment expressed in power rather than length.

Whether the zone matters asymptotically turns on how fast the bound shrinks relative to the local
alternatives. If the bound shrinks faster, the dead zone closes and conservatism is
asymptotically free; if the two shrink at the same rate, it persists at a fixed, explicit width
and the power loss is bounded (Theorem~\ref{thm:nondeg}, Proposition~\ref{prop:sharp}); and if
the bound shrinks more slowly, typical in the parametric case, the dead zone
engulfs every local alternative and local power is \emph{zero}
(Theorems~\ref{thm:LocalPower}--\ref{thm:LocalCI2}). A minimax argument shows the zone is
intrinsic to honesty itself, not to any one interval (Theorem~\ref{thm:lowerbound}). The same
picture recurs across the four settings: the factor-model, difference-in-differences, and
instrumental-variables cases are developed now; regression discontinuity, the matched-rate
case, is developed from the committed-estimator scope discussion of
Section~\ref{sect:minimaxpowerLossHonest} and, in detail, from Section~\ref{sect:primitives}
onward.

\begin{figure}[!ht]
\centering
\includegraphics[width=0.96\textwidth]{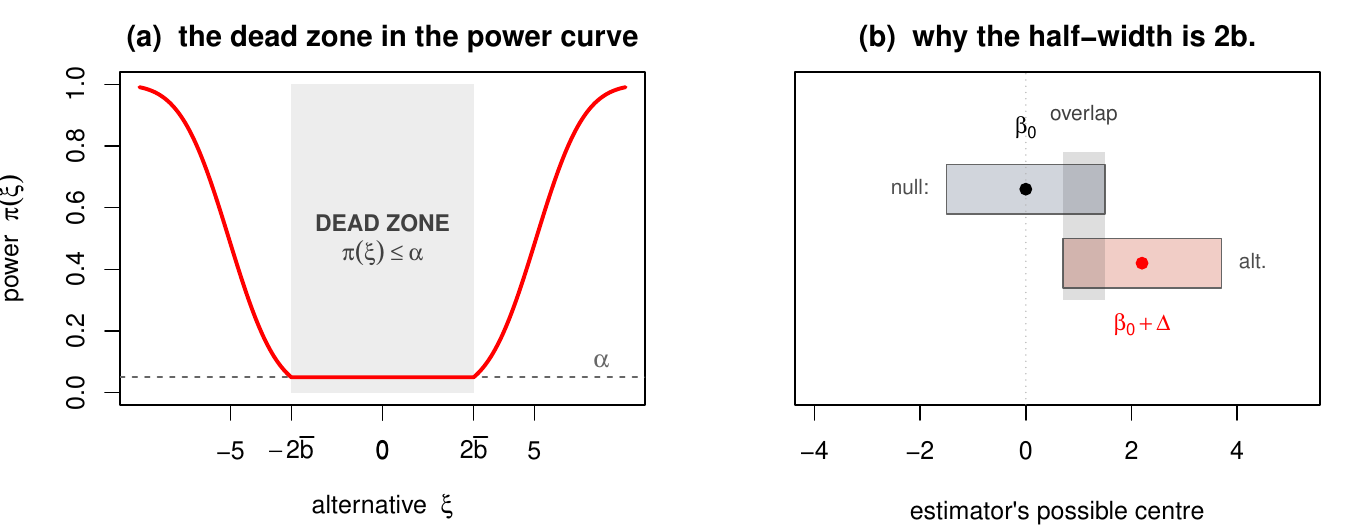}
\caption{The power dead zone. \textbf{(a)} The sharp honest power frontier
$\pi_\alpha(\xi)=\Phi\big((|\xi|-2\bar b)_+/\sigma-z_{1-\alpha}\big)$ of
Proposition~\ref{prop:sharp}: worst-case power (over the untestable bias) stays at the nominal
level $\alpha$ across the dead zone
$|\xi|\le 2\bar b$ (twice the bias bound) and rises only outside it; under any single bias the
undetected window has half-width $\bar b$, and the dead zone is the union of these windows. \textbf{(b)} Observational
equivalence. The estimator's centre can lie anywhere within $\pm\bar b$ of the true value, so the
null's centre range and the alternative's overlap whenever $\Delta<2\bar b$; the laws are
indistinguishable and the test is powerless. Detection requires $\Delta\ge 2\bar b$.}
\label{fig:deadzone}
\end{figure}

\begin{example}[Conservative to weak factor models]\label{rem:WeakFactors}
    Consider the panel data linear regression model with fixed-effects setting studied in \cite{armstrong2022robust}. 
    Take data generated as 
    \begin{align*}
        y_{it} = x_{it}'\beta +  \Gamma_{it} + \varepsilon_{it}, &&
        i\in \{1,\dots, N\}, t\in \{1,\dots, T\}.
    \end{align*}
    The term $\Gamma_{it}$ may be correlated with $x_{it}$. 
    For a given estimate $\hat\beta_{ba}$, an upper bound on absolute bias, $\bar B:= \sup_\theta|\hat B|$, 
    is defined by $\theta = R_{weak}$, i.e. the number of factors that are specified as weak. 
    The proposed confidence intervals in \cite{armstrong2022robust}, which take the fixed-length bias-aware form of \citet{donoho1994statistical}, are widened to, 
    \begin{align*}
        \{\hat\beta_{ba} \pm [\bar B + z_{1-\alpha/2}\cdot se(\hat\beta_{ba})]\}
    \end{align*}
    where $z_{1-\alpha/2}$ is the $1-\alpha/2$ quantile of the Gaussian distribution. 
    When weak factors are truly present, $\{\bar{B}, \hat B\} = O_p(\max\{N,T\}/n)$, where $n = NT$. 
    However, when strong factors are truly present, $\bar{B} = O_p(\max\{N,T\}/n)$, whilst it can be that $\hat B = o_p(1/\sqrt{n})$.\footnote{Refer to \cite{Bai2009} Theorems 2 and 4 for parametric rate convergence of $\hat\beta$ under strong factors. }  
\end{example}

\begin{figure}[!ht]
    \centering
    \includegraphics[width=0.32\linewidth, height = 5cm]{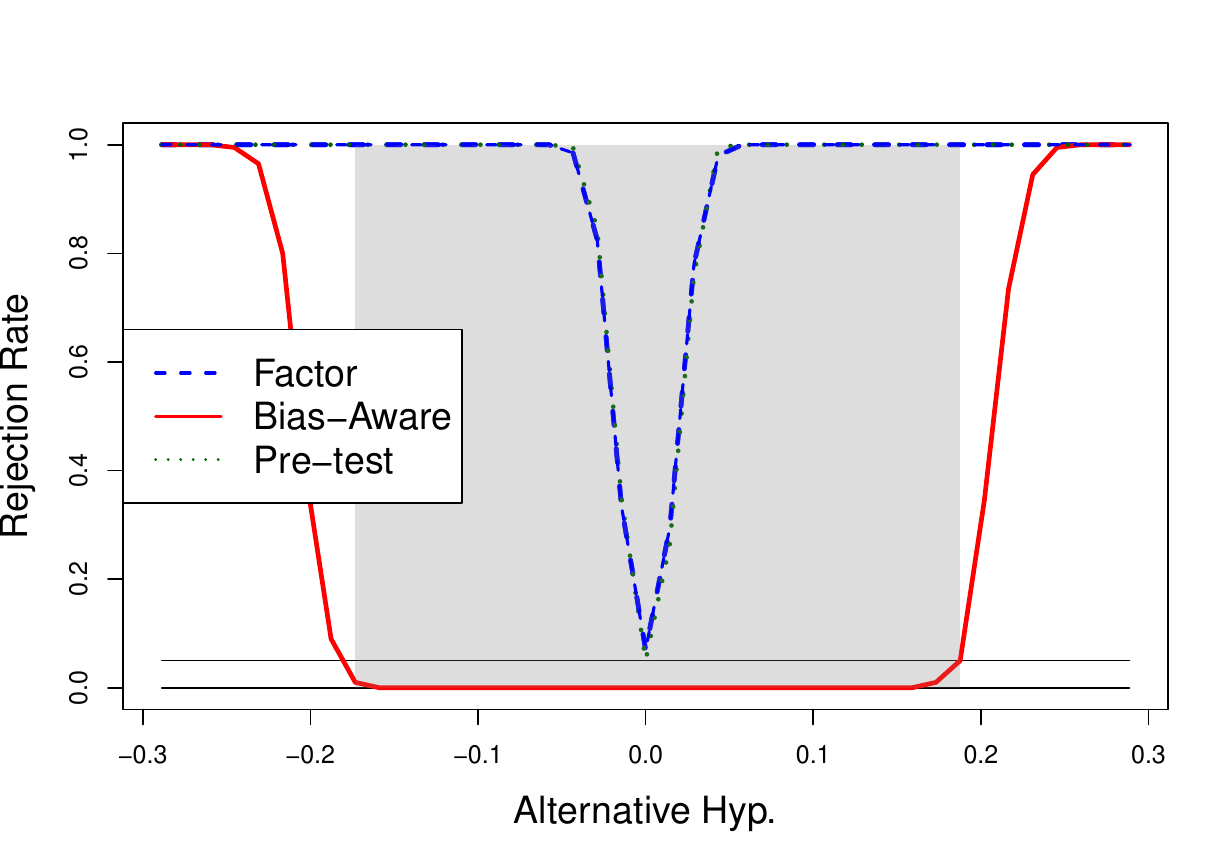}
    \includegraphics[width=0.32\linewidth, height = 5cm]{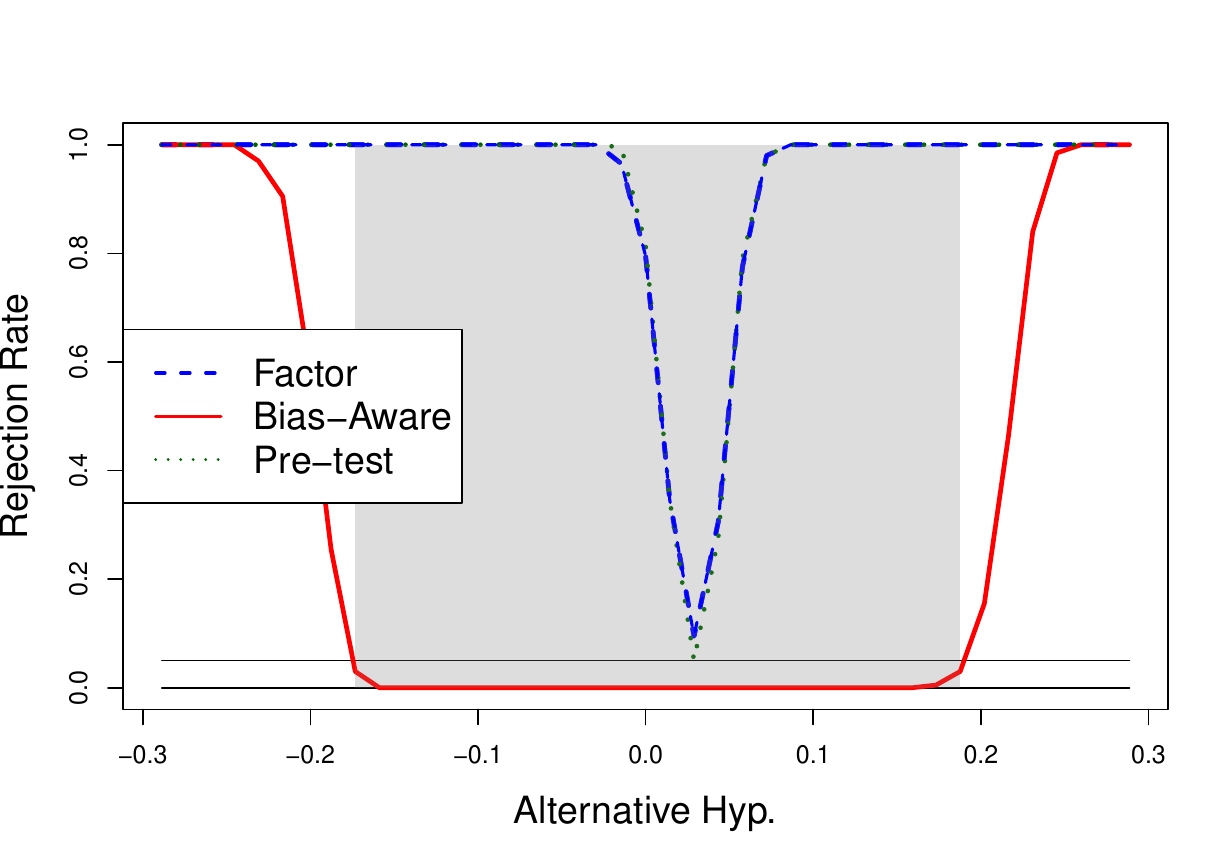}
    \includegraphics[width=0.32\linewidth, height = 5cm]{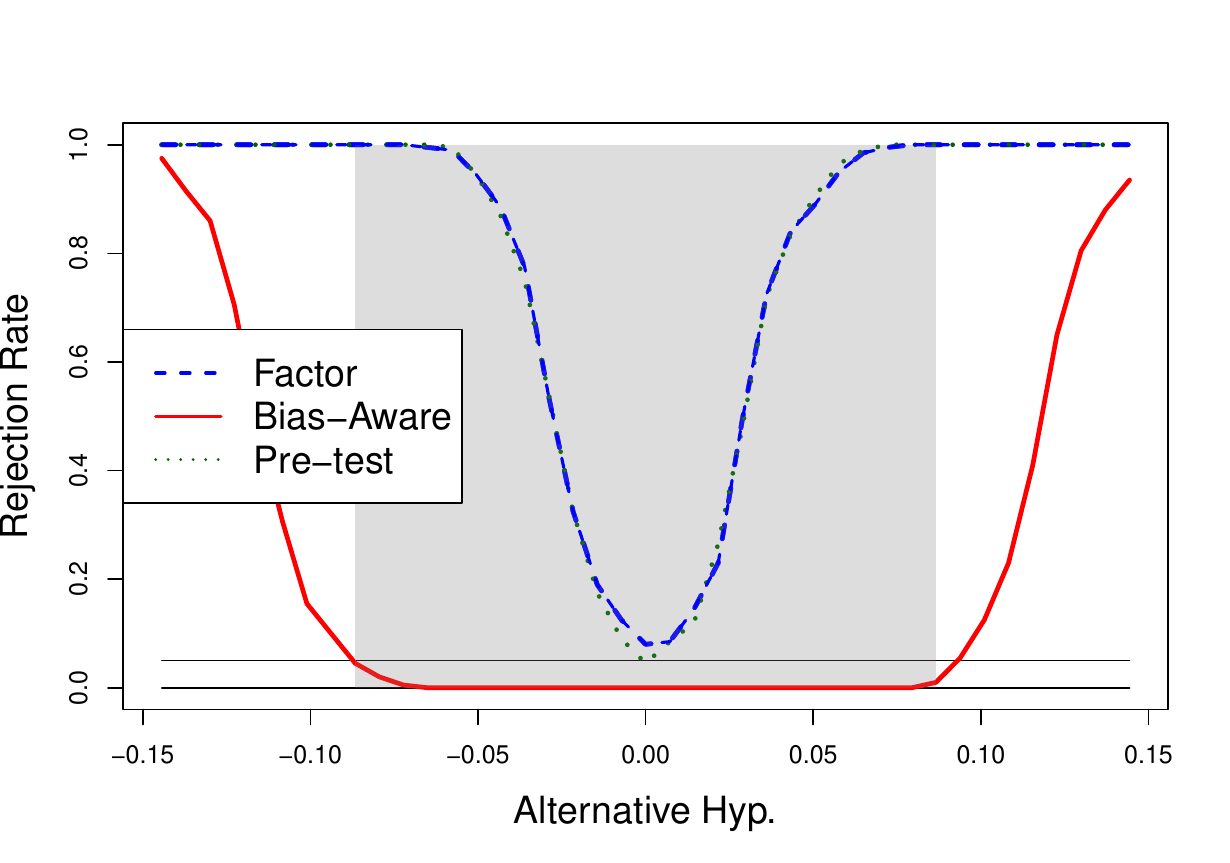}
    \caption{
    Rejection rates against $n^{-1/2}$ local alternatives.
    Left to right: No vs. Weak vs. Strong factor. 
    Blue dashed line: \cite{Bai2009};
    Red solid line: Bias-aware;
    Green dotted line: pre-test (eigenvalue ratio; least squares when $\hat R_{strong}=0$).
    Factor model over-rejects the truth with weak factors, nominally sized with no or strong factors; its non-rejection window is tight in all three
    panels. The pre-test tracks the factor model in every panel: detection fails exactly
    where it is needed. The bias-aware declaration follows the feasible rule (factors not
    certifiably strong are declared weak), so the dead zone is wider in the no and weak
    panels than in the strong panel.
    }
    \label{fig:IntroSimulation}
\end{figure}

Rejection rates for Example~\ref{rem:WeakFactors}, the debiased estimator in \cite{armstrong2022robust}, in Figure~\ref{fig:IntroSimulation} are taken from simulations in Section~\ref{sect:simFactor}. 
The CIs conservative to weak factors have absolute upper bounds on bias of order $\bar{B} = O_p(\max\{N,T\}/n)$, where $n = NT$.
Take $T = c\cdot N^a$ for $c\in (0,1)$ and $a \in [0,1]$, i.e. $N = \max\{N,T\}$, such that $\sqrt{n}$ multiplied by bias is of order $\sqrt{n}\bar{B} = O_p(N^{(1 -  a)/2})$. 
Then, for $a = 1$, i.e. $N$ and $T$ proportional to each other, scaled sup bias is $\sqrt{n}\bar B = O_p(1)$, i.e. may be bounded by a constant but is not growing asymptotically. 
This leads to finite sample power loss, with CIs converging at the parametric rate. 
However, when $a \neq 1$, scaled asymptotic sup bias is asymptotically divergent, and the bias-aware confidence intervals retain no local asymptotic power when local power is defined at the $\sqrt{n}$ rate. 
This power loss is without gain when true bias is asymptotically negligible, e.g. with no factor at all or under a strong true factor model. 

The same structure underlies two settings common in applied work.

\begin{example}[Conservative to violations of parallel trends]\label{ex:DiD}
    Consider the honest difference-in-differences setting of \cite{rambachan2023more}.
    An event study delivers pre- and post-treatment coefficients, with the post-period coefficient $\hat\beta_{post}$ targeting the treatment effect $\beta$,
    \begin{align*}
        \hat\beta_{post} - \beta = \hat B + O_p(1/\sqrt{n}),
    \end{align*}
    where $\hat B$ is the post-period violation of parallel trends.
    The analyst supplies $\theta = M$, a bound on the second differences of the differential trend (the class $\Delta^{SD}(M)$), giving, once the pre-trend is extrapolated away as the honest procedure does, the worst-case bias $\bar B := \sup_\theta |\hat B|$ of exact order $M$, constant in $n$.
    The honest interval widens to $\{\hat\beta_{post} \pm [\bar B + z_{1-\alpha/2}\cdot se(\hat\beta_{post})]\}$.
    When parallel trends are violated up to the bound, $\{\bar B, \hat B\}$ are of exact order $M$.
    However, when parallel trends hold, $\bar B$ is of exact order $M$ whilst $\hat B = o_p(1/\sqrt{n})$ ($\hat B = 0$ here).
\end{example}

\begin{example}[Conservative to invalid instruments]\label{ex:PlausExog}
    Consider the plausibly-exogenous IV setting of \cite{conley2012plausibly}, with structural outcome $y_i = \beta x_i + \gamma z_i + \varepsilon_i$, instrument $z_i$, and first stage $x_i = \pi z_i + v_i$.
    The IV estimator that imposes exogeneity satisfies
    \begin{align*}
        \hat\beta_{IV} - \beta = \hat B + O_p(1/\sqrt{n}), \qquad \hat B \xrightarrow{p} \gamma/\pi,
    \end{align*}
    the asymptotic bias from a true exclusion violation $\gamma$.
    The analyst supplies $\theta = \bar\gamma$, a bound on $|\gamma|$, giving the worst-case bias $\bar B := \sup_\theta |\hat B|$ of exact order $\bar\gamma/\pi$, constant in $n$.
    The union-of-intervals procedure of \cite{conley2012plausibly} widens to $\{\hat\beta_{IV} \pm [\bar B + z_{1-\alpha/2}\cdot se(\hat\beta_{IV})]\}$.
    When the instrument is invalid up to the bound, then $\{\bar B, \hat B\}$ are of exact order $\bar\gamma/\pi$.
    However, when the instrument is valid ($\gamma = 0$), $\bar B$ is of exact order $\bar\gamma/\pi$ whilst $\hat B = o_p(1/\sqrt{n})$ ($\hat B = 0$ here).
\end{example}

\subsection*{Related Literature\footnote{The literature on honest and bias-aware inference is vast. The references below are selective, and further suggestions are welcome.}}

The confidence intervals studied here belong to the literature on \emph{honest} and \emph{bias-aware} inference, in which an interval is widened by a worst-case bound on the estimator's bias so that coverage holds uniformly over a class of data-generating processes. 
The feature central to this paper is that the asymptotic width of such intervals is governed by a user-specified bound that indexes this class and is, in general, not consistently estimable from the data. 

In nonparametric regression this bound is a smoothness or magnitude constant. \citet{donoho1994statistical} characterises fixed-length intervals for linear functionals through the modulus of continuity over a convex parameter space, and \citet{li1989honest} shows that honest intervals must account for the bias that smoothing induces. 
\citet{armstrong2018optimal} and \citet{armstrong2020simple} develop the modern bias-aware construction, replacing the usual critical value with one that incorporates the worst-case bias. 
Applications and extensions include regression discontinuity with a discrete running variable \citep{kolesar2018inference}, finite-sample minimax regression discontinuity under a bound on the second derivative \citep{imbens2019optimized}, inference under a bound on the magnitude of control coefficients \citep{armstrong2020biasaware}, and bias-aware inference in fuzzy regression discontinuity designs \citep{noack2024bias}. 
The bias-aware intervals of \citet{armstrong2022robust} studied in Example~\ref{rem:WeakFactors} are of exactly this form, with the role of the user-specified bound played by the assumed number of weak factors.

That the governing bound cannot be learned from the data is fundamental rather than incidental. 
\citet{low1997nonparametric} and \citet{cailow2004adaptation} establish that, within standard smoothness classes, one cannot adapt the length of an honest interval to the unknown regularity of the underlying function while maintaining uniform coverage. 
The length is pinned down by the worst case, echoing the classical impossibility result of \citet{bahadur1956nonexistence}.
An early form of this floor appears in the one-dimensional subproblem at the heart of the limit experiment used below: in the bounded normal mean, the minimax half-length of a fixed-length $1-\alpha$ interval, affine or nonlinear, equals the bias bound itself whenever the bound does not exceed $z_{1-\alpha}$ noise units \citep[eq.~(9)]{donoho1994statistical}. 
This is a length statement at the scale of one bias budget; its power counterpart, at twice that scale and uniform over all honest procedures, is the subject of Section~\ref{sect:main}.
Conservatism against an untestable bias therefore carries an unavoidable cost in length, and hence, as emphasised here, in power.

A parallel strand applies the same logic to inference that is robust to model misspecification, where the analyst specifies how far the model may depart from a baseline. 
\citet{conley2012plausibly} relax the instrument exclusion restriction by a user-chosen amount and widen intervals accordingly; \citet{armstrong2021sensitivity} conduct sensitivity analysis in approximate moment-condition models under a bound on local misspecification; and \citet{andrews2017measuring} and \citet{bonhomme2022minimizing} study the sensitivity of estimates to misspecified moments. \citet{rambachan2023more} construct honest intervals for difference-in-differences that remain valid under bounded violations of parallel trends, and explicitly characterise the power of the resulting tests as a function of the assumed bound, a design motivated in part by the low power of conventional pre-trend tests \citep{roth2022pretest}.
Robust constructions of this kind continue to appear, for instance for weighted estimands under an inferred bound on treatment-effect heterogeneity \citep{vilfort2026robust}; the local power accounting developed here applies to such settings directly, and reporting the implied dead zone alongside the interval would be a natural complement to their coverage guarantees.

The trade-off between interval length and the power of the associated test is classical.
\citet{pratt1961length} relates the expected length of a confidence interval to the power of the dual family of tests. 
The severity of the power loss varies across these procedures, per the rate taxonomy of Section~\ref{sect:main}, and several of them state the size--power trade-off explicitly in their settings \citep[e.g.][]{li1989honest,low1997nonparametric,rambachan2023more,noack2024bias}.
The contribution of this paper is to make the trade-off precise in the bias-aware setting, where the relevant length is not a fixed multiple of the standard error but a user-specified bias bound whose asymptotic order may exceed the estimator's usual convergence rate. 
The weak-factor example is convenient precisely because the power trade-off is a direct function of how conservative the analyst chooses to be, namely the number of factors specified as weak.

\section{Main result}\label{sect:main}

This section begins with a simple and instructive result on the power of the dual test (the test that rejects a value exactly when the interval excludes it) under bias-aware intervals.
This establishes the breadth of the paper's contribution.
Subsection~\ref{sect:minimaxpowerLossHonest} establishes that the power loss is an inherent feature of the honest requirement, regardless of the particular procedure used to meet it.
That is, the power loss is unavoidable once honesty under undetectable bias is deemed strictly necessary. 

Assume that $\bar B := \sup_\theta |\hat B|$ is bounded. Notation $\Theta_p $ implies an exact asymptotic bound: $A_n = \Theta_p(B_n) \iff A_n = O_p(B_n)$ and $A_n^{-1} = O_p(B_n^{-1})$. 

\subsection{Power loss of bias-aware intervals}\label{sect:powerLossBACI}

\begin{assumption}\label{ass:AnalystEst}
Assume the analyst possesses $\hat{\beta}$, and $\bar B:= \sup_\theta |\hat B|$ such that, 
\begin{align*}
    \hat\beta - \beta = \hat{B} + O_p(n^{-\epsilon}). 
\end{align*}  
    Moreover, assume the analyst possesses $se(\hat\beta) = \Theta_p(n^{-\epsilon})$.
\end{assumption}

\begin{assumption}\label{ass:BiasBound}
    For sample size $n$, and $\bar B := \sup_\theta |\hat B|$, there is:
        $\bar B  = \Theta (n^{-\delta})$
    for $\delta \in [0, \infty)$. 
\end{assumption}
Assumption~\ref{ass:BiasBound} upper bounds the supremum over asymptotic bias to be constant, and admits settings with non-negligible asymptotic bias if $\delta < \epsilon$. 

Local alternative hypotheses are defined per \cite{newey1994large} as follows:
\begin{definition}\label{def:LocalAlt}
    For $\rho \in [0, \epsilon]$ the null and the local alternatives, indexed by the set $\Xi$, are:
    \begin{align*}
        H_0: \beta = \beta_0 
        &&
        \textrm{vs. }
        &&
        H_{\rho}: \beta = \beta_0 + \xi\cdot n^{-\rho},
        &&
        \xi\in\Xi
        .
    \end{align*}
\end{definition}

Throughout the power analysis $\beta_0$ is the null value and the data are generated under it
($\beta=\beta_0$): the null is correctly specified. It is then asked how often the dual test
rejects nearby false values, the local alternatives $A_n=\beta_0+\xi\,n^{-\rho}$. Three reading
notes fix the interpretation of every probability statement that follows. First, the subscript
in $\Pr_{\beta_0}$ names the world generating the data; it is not an argument of the
probability. Second, the only random object in the event $\{A_n\notin\mathcal B_n\}$ is the
interval $\mathcal B_n$, computed from those data; the tested value $A_n$ is a deterministic
sequence. Third, an interval \emph{excludes} $A_n$ when $A_n$ falls outside it, which is
precisely the event that the dual test rejects $A_n$. The local power is therefore
$\Pr_{\beta_0}\{A_n\notin\mathcal B_n\}$, and zero local power means the interval covers these
false values with probability approaching one. The dead zone will turn out to have the same
width about any hypothesised value, so the applications of Section~\ref{sect:applications}
centre it at the economically natural null of no effect; nothing in the theory privileges that
choice.
Fixed alternative hypotheses are admitted in Definition~\ref{def:LocalAlt} by setting $\rho = 0$.

\begin{theorem}\label{thm:LocalCI}
    Under Assumptions~\ref{ass:AnalystEst} and \ref{ass:BiasBound}, and $\rho \in [0, \epsilon]$ with $\delta<\rho$: 
    \begin{align*}
        n^{\rho}\cdot(\bar B + z_{1-\alpha/2}\cdot se(\hat{\beta})) = \Theta_p(n^{\rho-\delta}) + O_p(1) \to \infty.
    \end{align*}
    Moreover, for $\rho \in [0, \epsilon]$ and $\delta\geq \rho$,
    \begin{align*}
        n^{\rho}\cdot(\bar B + z_{1-\alpha/2}\cdot se(\hat{\beta})) = O_p(1) .
    \end{align*}
\end{theorem}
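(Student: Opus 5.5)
The plan is to split the scaled half-width into its two pieces and handle each with one of the standing assumptions:
\begin{align*}
n^{\rho}\bigl(\bar B + z_{1-\alpha/2}\, se(\hat\beta)\bigr) = n^{\rho}\bar B + z_{1-\alpha/2}\, n^{\rho}\, se(\hat\beta).
\end{align*}
The bias term $n^\rho \bar B$ is deterministic by Assumption~\ref{ass:BiasBound}. We have $\bar B=\Theta(n^{-\delta})$, so there are constants $0<c\le C<\infty$ with $c\,n^{-\delta}\le \bar B\le C\,n^{-\delta}$ for all large $n$. Multiplying by $n^\rho$ gives $c\,n^{\rho-\delta}\le n^\rho\bar B\le C\,n^{\rho-\delta}$, that is, $n^\rho \bar B=\Theta(n^{\rho-\delta})$, and a deterministic $\Theta$ is trivially a $\Theta_p$.

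The noise term is handled by Assumption~\ref{ass:AnalystEst}, which gives $se(\hat\beta)=O_p(n^{-\epsilon})$. Hence $n^\rho se(\hat\beta)=O_p(n^{\rho-\epsilon})$. Since $\rho\le\epsilon$, the factor $n^{\rho-\epsilon}\le 1$, so this term is $O_p(1)$. The constant $z_{1-\alpha/2}$ does not affect this. Adding the two pieces gives the displayed decomposition $\Theta_p(n^{\rho-\delta})+O_p(1)$ in both cases. Only the upper half of the $\Theta_p$ statement for $se$ is needed here.

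The two cases then follow from the sign of $\rho-\delta$.

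\textbf{Case $\delta<\rho$.} Here $n^{\rho-\delta}\to\infty$. The sum diverges because both terms are nonnegative: the whole expression is at least $n^\rho\bar B\ge c\,n^{\rho-\delta}\to\infty$. Divergence in probability follows directly, with no need to control the $O_p(1)$ term. This nonnegativity step is the one point that needs care. A $\Theta_p$ plus an $O_p(1)$ does not in general diverge, and it does here only because the standard error is nonnegative.

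\textbf{Case $\delta\ge\rho$.} Here $n^{\rho-\delta}\le 1$, so $n^\rho\bar B\le C$ is bounded. A bounded term plus an $O_p(1)$ term is $O_p(1)$, which gives the second claim.

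One boundary point should be noted. If $\rho=\epsilon=\delta$, the expression is $\Theta_p(1)$, which is consistent with the $O_p(1)$ statement. If additionally one wanted a strictly positive limit, the lower half of the $\Theta_p$ assumption on $se(\hat\beta)$ would supply it, but the theorem does not require this.
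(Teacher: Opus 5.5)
Your proof is correct and is the same argument as the paper's, which simply calls the result immediate from the two assumptions: split the half-width, note $n^{\rho}\bar B=\Theta(n^{\rho-\delta})$ deterministically and $n^{\rho}se(\hat\beta)=O_p(n^{\rho-\epsilon})=O_p(1)$, and read off the two cases from the sign of $\rho-\delta$. One small correction: nonnegativity of $se(\hat\beta)$ is a convenient shortcut, not the point that needs care. A positive leading term bounded below by $c\,n^{\rho-\delta}\to\infty$ outgrows any $O_p(1)$ remainder of either sign, so $\Theta_p(n^{\rho-\delta})+O_p(1)\to\infty$ holds generally once the leading term is positive.
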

\begin{proof}
    Theorem~\ref{thm:LocalCI} is immediate under the stated assumptions. 
\end{proof}

Theorem~\ref{thm:LocalCI} states an immediate implication of scaling up confidence intervals at a rate faster than the worst-case bias shrinks: the intervals diverge.
Its hypothesis is satisfied by a range of honest procedures in current use \citep{armstrong2022robust,rambachan2023more,conley2012plausibly}, so the power loss is inherited wholesale rather than by any one construction. 
Section~\ref{sect:minimaxpowerLossHonest} shows that the loss it induces is intrinsic to honesty under user defined biases.

The results of this subsection share one mechanism. After rescaling,
the local power function of any interval is the non-coverage function of its limiting rescaled
form. This is the organising claim of the introduction: the limit of the rescaled
width determines local power.

\begin{lemma}[Width--power duality]\label{lem:containment}
Let $\mathcal B_n=[\,l_n,u_n\,]$ be any interval and suppose that, with data generated at
the truth $\beta = \beta_0$, the rescaled endpoints converge jointly in distribution,
\begin{align*}
\big(\,n^{\rho}(l_n-\beta_0),\ n^{\rho}(u_n-\beta_0)\,\big)\ \xrightarrow{d}\ (L,U),
\end{align*}
with values in the extended reals. Then for every $\xi$ that is an atom of neither endpoint,
$\Pr\{L=\xi\}=\Pr\{U=\xi\}=0$,
\begin{align*}
\lim_{n\to\infty}\Pr\left\{A_n\notin\mathcal B_n\right\}\ =\ \Pr\left\{\xi\notin[L,U]\right\},
\qquad A_n=\beta_0+\xi\,n^{-\rho}.
\end{align*}
In particular every $\xi$ with $\Pr\{L\le\xi\le U\}\ge1-\alpha$ lies in the dead zone of
$\mathcal B_n$.
\end{lemma}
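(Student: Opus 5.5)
The plan is to write the rejection event as a fixed set evaluated at the rescaled endpoints and then apply the portmanteau theorem. Since $A_n=\beta_0+\xi n^{-\rho}$ and $n^{\rho}>0$, multiplying through by $n^{\rho}$ gives
\begin{align*}
\{A_n\notin\mathcal B_n\}=\{\xi< n^{\rho}(l_n-\beta_0)\}\cup\{\xi> n^{\rho}(u_n-\beta_0)\}=\{(L_n,U_n)\in F_\xi\},
\end{align*}
where $L_n:=n^{\rho}(l_n-\beta_0)$, $U_n:=n^{\rho}(u_n-\beta_0)$, and $F_\xi:=\{(a,b)\in\bar{\mathbb R}^2: a>\xi \text{ or } b<\xi\}$. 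The point is that $F_\xi$ does not depend on $n$. The deterministic tested sequence $A_n$ has therefore been absorbed completely into the rescaling, so the whole question becomes one about the law of $(L_n,U_n)$.

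The second step applies the portmanteau theorem to $(L_n,U_n)\xrightarrow{d}(L,U)$ on the compact metrizable space $\bar{\mathbb R}^2$. This gives $\Pr\{(L_n,U_n)\in F_\xi\}\to\Pr\{(L,U)\in F_\xi\}$, provided $F_\xi$ is a continuity set, i.e.\ $\Pr\{(L,U)\in\partial F_\xi\}=0$. The boundary of $F_\xi$ lies inside $\{a=\xi\}\cup\{b=\xi\}$, because $F_\xi$ is a union of two open half-spaces in the respective coordinates. By hypothesis $\Pr\{L=\xi\}=\Pr\{U=\xi\}=0$, so the boundary is null and the limit follows.

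It remains to identify the limit. We have $\{(L,U)\in F_\xi\}=\{L>\xi\}\cup\{U<\xi\}=\{\xi\notin[L,U]\}$. When $L>U$ with positive probability, the limiting "interval" is empty and the identity still holds, since then $\xi$ is excluded automatically. For the ``in particular'' clause, take $\xi$ with $\Pr\{L\le\xi\le U\}\ge1-\alpha$. The limiting rejection probability is then $1-\Pr\{L\le\xi\le U\}\le\alpha$, so $\xi$ is rejected asymptotically no more often than the nominal size, which is the definition of the dead zone. If $\xi$ happens to be an atom, I would instead use the closed-set bound $\limsup_n\Pr\{(L_n,U_n)\in F_\xi\}\le\Pr\{(L,U)\in\bar F_\xi\}$. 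Coverage of $\xi$ is then the relevant quantity only up to the atom mass, and I would note that the clause is intended for non-atoms, as in the main statement.

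The main obstacle is the extended-real bookkeeping. Endpoints may diverge, as in Theorem~\ref{thm:LocalCI} with $\delta<\rho$, where $L=-\infty$ and $U=+\infty$. I would handle this by working on $\bar{\mathbb R}^2$ with its order topology, where weak convergence and portmanteau apply unchanged. I would also check that $\xi\in\mathbb R$ is never at the points $\pm\infty$, so divergent endpoints cause no boundary issue. In that case the lemma gives limiting power $\Pr\{\xi\notin[-\infty,\infty]\}=0$ for every $\xi$, which is the degenerate regime.
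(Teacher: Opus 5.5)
Your proof is correct. It shares the paper's starting point but uses a different convergence mechanism.

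Both arguments write the exclusion event as $\{L_n>\xi\}\cup\{U_n<\xi\}$. The paper then proceeds in three steps:
\begin{enumerate}[(i)]
\item it uses $l_n\le u_n$ to make this a disjoint union;
\item it applies only the \emph{marginal} convergence of the distribution functions of $L_n$ and $U_n$ at the continuity point $\xi$, and adds the two limits;
\item it invokes $L\le U$ (which follows from joint convergence, since $\{a\le b\}$ is closed) to recombine $\Pr\{L>\xi\}+\Pr\{U<\xi\}$ into $\Pr\{\xi\notin[L,U]\}$.
\end{enumerate}
You instead treat the union as a single continuity set $F_\xi$ of the joint limit on $\bar{\mathbb R}^2$ and apply the portmanteau theorem once.

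The paper's route is more elementary, and it supports its closing remark that only the marginal laws enter. Yours uses the joint convergence the lemma assumes, but in exchange dispenses with both $l_n\le u_n$ and $L\le U$. It therefore covers possibly empty intervals without modification. It also makes the extended-real bookkeeping explicit, where the paper leaves it implicit: $\bar{\mathbb R}^2$ is compact metrizable, and a finite $\xi$ never meets atoms at $\pm\infty$.

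Your caution about the ``in particular'' clause is well placed and goes beyond the paper, whose proof does not address that clause. Read literally, without the non-atom restriction, the clause is false. Take deterministic endpoints with $L_n=\xi+1/n$ and $U_n=\xi+1$. Then $\Pr\{L\le\xi\le U\}=1$, yet $A_n<l_n$ for every $n$, so the exclusion probability is one throughout. The clause therefore holds exactly for the non-atoms covered by the main display, as you concluded.

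For membership in the formal dead zone of Definition~\ref{def:deadzone}, which takes an infimum over the nuisance, a limiting exclusion probability of at most $\alpha$ under the law at hand suffices.
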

\begin{proof}
Write $L_n=n^{\rho}(l_n-\beta_0)$ and $U_n=n^{\rho}(u_n-\beta_0)$. Since $l_n\le u_n$, the
exclusion event is the disjoint union
$\{A_n\notin\mathcal B_n\}=\{L_n>\xi\}\cup\{U_n<\xi\}$. Convergence in distribution of the
endpoints means their distribution functions converge at every continuity point of the limits,
and $\Pr\{L=\xi\}=\Pr\{U=\xi\}=0$ makes $\xi$ such a point for both; hence
$\Pr\{L_n>\xi\}\to\Pr\{L>\xi\}$ and $\Pr\{U_n<\xi\}\to\Pr\{U<\xi\}$. The limit events can occur
together only if $U<\xi<L$, which $L\le U$ rules out (equality included), so adding the two
limits gives $\Pr\{\xi\notin[L,U]\}$. Only the marginal laws of the
endpoints enter.
\hfill
\end{proof}

Theorems~\ref{thm:LocalPower}--\ref{thm:nondeg} below are instances, differing only in the
limit $(L,U)$ of the bias-aware interval: $(-\infty,+\infty)$ in the divergent regime
$\delta<\rho$ (zero power); the constants $\big((c-1)\bar b,(c+1)\bar b\big)$ at the matched
rate with negligible noise (a step-function power); and Gaussian endpoints when the noise
survives. Proposition~\ref{prop:partialID} is the same statement with the identified set as
the limit.

Theorem~\ref{thm:LocalPower} states that whenever $\hat B = o_p(\bar B)$, local alternatives that are asymptotically small with respect to the bias bound, $\bar B$, are not rejected with probability approaching one (wpa1).
Whilst the set of local alternatives, $\Xi_n$, can grow with sample size, hence be unbounded in the limit, the rate this can grow is potentially very slow. Thus, it is easier to interpret this as a compact set. 

\begin{theorem}\label{thm:LocalPower}
     Make Assumptions~\ref{ass:AnalystEst} and \ref{ass:BiasBound}, and use Definition~\ref{def:LocalAlt}, with $\beta = \beta_0$, where the alternative set may grow with $n$: $\xi\in\Xi_n\subset\mathbb{R}$ with $\sup|\Xi_n| := \sup_{\xi\in\Xi_n}|\xi| = o(n^{\rho - \delta})$.
    Let $A_n = \beta_0 + \xi\cdot n^{-\rho}$, and $\mathcal{B}_n = \left[\hat\beta  \pm (\bar B + z_{1-\alpha/2}\cdot se(\hat{\beta}))\right]$ with $\rho\in[0,\epsilon)$ and $\delta < \rho$. 
    Then, when $\hat B = o_p(\bar B)$, for any $\xi\in \Xi_n$,
    \begin{align*}
        \lim_{n\to\infty}\Pr_{\beta_0}\left\{A_n\in \mathcal{B}_n\right\}  = 1.
    \end{align*}
\end{theorem}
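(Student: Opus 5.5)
The plan is to show directly that, with data generated at $\beta_0$, both the lower and upper endpoint events fail with vanishing probability. The argument rescales by the bias bound $\bar B$ rather than by $n^{-\rho}$, since $\bar B$ is the dominant scale. Write $\hat\beta-\beta_0=\hat B+R_n$ with $R_n=O_p(n^{-\epsilon})$, as in Assumption~\ref{ass:AnalystEst}. Then $A_n\in\mathcal B_n$ if and only if
\begin{align*}
|\hat B+R_n-\xi n^{-\rho}|\le \bar B+z_{1-\alpha/2}\,se(\hat\beta).
\end{align*}
Since $se(\hat\beta)\ge0$, it suffices to show $|\hat B+R_n-\xi n^{-\rho}|\le\bar B$ with probability approaching one.

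\textbf{Bounding the three terms.} Divide through by $\bar B=\Theta(n^{-\delta})$ from Assumption~\ref{ass:BiasBound}, and treat the resulting three terms separately.
\begin{enumerate}[(i)]
\item $\hat B/\bar B=o_p(1)$ by the hypothesis $\hat B=o_p(\bar B)$.
\item $R_n/\bar B=O_p(n^{\delta-\epsilon})=o_p(1)$, because $\delta<\rho<\epsilon$. Here the strict inequality $\rho<\epsilon$ in the statement is what is used.
\item $\xi n^{-\rho}/\bar B=O(|\xi|\,n^{\delta-\rho})$, which is $o(1)$ uniformly over $\xi\in\Xi_n$ because $\sup|\Xi_n|=o(n^{\rho-\delta})$.
\end{enumerate}
Hence $|\hat B+R_n-\xi n^{-\rho}|/\bar B=o_p(1)$, so this ratio is at most $1$ with probability tending to one. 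That gives $\Pr_{\beta_0}\{A_n\in\mathcal B_n\}\to1$, and the convergence is in fact uniform over $\Xi_n$.

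\textbf{Main obstacle.} The delicate step is that the deterministic part $\bar B$ must dominate everything, with no appeal to a limit distribution. Lemma~\ref{lem:containment} could be invoked instead, with $(L,U)=(-\infty,+\infty)$ as in Theorem~\ref{thm:LocalCI}, but that route would require $\hat B$ to be controlled at the $n^{-\rho}$ scale. The direct ratio argument only needs the relative statement $\hat B=o_p(\bar B)$, so I would use it.

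The remaining care points are both handled by Assumption~\ref{ass:BiasBound}. First, $\bar B$ must be bounded below at rate $n^{-\delta}$, which the $\Theta$ (exact order) statement supplies, so that the ratio in (iii) is controlled. Second, $\bar B$ is treated as deterministic; if it were data-dependent, the same argument would go through with $\Theta_p$ in place of $\Theta$.
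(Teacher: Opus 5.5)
Your proof is correct and is essentially the paper's argument in a different normalisation. The paper multiplies the containment inequality by $n^{\rho}$ and shows that $-L_n$ and $U_n$ are of exact order $n^{\rho-\delta}$ while $\sup|\Xi_n|=o(n^{\rho-\delta})$; dividing by $\bar B$ is the same comparison, and both routes give coverage simultaneously over $\Xi_n$. Two side remarks are slightly off, though harmless: step (ii) needs only $\delta<\epsilon$, which already follows from $\delta<\rho\le\epsilon$, so because you discard the $z_{1-\alpha/2}\,se(\hat\beta)$ term your argument also covers $\rho=\epsilon$; and the $n^{\rho}$-rescaled route likewise needs only $\hat B=o_p(\bar B)$, via $n^{\rho}(\hat B\pm\bar B)=\pm n^{\rho}\bar B\,(1+o_p(1))$, not control of $\hat B$ at scale $n^{-\rho}$.
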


\begin{proof}
By definition, $A_n \in \mathcal{B}_n$ if and only if
\begin{align*}
    \hat\beta - \left(\bar B + z_{1-\alpha/2}\cdot se(\hat{\beta})\right)
    \leq A_n \leq
    \hat\beta + \left(\bar B + z_{1-\alpha/2}\cdot se(\hat{\beta})\right).
\end{align*}
Substituting $A_n = \beta_0 + \xi\cdot n^{-\rho}$ and $\hat\beta - \beta_0 = \hat B + O_p(n^{-\epsilon})$ from
Assumption~\ref{ass:AnalystEst}, subtracting $\beta_0$, and multiplying through by $n^{\rho}$, this is
equivalent to $L_n \leq \xi \leq U_n$, where
\begin{align}
    L_n &= n^{\rho}(\hat B - \bar B) - z_{1-\alpha/2}\cdot n^{\rho} se(\hat{\beta}) + O_p(n^{\rho - \epsilon}),
    \nonumber
    \\
    \label{eqn:containment}
    U_n &= n^{\rho}(\hat B + \bar B) + z_{1-\alpha/2}\cdot n^{\rho} se(\hat{\beta}) + O_p(n^{\rho - \epsilon}).
\end{align}
From $\delta < \rho < \epsilon$, these terms are, by Assumptions~\ref{ass:AnalystEst}
and \ref{ass:BiasBound},
\begin{align*}
    n^{\rho}\bar B = \Theta(n^{\rho - \delta}) \to \infty,
    \qquad
    n^{\rho}\hat B = o_p\!\left(n^{\rho}\bar B\right),
    \qquad
    n^{\rho} se(\hat{\beta}) = \Theta_p(n^{\rho - \epsilon}) = o_p(1),
\end{align*}
since $\hat B = o_p(\bar B)$. Hence
$n^{\rho}(\hat B \pm \bar B) = \pm\, n^{\rho}\bar B\,(1 + o_p(1)) = \pm\,\Theta_p(n^{\rho - \delta})$, so that
\begin{align*}
    U_n = \Theta_p(n^{\rho - \delta}) \to +\infty,
    \qquad
    L_n = -\Theta_p(n^{\rho - \delta}) \to -\infty.
\end{align*}
Every $\xi \in \Xi_n$ satisfies $|\xi| \leq \sup|\Xi_n| = o(n^{\rho - \delta})$, while $U_n$ and $-L_n$
are of exact order $n^{\rho - \delta}$. Therefore the interval $[L_n, U_n]$ contains
$[-\sup|\Xi_n|, \sup|\Xi_n|] \supseteq \{\xi\}$ with probability approaching one, and event
\eqref{eqn:containment} holds wpa1. Consequently,
\begin{align*}
    \lim_{n\to\infty} \Pr_{\beta_0}\left\{ A_n \in \mathcal{B}_n \right\} = 1.
\end{align*}
\hfill
\end{proof}

Theorem~\ref{thm:LocalPower} shows zero local power under a possibly slack bound, $\hat B = o_p(\bar B)$. 
In the sequel, Theorem~\ref{thm:LocalCI2} shows that this power loss persists even when the sign of the bias is correct and the bound is tight, provided the alternatives shrink faster than the bias bound, i.e. $\delta<\rho$.
Knowing the sign of bias is of course outside the capabilities of bias-aware confidence intervals, since then a simple debias could be performed. It is nonetheless a useful exercise to study power properties had these been pinned down in the confidence interval widenings. 
 
\begin{theorem}\label{thm:LocalCI2}
    Make the assumptions of Theorem~\ref{thm:LocalPower}, with the condition $\hat B = o_p(\bar B)$ replaced by,
    \begin{align*}
        \hat B = c\cdot\bar B + o_p(n^{-\epsilon}), \qquad c\in[-1,1].
    \end{align*}
    Assume $\delta<\rho$ and $\Xi$ compact. Then, for every fixed $\xi\in\Xi$ with $\xi\neq 0$,
    \begin{align*}
        \lim_{n\to\infty}\Pr_{\beta_0}\left\{A_n\in\mathcal{B}_n\right\}
        =
        \begin{cases}
            1, & c\in(-1,1),
            \\
            \mathbbm{1}\{\operatorname{sign}(\xi)=c\}, & c\in\{-1,1\}.
        \end{cases}
    \end{align*}
\end{theorem}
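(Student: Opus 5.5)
The plan is to reuse the containment decomposition from the proof of Theorem~\ref{thm:LocalPower} and substitute the new bias representation. Under $\beta=\beta_0$, the event $A_n\in\mathcal B_n$ is again equivalent to $L_n\le\xi\le U_n$, with $L_n,U_n$ as in \eqref{eqn:containment}. Inserting $\hat B=c\bar B+o_p(n^{-\epsilon})$ and grouping terms gives
\begin{align*}
L_n &= (c-1)\,n^{\rho}\bar B \;-\; z_{1-\alpha/2}\,n^{\rho}se(\hat\beta) \;+\; O_p(n^{\rho-\epsilon}) + o_p(n^{\rho-\epsilon}),\\
U_n &= (c+1)\,n^{\rho}\bar B \;+\; z_{1-\alpha/2}\,n^{\rho}se(\hat\beta) \;+\; O_p(n^{\rho-\epsilon}) + o_p(n^{\rho-\epsilon}).
\end{align*}
Since $\rho<\epsilon$, Assumption~\ref{ass:AnalystEst} makes every noise term $o_p(1)$, including $n^{\rho}se(\hat\beta)=\Theta_p(n^{\rho-\epsilon})$. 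Since $\delta<\rho$, Assumption~\ref{ass:BiasBound} gives $n^{\rho}\bar B=\Theta(n^{\rho-\delta})\to\infty$ deterministically. So $L_n=(c-1)n^\rho\bar B+o_p(1)$ and $U_n=(c+1)n^\rho\bar B+o_p(1)$.

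I will then split into cases on $c$.

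\textbf{Case $c\in(-1,1)$.} Both $1-c$ and $1+c$ are strictly positive constants. Hence $L_n\to-\infty$ and $U_n\to+\infty$ in probability. Any fixed $\xi$ (indeed any $\xi$ in the compact $\Xi$) therefore lies in $[L_n,U_n]$ wpa1, and the coverage probability tends to one. This is exactly the argument of Theorem~\ref{thm:LocalPower}; the $o_p(\bar B)$ condition is replaced by a fixed fraction $c$ bounded away from $\pm1$.

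\textbf{Case $c=1$.} Here $L_n=o_p(1)$ while $U_n\to+\infty$. In the language of Lemma~\ref{lem:containment}, $(L_n,U_n)\xrightarrow{d}(0,+\infty)$, and every $\xi\neq0$ is a non-atom. The lemma then yields coverage limit $\Pr\{\xi\in[0,\infty]\}=\mathbbm 1\{\xi>0\}=\mathbbm 1\{\operatorname{sign}(\xi)=1\}$.

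One can also argue directly. For $\xi>0$ we have $\Pr\{L_n\le\xi\}\ge\Pr\{|L_n|<\xi\}\to1$ and $\Pr\{U_n\ge\xi\}\to1$. For $\xi<0$ we have $\Pr\{L_n\le\xi\}\le\Pr\{|L_n|\ge|\xi|\}\to0$.

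\textbf{Case $c=-1$.} This is the mirror image: $U_n=o_p(1)$ and $L_n\to-\infty$, giving $\mathbbm 1\{\xi<0\}$.

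The only delicate step is the boundary case $c=\pm1$. There the leading bias terms in one endpoint cancel exactly, so the limit is governed entirely by the remainder terms. I must check that all of them vanish after scaling by $n^\rho$:
\begin{itemize}
\item the sampling error $O_p(n^{-\epsilon})$;
\item the $z_{1-\alpha/2}\,se(\hat\beta)$ widening;
\item the $o_p(n^{-\epsilon})$ slack in the bias representation.
\end{itemize}
This uses the strict inequality $\rho<\epsilon$ inherited from Theorem~\ref{thm:LocalPower}. If $\rho=\epsilon$ were allowed, the endpoint would retain a nondegenerate Gaussian component. The exclusion $\xi\neq0$ is needed because $\xi=0$ is the atom of the degenerate limit $L=0$.
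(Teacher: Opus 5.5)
Your proposal is correct and follows the paper's proof essentially step for step. It uses the same containment $\xi\in[L_n,U_n]$ inherited from Theorem~\ref{thm:LocalPower}, the same substitution producing leading terms $(c-1)n^{\rho}\bar B$ and $(c+1)n^{\rho}\bar B$, and the same case split, in which the cancelling endpoint at $c=\pm1$ is $o_p(1)$ because $\rho<\epsilon$. Writing the remainders as $o_p(1)$ rather than $o_p(n^{\rho}\bar B)$, and the optional appeal to Lemma~\ref{lem:containment}, are tidier bookkeeping within the same argument rather than a different route.
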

\begin{proof}
    By the proof of Theorem~\ref{thm:LocalPower}, $A_n\in\mathcal{B}_n$ if and only if $\xi\in[L_n,U_n]$, with $L_n,U_n$ as in \eqref{eqn:containment}. Substituting $\hat B = c\bar B + o_p(n^{-\epsilon})$, and using $n^{\rho}\bar B=\Theta(n^{\rho-\delta})\to\infty$ and $n^{\rho}se(\hat\beta)=\Theta_p(n^{\rho-\epsilon})=o_p(1)$ (since $\rho<\epsilon$),
    \begin{align*}
        L_n = (c-1)\,n^{\rho}\bar B + o_p(n^{\rho}\bar B),
        \qquad
        U_n = (c+1)\,n^{\rho}\bar B + o_p(n^{\rho}\bar B).
    \end{align*}
    If $c\in(-1,1)$ then $c-1<0<c+1$, so $L_n\to-\infty$ and $U_n\to+\infty$, and $[L_n,U_n]$ contains any fixed $\xi$ wpa1. If $c=1$ then $U_n\to+\infty$ while $(c-1)n^{\rho}\bar B=0$, leaving $L_n=-z_{1-\alpha/2}\,n^{\rho}se(\hat\beta)+O_p(n^{\rho-\epsilon})=o_p(1)$; hence $[L_n,U_n]$ covers a fixed $\xi$ wpa1 if and only if $\xi>0$. Symmetrically, if $c=-1$ then $[L_n,U_n]$ covers $\xi$ wpa1 if and only if $\xi<0$. In both boundary cases coverage occurs wpa1 exactly when $\operatorname{sign}(\xi)=c$.
    \hfill
\end{proof}
 
Theorem~\ref{thm:LocalCI2} isolates the role of the bound's tightness. 
When $\delta<\rho$ the scaled bound $n^{\rho}\bar B$ diverges, so even a bound that exactly equals the bias in the limit with known sign cannot restore two-sided power. 
The test is asymptotically powerless against local alternatives on the same side as the bias, and detects only opposite-signed alternatives. 
The best that pinning down bias can do is provide one-sided power.
For any $c\in(-1,1)$ all local power is lost, exactly as in Theorem~\ref{thm:LocalPower}.
 
The matched-rate case $\delta=\rho$, where the alternative shrinks at precisely the rate of the bias bound, is the knife edge between the divergent regime above and the standard regime $\delta>\rho$. Theorem~\ref{thm:nondeg} pins down the resulting non-degenerate power loss. 
 
\begin{theorem}\label{thm:nondeg}
    Make Assumptions~\ref{ass:AnalystEst} and \ref{ass:BiasBound}, and use Definition~\ref{def:LocalAlt} with $\beta=\beta_0$. Take $\delta=\rho\in[0,\epsilon]$, bias $\hat B = c\cdot\bar B + o_p(n^{-\epsilon})$ with $c\in[-1,1]$, and suppose $n^{\delta}\bar B\xrightarrow{p}\bar b\in(0,\infty)$. Fix $\xi\in\Xi$.
    \begin{enumerate}[(i)]
    \item If $\rho<\epsilon$, then $n^{\rho}se(\hat\beta)=o_p(1)$ and, for every $\xi\notin\{(c-1)\bar b,\,(c+1)\bar b\}$, the local power is degenerate,
    \begin{align*}
        \lim_{n\to\infty}\Pr\left\{A_n\notin\mathcal{B}_n\right\}
        =
        \mathbbm{1}\big\{\,\xi\notin[\,(c-1)\bar b,\;(c+1)\bar b\,]\,\big\}.
    \end{align*}
    \item If $\rho=\epsilon$,  assuming ${n}^\epsilon\,(\hat\beta-\beta_0-\hat B)\xrightarrow{d}\mathcal{N}(0,\sigma^2)$ and ${n}^\epsilon\,se(\hat\beta)\xrightarrow{p}\sigma$; local power is:
    \begin{align*}
        \lim_{n\to\infty}\Pr\left\{A_n\notin\mathcal{B}_n\right\}
        =
        \Pr\big\{\,|\,G+c\bar b-\xi\,| > \bar b + z_{1-\alpha/2}\,\sigma\,\big\},
        \qquad
        G\sim\mathcal{N}(0,\sigma^2).
    \end{align*}
    \end{enumerate}
\end{theorem}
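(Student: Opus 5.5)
The plan is to reduce both parts to the Width--power duality of Lemma~\ref{lem:containment}: find the joint limit $(L,U)$ of the rescaled endpoints and read off the local power as $\Pr\{\xi\notin[L,U]\}$.

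As in the proof of Theorem~\ref{thm:LocalPower}, $A_n\in\mathcal B_n$ if and only if $L_n\le\xi\le U_n$, with $L_n,U_n$ as in \eqref{eqn:containment}. I will write the noise term explicitly rather than as $O_p(n^{\rho-\epsilon})$:
\begin{align*}
L_n &= n^{\rho}(\hat\beta-\beta_0-\hat B) + n^{\rho}\hat B - n^{\rho}\bar B - z_{1-\alpha/2}\, n^{\rho}se(\hat\beta),\\
U_n &= n^{\rho}(\hat\beta-\beta_0-\hat B) + n^{\rho}\hat B + n^{\rho}\bar B + z_{1-\alpha/2}\, n^{\rho}se(\hat\beta).
\end{align*}
Substituting $\hat B=c\bar B+o_p(n^{-\epsilon})$ gives $n^{\rho}\hat B = c\,n^{\rho}\bar B + o_p(n^{\rho-\epsilon})$. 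Since $\rho\le\epsilon$, the remainder is $o_p(1)$. Using $\delta=\rho$ and $n^{\delta}\bar B\xrightarrow{p}\bar b$, we get $n^{\rho}\hat B\xrightarrow{p}c\bar b$ and $n^{\rho}\bar B\xrightarrow{p}\bar b$.

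\emph{Case (i), $\rho<\epsilon$.} By Assumption~\ref{ass:AnalystEst}, $n^{\rho}(\hat\beta-\beta_0-\hat B)=O_p(n^{\rho-\epsilon})=o_p(1)$, and likewise $n^{\rho}se(\hat\beta)=\Theta_p(n^{\rho-\epsilon})=o_p(1)$. Hence $(L_n,U_n)\xrightarrow{p}\big((c-1)\bar b,(c+1)\bar b\big)$, which are constants. These constants are the only atoms, so Lemma~\ref{lem:containment} applies to every $\xi$ off them. The limiting power is then the indicator $\mathbbm{1}\{\xi\notin[(c-1)\bar b,(c+1)\bar b]\}$. One could avoid the lemma by noting directly that for $\xi$ strictly inside the interval the event $\{L_n\le\xi\le U_n\}$ holds wpa1, and for $\xi$ strictly outside it fails wpa1.

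\emph{Case (ii), $\rho=\epsilon$.} Let $G_n:=n^{\epsilon}(\hat\beta-\beta_0-\hat B)\xrightarrow{d}G\sim\mathcal N(0,\sigma^2)$. All other terms converge in probability to constants: $n^{\epsilon}\hat B\to c\bar b$, $n^{\epsilon}\bar B\to\bar b$, and $n^{\epsilon}se(\hat\beta)\to\sigma$. By Slutsky's theorem,
\[
(L_n,U_n)\xrightarrow{d}\big(G+c\bar b-\bar b-z_{1-\alpha/2}\sigma,\;G+c\bar b+\bar b+z_{1-\alpha/2}\sigma\big),
\]
jointly, because both coordinates are continuous functions of the same $(G_n,\text{constants})$. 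If $\sigma>0$, both limits are Gaussian and have no atoms, so Lemma~\ref{lem:containment} holds for every $\xi$. The limiting power is $\Pr\{\xi\notin[L,U]\}=\Pr\{|G+c\bar b-\xi|>\bar b+z_{1-\alpha/2}\sigma\}$, which is the stated formula.

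The main obstacle is bookkeeping rather than substance. The error $o_p(n^{-\epsilon})$ in the bias must stay negligible after scaling by $n^{\rho}$, which needs $\rho\le\epsilon$; this is part of the hypothesis. The joint convergence in (ii) must be justified through Slutsky's theorem on a common random term. The degenerate case $\sigma=0$ is excluded implicitly, since $se=\Theta_p(n^{-\epsilon})$ forces $\sigma>0$, so the atom condition of the lemma is automatic there.
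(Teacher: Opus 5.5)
Your proposal is correct and follows essentially the paper's argument. In part (i) the paper also shows $(L_n,U_n)\xrightarrow{p}((c-1)\bar b,(c+1)\bar b)$. In part (ii) it applies continuous mapping to $|n^{\epsilon}(\hat\beta-A_n)|$ against the limiting half-width $\bar b+z_{1-\alpha/2}\sigma$, where you instead go through Slutsky and Lemma~\ref{lem:containment}. That route is the framing the paper itself announces after the lemma. Your explicit no-atom check, together with your observation that $se(\hat\beta)=\Theta_p(n^{-\epsilon})$ forces $\sigma>0$, spells out the boundary-probability-zero step that the paper's appeal to continuous mapping leaves implicit.
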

\begin{proof}
    By the proof of Theorem~\ref{thm:LocalPower}, $A_n\in\mathcal{B}_n$ if and only if $\xi\in[L_n,U_n]$, where $[L_n,U_n]$ has centre $n^{\rho}\hat B + O_p(n^{\rho-\epsilon})$ and half-width $n^{\rho}\bar B + z_{1-\alpha/2}\,n^{\rho}se(\hat\beta)$. With $\delta=\rho$, $n^{\rho}\bar B = n^{\delta}\bar B\xrightarrow{p}\bar b$ and $n^{\rho}\hat B = c\,n^{\delta}\bar B + o_p(n^{\rho-\epsilon})\xrightarrow{p}c\bar b$.
 
    (i) For $\rho<\epsilon$, $n^{\rho}se(\hat\beta)=\Theta_p(n^{\rho-\epsilon})=o_p(1)$ and the $O_p(n^{\rho-\epsilon})$ term vanishes, so $L_n\xrightarrow{p}(c-1)\bar b$ and $U_n\xrightarrow{p}(c+1)\bar b$. For fixed $\xi$ away from the endpoints, $\Pr\{\xi\in[L_n,U_n]\}\to\mathbbm{1}\{\xi\in[(c-1)\bar b,(c+1)\bar b]\}$, which gives the stated rejection probability.
 
    (ii) For $\rho=\epsilon$, ${n}^\epsilon(\hat\beta-A_n)={n}^\epsilon\hat B-\xi+{n}^\epsilon(\hat\beta-\beta_0-\hat B)\xrightarrow{d}c\bar b-\xi+G$, while the half-width ${n}^\epsilon\bar B + z_{1-\alpha/2}{n}^\epsilon se(\hat\beta)\xrightarrow{p}\bar b+z_{1-\alpha/2}\sigma$. Acceptance is the event $|{n}^\epsilon(\hat\beta-A_n)|\le$ half-width, and the continuous mapping theorem yields the stated limit.
    \hfill
\end{proof}
 
Theorem~\ref{thm:nondeg} displays the cost of conservatism.
When $\rho=\delta$ the bound neither washes out the alternative ($\delta<\rho$) nor becomes negligible ($\delta>\rho$), and contributes a constant $2\bar b$ to width. 
When $\rho<\epsilon$, the sampling error is of smaller order, so the limiting power is a step function.
Alternatives in $[(c-1)\bar b,(c+1)\bar b]$ are never rejected, while every alternative outside is rejected wpa1.
This window has half-width $\bar b$ and depends on the realised bias $c$, which is untestable;
its union over $c\in[-1,1]$ is $[-2\bar b,2\bar b]$, the worst-case dead zone of
Section~\ref{sect:minimaxpowerLossHonest}.

Conservatism thus leads to several scenarios. 
The dead zone has half-width exactly twice the limiting bound, against which the test has no
power; setting $\bar b=0$ recovers the oracle that rejects every $\xi\neq0$. At the rate $\rho=\epsilon$ the sampling error survives, and the power curve is smooth.
It is the usual Gaussian power function, with the rejection threshold inflated from $z_{1-\alpha/2}\sigma$ to $\bar b+z_{1-\alpha/2}\sigma$ and recentred by the bias $c\bar b$. The added $\bar b$ is the exact, finite power cost of bias-aware conservatism.

\begin{remark}[The cost in required sample size]\label{rem:pitman}
The dead zone can be restated in terms of required sample size, the comparison underlying
Pitman relative efficiency
\citep[Ch.~14]{vandervaart1998asymptotic}: the number of observations needed to reach a
prescribed level and power. Fix a level $\alpha$, a target power $\gamma$, an effect
$\Delta>0$ and a fixed bias bound $\bar B$ (the $\delta=\rho=0$ case), with
$se(\hat\beta)=\sigma n^{-\epsilon}$ and $z_\gamma=\Phi^{-1}(\gamma)$. Setting the one-sided
honest worst-case power $\Phi\big(n^{\epsilon}(\Delta-2\bar B)/\sigma-z_{1-\alpha}\big)$ equal
to $\gamma$ gives the minimal sample sizes
\begin{align*}
n_{\mathrm{honest}}\approx\Big[\frac{(z_{1-\alpha}+z_{\gamma})\,\sigma}{\Delta-2\bar B}\Big]^{1/\epsilon},
\qquad
n_{\mathrm{oracle}}\approx\Big[\frac{(z_{1-\alpha}+z_{\gamma})\,\sigma}{\Delta}\Big]^{1/\epsilon},
\qquad
\frac{n_{\mathrm{honest}}}{n_{\mathrm{oracle}}}\;\longrightarrow\;
\Big(\frac{\Delta}{\Delta-2\bar B}\Big)^{1/\epsilon},
\end{align*}
the square of $\Delta/(\Delta-2\bar B)$ at the parametric rate $\epsilon=1/2$. The ratio
diverges as $\Delta\downarrow2\bar B$, and for $\Delta\le2\bar B$ no sample size delivers
power $\gamma>\alpha$: the Pitman efficiency of honest inference relative to the oracle is
zero inside the dead zone. Unlike the classical comparison
\citep[Thm.~14.19]{vandervaart1998asymptotic}, where the sample-size ratio is a single number
independent of $(\alpha,\gamma)$ and the alternative, here it depends on $\Delta$, because
honesty translates the power curve rather than rescaling its slope, and that dependence is
the substantive finding. The burden of proof grows without bound as the effect approaches the edge of
the dead zone.
\end{remark}

\subsection{A minimax lower bound for honest inference}\label{sect:minimaxpowerLossHonest}

The results above concern a single procedure, i.e. the bias-aware interval
$\mathcal{B}_n$ and the test it induces. 
It is shown here that every test that
is honest faces the same three-regime power behaviour, so the loss
documented in Section~\ref{sect:powerLossBACI} is a property of the inference problem rather than of only the
bias-aware confidence interval widening. The matched-rate ceiling takes the form
$\alpha+\tau$, with $\tau$ the detectability of the confounding defined below; in the exactly
confounded settings of Lemma~\ref{lem:exactconf}, $\tau\equiv0$ and the ceiling is the flat
dead zone itself.

Let $P_{\beta,\theta}^{(n)}$ denote the data-generating process, that is, the joint
distribution (or \emph{law}) of the data, with target coefficient
$\beta$ and untestable nuisance $\theta\in\mathcal{C}_n$, and let
$\mathcal{P}_n = \{P_{\beta,\theta}^{(n)}: \beta\in\mathbb{R},\, \theta\in\mathcal{C}_n\}$ be
the class over which coverage is required. A test is a measurable
$\phi_n\in\{0,1\}$, the zero-one rejection indicator generated by an underlying test
statistic, critical value, and rejection rule, rejecting $H_0:\beta=\beta_0$ when $\phi_n=1$.

\begin{definition}\label{def:honest}
    A test sequence $\{\phi_n\}$ is \emph{asymptotically honest at level $\alpha$} over
    $\{\mathcal{P}_n\}$ if
    \begin{align*}
        \limsup_{n\to\infty}\ \sup_{\theta\in\mathcal{C}_n}\ \Pr_{P_{\beta_0,\theta}^{(n)}}\left\{\phi_n = 1\right\}\ \leq\ \alpha.
    \end{align*}
\end{definition}

Definition~\ref{def:honest} is the hypothesis test counterpart of uniform coverage.
Recall that an interval $\mathcal{B}_n$ is honest over $\{\mathcal{P}_n\}$
at level $1-\alpha$ if it covers each process's own parameter value uniformly,
\begin{align}\label{eqn:standardHonest}
    \liminf_{n\to\infty}\ \inf_{\beta\in\mathbb{R},\ \theta\in\mathcal{C}_n}\ \Pr_{P_{\beta,\theta}^{(n)}}\left\{\beta\in\mathcal{B}_n\right\}\ \geq\ 1-\alpha.
\end{align}
Pairing $\mathcal{B}_n$ with its dual tests $\phi_n(b):=\mathbbm{1}\{b\notin\mathcal{B}_n\}$,
covering the true value is non-rejection of the true value.
For every process,
$\Pr_{P_{\beta,\theta}}\{\beta\in\mathcal{B}_n\}=1-\Pr_{P_{\beta,\theta}}\{\phi_n(\beta)=1\}$.
Taking the infimum and using $\inf(1-x)=1-\sup x$ together with
$\liminf_n(1-s_n)=1-\limsup_n s_n$, condition \eqref{eqn:standardHonest} is equivalent to uniform
asymptotic size control of the dual tests over the whole class,
\begin{align}\label{eqn:dualHonest}
    \limsup_{n\to\infty}\ \sup_{\beta\in\mathbb{R},\ \theta\in\mathcal{C}_n}\ \Pr_{P_{\beta,\theta}^{(n)}}\left\{\phi_n(\beta)=1\right\}\ \leq\ \alpha.
\end{align}
Retaining only the term $\beta=\beta_0$ in \eqref{eqn:dualHonest}, which can only lower the
supremum, yields Definition~\ref{def:honest} for the dual test
$\phi_n(\beta_0)=\mathbbm{1}\{\beta_0\notin\mathcal{B}_n\}$, and equally the size control of the
dual test $\phi_n(b)=\mathbbm{1}\{b\notin\mathcal{B}_n\}$ at any value $b$. Thus honest coverage
\eqref{eqn:standardHonest} implies Definition~\ref{def:honest}.
The proof of Theorem~\ref{thm:lowerbound} invokes this size control at the tested perturbation
$A_n$, coverage of $A_n$ under the confounding law whose parameter is $A_n$, which any honest
interval provides, so the bound applies to every honest interval.

The primitive driving the bound is that the class can confound a shift in
$\beta$ with a change in the nuisance. 
This is the information-theoretic dual of the
worst-case bias $\bar B$, where $\bar B$ measures the largest shift in $\hat\beta$ the
class can induce.
The condition below measures the largest shift in $\beta$ the class can \emph{hide in the likelihood}. 
It is a genuine assumption on the information in the class, and is not implied by Assumption~\ref{ass:AnalystEst}, which restricts only the bias of $\hat\beta$. 
Define the total variation distance, 
\begin{align*}
    TV(P, Q) : = \sup_{A\in\mathcal{F}}|P(A) - Q(A)|.
\end{align*}

\begin{definition}[Confounded pairs]\label{def:confpair}
Fix $n$. Two admissible index points $(\beta_0,\theta_0)$ and $(\beta_1,\theta_1)$, with
$\theta_0,\theta_1\in\mathcal{C}_n$ and $\beta_0\neq\beta_1$, are \emph{confounded at level}
$t\in[0,1)$ if $TV\big(P^{(n)}_{\beta_0,\theta_0},P^{(n)}_{\beta_1,\theta_1}\big)\le t$, and
\emph{exactly confounded} if $t=0$: the two pairs generate the same law.
\end{definition}

Note that the map
$(\beta,\theta)\mapsto P^{(n)}_{\beta,\theta}$ need not be one-to-one. Confounding is precisely the
failure of injectivity in the target coordinate: two pairs, different targets, one law
(exactly, or up to a total-variation residual). The identifying content of the class is which
target gaps admit such pairs, and the next assumption posits them at every gap up to the
reach.

\begin{assumption}[Least-favourable confounding]\label{ass:confounding}
    There is a function $\tau:[0,1]\to[0,1-\alpha)$, continuous and nondecreasing with
    $\tau(0)=0$, such that for every $n$ and every $\Delta\in[0,2\bar B]$ there exist
    $\theta_0,\theta_1\in\mathcal{C}_n$ with $P_{\beta_0,\theta_0}^{(n)},
    P_{\beta_0+\Delta,\theta_1}^{(n)}\in\mathcal{P}_n$ and
    \begin{align*}
        TV\left(\,P_{\beta_0,\theta_0}^{(n)},P_{\beta_0+\Delta,\theta_1}^{(n)}\right)\ \leq\ \tau\!\left(\Delta/2\bar B\right) + o(1),
    \end{align*}
    where the $o(1)$ term is uniform in $\Delta$.
\end{assumption}

In the language of Definition~\ref{def:confpair}: at every target gap $\Delta\le2\bar B$
the class contains a pair confounded at level $\tau(\Delta/2\bar B)+o(1)$.
Assumption~\ref{ass:confounding} thus says that a coefficient shift of up to \emph{twice} the
worst-case bias can be absorbed by the class, with detectability governed by the fraction
$\Delta/2\bar B$ of that reach used. The factor of two is Figure~\ref{fig:deadzone}(b): the
least-favourable configuration, the binding case of the uniform coverage promise, with the
nuisance chosen separately in each world,
spends the bias in both distributions, pushing the observable centre of the true distribution
up by $\bar B$ and that of the confounding distribution down by $\bar B$, so the two approach
each other from both sides. The observable centres are $\beta_0+m_0$ and $\beta_0+\Delta+m_1$ with
$|m_0|,|m_1|\le\bar B$, and they coincide only if
\begin{align*}
    \Delta\ =\ m_0-m_1\ \le\ 2\bar B,
\end{align*}
with equality at $m_0=+\bar B$, $m_1=-\bar B$.
Beyond $2\bar B$ the centres cannot meet, and the
separation between the laws is forced positive.
A shift that is a vanishing fraction of $2\bar B$ is
asymptotically undetectable ($\tau(0)=0$), while a shift equal to the full reach remains
at most borderline detectable, $\tau(1)<1-\alpha$, so that even a full-reach shift leaves
the power of any honest test bounded away from one (in the exactly confounded settings of
Lemma~\ref{lem:exactconf}, $\tau\equiv0$ and shifts up to the full reach are completely
undetectable, the total variation being zero at every $n$). 

The restriction to $[0,1-\alpha)$ is
costless in the regular case: Section~\ref{sect:primitives} shows that when the confounding
is measured through the modulus, the Neyman--Pearson analogue of this bound has
$\tau(u)=\Phi(s(u)-z_{1-\alpha})-\alpha<1-\alpha$ because $\Phi<1$. The calibration in terms of the ratio
$\Delta/2\bar B$, rather than $\Delta n^{\epsilon}$, is exactly the statement that the
worst-case bias corresponds to $O(1)$ units of statistical separation, i.e. the
modulus-of-continuity relation $\bar B \asymp \omega(n^{-1/2})$ of
\citet{donoho1994statistical}. Here $\omega(\varepsilon)$, the modulus of continuity, is the
largest amount by which the target can differ between two members of the class whose data
distributions are at statistical distance $\varepsilon$: the maximal target movement per unit
of detectability.
The function $\tau$ bundles the shape of the modulus with
the two-point affinity and need not be specified beyond its qualitative properties for
the rate conclusions.

The results below come in two tiers. The general
tier leaves $\tau$ free: bounds read $\alpha+\tau(\cdot)$ and cover approximate confounding.
The exact tier sets $\tau\equiv0$: in plausibly-exogenous IV and honest DiD
(Lemma~\ref{lem:exactconf}) the confounded laws coincide at every $n$, every $\alpha+\tau$
bound collapses to $\alpha$, and the dead zone is exactly flat. Regression discontinuity sits
in between: exactly confounded through the committed estimator (the scope of
Assumption~\ref{ass:LE} and Proposition~\ref{prop:sharp}) but only approximately confounded
over the full class, where the frontier is the smoothly rising one of
Section~\ref{sect:primitives}; the weak-factor model claims only $\tau(0)=0$. When a statement
below is carried to a specific application, substitute that application's $\tau$.

\begin{theorem}[Honest power is bounded]\label{thm:lowerbound}
    Make Assumptions~\ref{ass:BiasBound}--\ref{ass:confounding}; use
    $n^{\delta}\bar B\to\bar b\in(0,\infty)$; use Definition~\ref{def:LocalAlt}. Let
    $\mathcal{B}_n$ be any honest interval \eqref{eqn:standardHonest}, with the data generated at
    the truth $\beta_0$, and let its dual test reject (\emph{exclude}) the local alternative
    $A_n=\beta_0+\xi\,n^{-\rho}$ when $A_n\notin\mathcal{B}_n$. Fix $\xi\neq 0$.
    \begin{enumerate}[(i)]
    \item If $\delta<\rho$, there is a nuisance $\theta_0\in\mathcal{C}_n$ such that, under the
    true law $P_{\beta_0,\theta_0}^{(n)}$,
    \begin{align*}
        \limsup_{n\to\infty}\ \Pr\left\{A_n\notin\mathcal{B}_n\right\}\ \leq\ \alpha.
    \end{align*}
    No honest interval excludes $n^{-\rho}$-perturbations of the truth with more than its size.
    \item If $\delta=\rho$ and $|\xi|< 2\bar b$, the same construction yields $\theta_0$ with
    \begin{align*}
        \limsup_{n\to\infty}\ \Pr\left\{A_n\notin\mathcal{B}_n\right\}\ \leq\ \alpha + \tau\!\left(|\xi|/2\bar b\right)\ <\ 1.
    \end{align*}
    The power to exclude a perturbation of the truth is bounded away from one throughout the
    window $|\xi|<2\bar b$; it exceeds $\alpha$ only where $\tau(|\xi|/2\bar b)>0$, and under
    the exact confounding of Lemma~\ref{lem:exactconf} the ceiling is $\alpha$ itself
    (Corollary~\ref{cor:flatdz}).
    \end{enumerate}
\end{theorem}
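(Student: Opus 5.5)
The argument is a standard two-point (Le Cam) bound. It combines the honest coverage guarantee \eqref{eqn:standardHonest}, applied at the perturbed parameter $A_n$, with the least-favourable confounding of Assumption~\ref{ass:confounding}.

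Fix $\xi\neq0$ and set $\Delta_n:=|\xi|n^{-\rho}$; take $\xi>0$ for definiteness, the negative case being symmetric by relabelling which law carries the larger target. The first step checks that $\Delta_n$ lies in the confounding range $[0,2\bar B]$ and computes the ratio $u_n:=\Delta_n/2\bar B$. Writing $\bar B=n^{-\delta}(n^{\delta}\bar B)$ with $n^\delta\bar B\to\bar b$ gives
\[
u_n=\frac{|\xi|\,n^{\delta-\rho}}{2\,n^{\delta}\bar B}.
\]
In regime (i), $\delta<\rho$, so $u_n\to0$ and $\Delta_n\le 2\bar B$ for all large $n$. In regime (ii), $\delta=\rho$, so $u_n\to|\xi|/2\bar b<1$, and again $\Delta_n<2\bar B$ eventually. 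For each such $n$, Assumption~\ref{ass:confounding} supplies $\theta_0,\theta_1\in\mathcal{C}_n$ with
\[
TV\big(P^{(n)}_{\beta_0,\theta_0},P^{(n)}_{A_n,\theta_1}\big)\le\tau(u_n)+o(1).
\]
This $\theta_0$ is the nuisance named in the statement; for the finitely many small $n$ it may be chosen arbitrarily, since they do not affect the $\limsup$.

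The second step transfers the exclusion probability across the two laws. The event $E_n=\{A_n\notin\mathcal B_n\}$ is measurable in the data, so by the definition of total variation
\[
\Pr_{P_{\beta_0,\theta_0}}(E_n)\le \Pr_{P_{A_n,\theta_1}}(E_n)+TV\big(P^{(n)}_{\beta_0,\theta_0},P^{(n)}_{A_n,\theta_1}\big).
\]
Under $P_{A_n,\theta_1}$ the true parameter is $A_n$, so $E_n$ is non-coverage of the truth. By \eqref{eqn:dualHonest}, the dual form of honesty that keeps the supremum over all $\beta$ and not just $\beta_0$, this gives $\Pr_{P_{A_n,\theta_1}}(E_n)\le\sup_{\beta,\theta}\Pr_{P_{\beta,\theta}}\{\beta\notin\mathcal B_n\}$, whose $\limsup$ is at most $\alpha$. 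The uniformity over $\beta$ is essential here, because $A_n$ moves with $n$.

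The third step takes limits. Combining the two bounds,
\[
\limsup_n\Pr_{P_{\beta_0,\theta_0}}(E_n)\le\alpha+\limsup_n\tau(u_n).
\]
Since $\tau$ is continuous, $\tau(u_n)\to\tau(0)=0$ in case (i) and $\tau(u_n)\to\tau(|\xi|/2\bar b)$ in case (ii). The bound in (ii) is strictly below one because $\tau<1-\alpha$.

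The main point of care is the first step: the confounding pair has to be produced at each finite $n$ with the $o(1)$ term uniform in $\Delta$. This is what allows the moving gap $\Delta_n$ to be plugged in. One must also confirm that $\Delta_n$ stays strictly inside the reach $2\bar B$, which needs $|\xi|<2\bar b$ in case (ii); at $|\xi|=2\bar b$ the ratio $u_n$ could approach $1$ from above and leave the assumption's domain.
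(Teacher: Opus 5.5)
Your proposal is correct and follows the paper's proof essentially step for step. It uses the same two-point total-variation inequality, invokes honesty at the moving parameter $A_n$ of the confounding law supplied by Assumption~\ref{ass:confounding}, and passes to the limit by continuity of $\tau$ at $0$ in case (i) and at $|\xi|/2\bar b$ in case (ii). Like the paper, you dispose of $\xi<0$ by symmetry, which tacitly uses a mirror-image version of Assumption~\ref{ass:confounding} with targets $\beta_0$ and $\beta_0-\Delta$, since the assumption as written only supplies upward gaps.
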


\begin{proof}
    Fix $\xi$ and set $\Delta_n=|\xi|\,n^{-\rho}$; the argument is symmetric in the sign
    of $\xi$. For any two probability measures $P_0,P_1$ and any test $\phi$,
    \begin{align*}
        \mathrm{E}_{P_0}\phi - \mathrm{E}_{P_1}\phi
        = \int \phi\,(dP_0 - dP_1)
        \leq \int_{dP_0>dP_1}(dP_0-dP_1)
        = TV(P_0, P_1),
    \end{align*}
    since $0\leq\phi\leq1$. In both cases below $\Delta_n\leq2\bar B$ for $n$ large, so
    Assumption~\ref{ass:confounding} supplies $\theta_0,\theta_1$ with
    $P_{0,n}:=P_{\beta_0,\theta_0}^{(n)}$ the true law ($\beta=\beta_0$) and
    $P_{1,n}:=P_{\beta_0+\Delta_n,\theta_1}^{(n)}\in\mathcal{P}_n$ a confounding law whose own
    parameter is $A_n=\beta_0+\Delta_n$. Apply the displayed inequality to the dual test
    $\phi=\mathbbm{1}\{A_n\notin\mathcal{B}_n\}$ with $P_0=P_{0,n}$ and $P_1=P_{1,n}$. Honesty of
    $\mathcal{B}_n$ at $A_n$, the parameter of $P_{1,n}$, controls the size of this test,
    $\mathrm{E}_{P_{1,n}}\phi\leq\alpha+o(1)$, so
    \begin{align*}
        \mathrm{E}_{P_{0,n}}\{A_n\notin\mathcal{B}_n\}
        \ \leq\ \alpha + TV\left(P_{0,n}, P_{1,n}\right) + o(1)
        \ \leq\ \alpha + \tau\!\left(\Delta_n/2\bar B\right) + o(1).
    \end{align*}
    Under $P_{0,n}$ the true value is $\beta_0$. For (i), $\delta<\rho$ gives
    $\Delta_n/2\bar B=\Theta(n^{\delta-\rho})\to0$, and continuity with $\tau(0)=0$ yields the bound
    $\alpha$. For (ii), $\delta=\rho$ gives $\Delta_n/2\bar B\to|\xi|/2\bar b$, and continuity yields
    $\alpha+\tau(|\xi|/2\bar b)$, which is below one because $\tau(|\xi|/2\bar b)\le\tau(1)<1-\alpha$.
    Taking $\theta_0$ as above completes the proof.
    \hfill
\end{proof}

In the two leading location-type settings the confounding is not merely bounded but
\emph{exact}: the class contains observationally identical laws whose targets differ by up to
twice the bias bound, so Assumption~\ref{ass:confounding} holds with $\tau\equiv0$ and the
dead zone follows with no regularity conditions at all.

\begin{lemma}[Exact confounding]\label{lem:exactconf}
Assumption~\ref{ass:confounding} holds with $\tau\equiv0$ on $[0,1]$, the total variation
distance being exactly zero at every $n$, in:
\begin{enumerate}[(i)]
\item \emph{Plausibly-exogenous IV} (Example~\ref{ex:PlausExog}), with $\bar B=\bar\gamma/\pi$:
for $\Delta\in[0,2\bar\gamma/\pi]$ the laws $(\beta_0,\ \gamma=\bar\gamma,\ \varepsilon)$ and
$(\beta_0+\Delta,\ \gamma=\bar\gamma-\Delta\pi,\ \varepsilon'=\varepsilon-\Delta v)$ induce the
same joint law of $(y,x,z)$ and $|\bar\gamma-\Delta\pi|\le\bar\gamma$ exactly when $\Delta\le2\bar\gamma/\pi$. The redefinition of the structural error is admissible because
the class leaves the dependence between the structural and first-stage errors unrestricted;
\item \emph{Honest DiD over $\Delta^{SD}(M)$} (Example~\ref{ex:DiD}), with $\bar B=M$: the
post-period violation $\delta_1$ consistent with flat pre-periods ranges over $[-M,M]$, so the
identified set for $\beta$ has diameter $2M$. For $\Delta\in[0,2M]$ the laws with trend paths
$(0,\dots,0,\ \delta_1=M)$ at target $\beta_0$ and $(0,\dots,0,\ \delta_1=M-\Delta)$ at target
$\beta_0+\Delta$ have identical event-study means, both paths lie in $\Delta^{SD}(M)$ exactly
when $\Delta\le2M$, and the pre-period coefficients are zero under both. (Over the full class
the raw post coefficient's bias is unbounded since linear trends have zero second differences,
which is why the honest procedure extrapolates; the extrapolation estimator's worst-case bias
over the whole class is exactly $M$, half the identified-set diameter.)
\end{enumerate}
\end{lemma}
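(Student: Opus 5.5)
The plan is to verify Assumption~\ref{ass:confounding} directly, with $\tau\equiv0$. For each $\Delta\in[0,2\bar B]$ I will exhibit two admissible index points whose data laws are literally identical at every $n$. Then the total variation is zero, the $o(1)$ term is zero (so trivially uniform in $\Delta$), and $\tau\equiv0$ is continuous, nondecreasing, satisfies $\tau(0)=0$, and takes values in $[0,1-\alpha)$. In both cases the work is to write down the reparametrisation, check that the implied joint law of the observables is unchanged, and check that both nuisances lie in $\mathcal{C}_n$ exactly when $\Delta\le2\bar B$.

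\emph{IV, part (i).} Take the class of Example~\ref{ex:PlausExog}, with nuisance $\theta=(\gamma,\text{law of }(\varepsilon,v,z))$ and $|\gamma|\le\bar\gamma$. The dependence between $\varepsilon$ and $v$ is left free, while $z$ stays exogenous with respect to $v$ and the structural error.

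1. Substitute the first stage into the outcome. Under the first world, $y=\beta_0x+\bar\gamma z+\varepsilon$. Since $x=\pi z+v$, writing $\beta_0x=(\beta_0+\Delta)x-\Delta\pi z-\Delta v$ gives
\begin{align*}
y=(\beta_0+\Delta)x+(\bar\gamma-\Delta\pi)z+(\varepsilon-\Delta v).
\end{align*}
The triple $(y,x,z)$ is therefore the same random vector under both parametrisations, so its law is identical.

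2. Check admissibility. The new error $\varepsilon'=\varepsilon-\Delta v$ is a measurable function of $(\varepsilon,v)$. It is uncorrelated with $z$ (or independent of it) whenever $\varepsilon$ and $v$ are. Its correlation with $v$ changes, which the class permits.

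3. Check the bound. We need $|\bar\gamma-\Delta\pi|\le\bar\gamma$, which for $\pi>0$ and $\Delta\ge0$ is $0\le\Delta\pi\le2\bar\gamma$, i.e.\ $\Delta\le2\bar\gamma/\pi=2\bar B$. For $\pi<0$ the same argument goes through with signs flipped and $\bar B=\bar\gamma/|\pi|$.

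\emph{Honest DiD, part (ii).} Work with the event-study mean vector, which is the only part of the observable law that depends on $(\beta,\delta)$. The post-period coefficient has mean $\beta+\delta_1$, the pre-period coefficients have means equal to the pre-period violations, and the noise law is fixed and independent of $(\beta,\delta)$.

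1. Take the two paths stated in the lemma. The first has target $\beta_0$ and trend $(0,\dots,0,M)$; the second has target $\beta_0+\Delta$ and trend $(0,\dots,0,M-\Delta)$. Both give post mean $\beta_0+M$ and zero pre means, so the laws coincide.

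2. Check membership in $\Delta^{SD}(M)$. With the normalisation $\delta_0=0$ and flat pre-periods, the only nonzero second difference of $(0,\dots,0,\delta_1)$ is $\delta_1$ itself. The condition $|\delta_1|\le M$ holds for $M$ and for $M-\Delta$ exactly when $\Delta\le2M$.

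3. Confirm $\bar B=M$ for the extrapolation estimator. With flat pre-periods the extrapolated pre-trend is zero, so the bias equals $\delta_1\in[-M,M]$. Hence the confounding reach $2\bar B$ matches $2M$.

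The main obstacle is the admissibility bookkeeping rather than any computation: we must make sure the reparametrised nuisance genuinely lies in the paper's class. For IV, this means the redefined error must not violate any restriction the class imposes, such as instrument exogeneity with respect to $\varepsilon'$. I would handle this by stating the class explicitly as leaving the joint law of $(\varepsilon,v)$ unrestricted apart from its relation to $z$. For DiD, it means fixing the indexing and normalisation of the second-difference constraint at the treatment boundary, so that $\delta_1$ is the only active second difference. Once both classes are pinned down, equality of laws is immediate.
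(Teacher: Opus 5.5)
Your proposal is correct and follows the paper's argument. The paper's proof is one line: the displayed reparametrisations leave the law of the data unchanged, so $TV=0$ and $\tau\equiv0$, while both nuisances stay admissible exactly when $\Delta\le2\bar B$. You write out the substitution and the membership checks that the paper calls immediate.

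One caveat concerns your DiD step 3. The flat-pre-period slice only shows $\bar B\ge M$, but Assumption~\ref{ass:confounding} with $\tau\equiv0$ needs the other direction, $\bar B\le M$, so that your exactly confounded pairs cover all of $[0,2\bar B]$. That direction holds for the estimator that continues the last pre-period slope, whose bias $\delta_1+\delta_{-1}$ is exactly the second difference at $t=0$. Since the lemma stipulates $\bar B=M$, this does not break your proof.
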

\begin{proof}
Immediate from the displayed constructions: in each case the two laws induce the same
distribution of the data, so $TV=0$, while the targets differ by $\Delta$.
\hfill
\end{proof}

\begin{corollary}[Flat dead zone]\label{cor:flatdz}
In the settings of Lemma~\ref{lem:exactconf}, the two-point inequality in the proof of
Theorem~\ref{thm:lowerbound}, with the total variation identically zero, gives for every
honest interval and every sequence of alternatives with $|\Delta_n|\le2\bar B$ eventually,
$\limsup_n\Pr\{A_n\notin\mathcal{B}_n\}\le\alpha$. At the matched rate the minimax dead zone
thus has half-width exactly $2\bar b$ (for $\rho<\epsilon$ the bias-aware interval itself
excludes every $|\xi|>2\bar b$ wpa1 by Theorem~\ref{thm:nondeg}(i); at $\rho=\epsilon$
exclusion outside the zone is the Gaussian probability of Theorem~\ref{thm:nondeg}(ii),
above $\alpha$ but below one); at fixed bounds every local alternative
lies inside eventually. The conclusion requires no limit experiment and no regularity beyond
the class itself.
\end{corollary}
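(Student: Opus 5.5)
The plan is to rerun the two-point argument from the proof of Theorem~\ref{thm:lowerbound}, now with the total variation term identically zero, so that no limiting operation on $\tau$ is needed. Fix any honest interval $\mathcal B_n$ and any sequence of alternatives $A_n=\beta_0+\Delta_n$ with $|\Delta_n|\le 2\bar B$ for all $n\ge n_0$. By symmetry I will take $\Delta_n\ge0$; for $\Delta_n<0$ the constructions of Lemma~\ref{lem:exactconf} are run with the bias signs flipped ($\gamma=-\bar\gamma$, respectively $\delta_1=-M$).

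For each $n\ge n_0$, Lemma~\ref{lem:exactconf} supplies nuisances $\theta_0,\theta_1\in\mathcal C_n$ with $P_{0,n}=P^{(n)}_{\beta_0,\theta_0}$ the true law and $P_{1,n}=P^{(n)}_{A_n,\theta_1}\in\mathcal P_n$ the confounding law, and with $TV(P_{0,n},P_{1,n})=0$. Apply the inequality $\mathrm E_{P_0}\phi-\mathrm E_{P_1}\phi\le TV(P_0,P_1)$ to the dual test $\phi_n=\mathbbm 1\{A_n\notin\mathcal B_n\}$. The parameter of $P_{1,n}$ is $A_n$, so honesty in the dual form \eqref{eqn:dualHonest} gives $\mathrm E_{P_{1,n}}\phi_n\le\alpha+o(1)$, and this bound is uniform over $(\beta,\theta)$. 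Hence
\begin{align*}
\Pr_{P_{0,n}}\{A_n\notin\mathcal B_n\}\le\alpha+o(1),
\end{align*}
and taking $\limsup$ gives the first claim. One point needs care: the $o(1)$ must not depend on the sequence $(\theta_{1},A_n)$. It does not, because \eqref{eqn:dualHonest} takes a supremum over the whole class before the $\limsup$.

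For the matched-rate consequences, set $\Delta_n=\xi n^{-\rho}$ with $\delta=\rho$, so that $n^\rho\cdot 2\bar B\to2\bar b$.

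\begin{itemize}
\item \emph{Inside the zone.} If $|\xi|<2\bar b$, then $|\Delta_n|\le2\bar B$ eventually and the ceiling $\alpha$ applies.
\item \emph{At the boundary.} The endpoint $|\xi|=2\bar b$ can be handled only when $|\Delta_n|\le 2\bar B$ holds eventually; otherwise it is excluded.
\item \emph{Outside the zone, $\rho<\epsilon$.} For $|\xi|>2\bar b$, Theorem~\ref{thm:nondeg}(i) with $c\in[-1,1]$ gives exclusion wpa1 for the bias-aware interval whenever $\xi\notin[(c-1)\bar b,(c+1)\bar b]$. Since this window always lies within $[-2\bar b,2\bar b]$, exclusion holds for every $c$. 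So the zone cannot be wider than $2\bar b$, and its half-width is exactly $2\bar b$.
\item \emph{Outside the zone, $\rho=\epsilon$.} Part (ii) of Theorem~\ref{thm:nondeg} gives the Gaussian exclusion probability. Taking the worst case $c=\operatorname{sign}(\xi)$, this probability exceeds $\alpha$ because the recentred mean $|\xi|-\bar b$ exceeds $\bar b$, and it is below one because of Gaussian tails.
\end{itemize}

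For fixed bounds ($\delta=0<\rho$), $2\bar B$ is a positive constant while $\Delta_n\to0$, so every local alternative eventually satisfies $|\Delta_n|\le2\bar B$.

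The main obstacle is interpretive rather than technical: applying honesty at the point $A_n$ under the law $P_{1,n}$, whose nuisance changes with $n$. I would argue this explicitly from \eqref{eqn:dualHonest}, noting that the lemma only requires $\theta_1\in\mathcal C_n$ for each $n$. No limit experiment is used, which justifies the closing remark that nothing beyond the class itself is required.
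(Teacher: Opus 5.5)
Your argument for the main inequality is the same as the paper's. You take the two-point bound from the proof of Theorem~\ref{thm:lowerbound} with $TV=0$ supplied by Lemma~\ref{lem:exactconf}, and invoke honesty at $A_n$ under the confounding law through the uniform form \eqref{eqn:dualHonest}. The inside-zone, $\rho<\epsilon$ and fixed-bound consequences also go through. One small point on the $\rho<\epsilon$ bullet: ``exclusion holds for every $c$'' is pointwise in $c$. To reach the $\inf_\theta$ of Definition~\ref{def:deadzone}, use that in these location settings the exclusion probability is monotone in the bias, so the infimum sits at $c=\operatorname{sign}(\xi)$. You already make this reduction in the next bullet.

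The step that fails is the $\rho=\epsilon$ bullet. Minimising the limit in Theorem~\ref{thm:nondeg}(ii) over $c$ gives, for $\xi>0$, the worst-case exclusion probability
\begin{align*}
\Phi\Big(\frac{\xi-2\bar b}{\sigma}-z_{1-\alpha/2}\Big)+\Phi\Big(-\frac{\xi}{\sigma}-z_{1-\alpha/2}\Big).
\end{align*}
At $\xi=2\bar b$ this equals $\alpha/2+\Phi(-2\bar b/\sigma-z_{1-\alpha/2})<\alpha$. By continuity it stays below $\alpha$ until roughly $\xi=2\bar b+\sigma(z_{1-\alpha/2}-z_{1-\alpha})$; for example, with $\alpha=0.05$, $\sigma=\bar b=1$ and $\xi=2.1$ it is about $0.031$.

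The fact that the recentred mean $|\xi|-\bar b$ exceeds $\bar b$ only says that the alternative's least-favourable mean has left the set of admissible null means. The interval $\hat\beta\pm(\bar B+z_{1-\alpha/2}se(\hat\beta))$ is conservative, with worst-case size strictly below $\alpha$, so leaving that set does not push exclusion above $\alpha$. For this interval the dead zone at $\rho=\epsilon$ is therefore strictly wider than $2\bar b$; this is the critical-value gap noted after Proposition~\ref{prop:sharp}. Your bullet does not deliver the exact half-width $2\bar b$.

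To close it, use an honest interval that is exactly sized at the least-favourable bias, such as $\hat\beta\pm\mathrm{cv}_\alpha\big(\bar B/se(\hat\beta)\big)\,se(\hat\beta)$, where $\mathrm{cv}_\alpha(t)$ is the $1-\alpha$ quantile of $|N(t,1)|$. This interval is honest by monotonicity of $\Pr\{|N(\mu,1)|>k\}$ in $|\mu|$. Its worst-case exclusion probability is $\Pr\{|N((|\xi|-\bar b)/\sigma,1)|>\mathrm{cv}_\alpha(\bar b/\sigma)\}$. This equals $\alpha$ exactly at $|\xi|=2\bar b$, is strictly increasing in $|\xi|$ beyond that point, and is below one.

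Alternatively, keep the $z_{1-\alpha/2}$ interval and assert ``above $\alpha$'' only outside the wider band. The corollary's $\rho=\epsilon$ parenthetical needs one of these qualifications to be true as stated.
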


The dead zone in these settings is thus a statement about observational equivalence, not about
asymptotic approximation: within $2\bar b$ of the truth there are data-generating processes
the data literally cannot tell apart. Equivalently, $2\bar B$ is the diameter of the local identified set.

The bound in part (ii) of Theorem~\ref{thm:lowerbound} is sharp under regularity, and the sharp
value is an explicit Gaussian power function with a power ``dead zone''. The regularity is
Assumption~\ref{ass:LE} below, the exact-confounding limit; under approximate confounding the
sharp frontier is instead the smoothly rising one of Proposition~\ref{prop:confprim}. Consider the
noise-surviving matched rate $\delta=\rho=\epsilon$ and localise $\beta=\beta_0+h\,n^{-\epsilon}$
where $h\in\mathbb{R}$ is the local coefficient of Definition~\ref{def:LocalAlt} treated as a
free coordinate that indexes the experiment.
The perturbation of interest is the point
$h=\xi$.

\begin{assumption}[Gaussian least-favourable limit]\label{ass:LE}
Let $\delta = \epsilon$. 
There is a nuisance path $g\mapsto\theta_n(g)\in\mathcal{C}_n$ and a scale
$\sigma\in(0,\infty)$ such that the localised experiments
$\{P^{(n)}_{\beta_0+h\,n^{-\epsilon},\theta_n(g)}\}$ converge, in the sense of
\citet{lecam1972limits} \citep[see also][Chs.~9 and~15]{vandervaart1998asymptotic}, to the bounded-normal-mean experiment of observing
$X\sim N(h+m,\sigma^2)$, the nuisance contributing a free location shift
$m\in[-\bar b,\bar b]$ with $\bar b=\lim_n n^{\delta}\bar B$; and $\hat\beta$ is
asymptotically sufficient, $n^{\epsilon}(\hat\beta-\beta_0)\xrightarrow{d}X$.
\end{assumption}

Assumption~\ref{ass:LE} is the exact-confounding form of
Assumption~\ref{ass:confounding}.
In the limit a shift in $\beta$ of up to $\bar B$ is
matched by a nuisance shift, so the data cannot separate the two, and honesty reduces to
size control over the contaminated null mean,
$\sup_{|\nu|\le\bar b}\Pr_{N(\nu,\sigma^2)}\{\phi=1\}\le\alpha$.
In the location settings of Lemma~\ref{lem:exactconf} the reduction is exact rather than
asymptotic: the coefficient vector is Gaussian, shifts in the target and in the nuisance are
exact location shifts, and the confounding laws coincide, so the localised experiments equal
the bounded-normal-mean experiment at every $n$, the estimator carries all the usable
information, and the sufficiency clause is innocuous. Assumption~\ref{ass:LE} therefore holds
in these settings whenever the bound is scaled to the matched rate (for instance
$M_n\asymp n^{-\epsilon}$ in Example~\ref{ex:DiD}). At the fixed bounds of
Examples~\ref{ex:DiD} and~\ref{ex:PlausExog} they instead sit in the regime $\delta<\rho$,
where the flat dead zone is delivered in finite samples by Corollary~\ref{cor:flatdz}, with no
limit experiment; equivalently, they are the $\bar b=\infty$ degeneration of the frontier
below, at which $\pi_\alpha\equiv\alpha$ everywhere. 

In regression discontinuity the
assumption fails for the full dataset but holds for the estimator: committing to a
local-linear estimate with a given kernel and bandwidth, that is, to one fixed set of weights
on the observations, generates an experiment that converges to the bounded normal mean, so
Proposition~\ref{prop:sharp} applies to every test built from that estimate, which is the
class in practical use. Restricted to that class, regression discontinuity is analytically
identical to the exact-confounding settings: the least-favourable pair moves the target and
the worst-case bias in exactly offsetting ways, so the distribution of the estimate is the
same in both worlds and the confounding, only approximate in the data, is exact through the
committed weights. The wider class is analysed not because an analyst estimating a jump
would use other statistics, but because a power ceiling is an envelope claim and is only as
strong as the tests it rules out. Section~\ref{sect:primitives} prices the relaxation and
finds it modest: weighted averages with weights chosen to detect a specific violation rather
than to estimate the jump retain some power where the committed class has none, but the
resulting frontier lifts off with zero slope at the truth
(Corollary~\ref{cor:rdcheap}), so the confounding remains close to exact where the dead zone
binds.

Proposition~\ref{prop:sharp} is proven using a very simple observation: if under the null the experiment admits Gaussian mean set $[-\bar b,\bar b]$, and under the alternative it admits Gaussian mean set $[\xi-\bar b,\xi+\bar b]$, then the sets overlap when $|\xi| \leq 2\bar b$, so the alternative's mean is admissible under the null and cannot be rejected with power beyond size.

\begin{proposition}[Sharp honest power]\label{prop:sharp}
Under Assumption~\ref{ass:LE}, with data at truth
$\beta_0$, every honest interval excludes the perturbation $\beta_0+\xi n^{-\epsilon}$ with
worst-case power (over the nuisance) no larger than
$\pi_\alpha(\xi):=\Phi\!\big((|\xi|-2\bar b)_+/\sigma-z_{1-\alpha}\big)$. In particular
$\pi_\alpha(\xi)=\alpha$ for every $|\xi|\le 2\bar b$: no honest interval excludes a
perturbation within a half-width $2\bar b$ of the truth.
\end{proposition}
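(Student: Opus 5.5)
The plan is to reduce to the limit experiment and then run a two-point (Neyman--Pearson) argument there.

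\textbf{Step 1: transfer to the limit.} Under Assumption~\ref{ass:LE}, the localised experiments converge to the bounded-normal-mean experiment: observe $X\sim N(h+m,\sigma^2)$ with $|m|\le\bar b$. The dual test $\phi_n=\mathbbm{1}\{A_n\notin\mathcal B_n\}$, with $A_n=\beta_0+\xi n^{-\epsilon}$, is a sequence of tests in these experiments. By Le Cam's asymptotic representation theorem \citep[Thm.~15.1]{vandervaart1998asymptotic}, any limit point of its power functions $(h,m)\mapsto\mathrm{E}_{h,m}\phi_n$ is the power function of some test $\phi$ in the limit experiment. The sufficiency clause lets us take $\phi$ to be a function of $X$ alone. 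Honesty of $\mathcal B_n$ at the tested value $A_n$ (coverage under every law whose parameter is $A_n$) passes to the limit as the size constraint
\[
\sup_{|m|\le\bar b}\mathrm{E}_{h=\xi,m}\,\phi\le\alpha .
\]
Writing $\nu=\xi+m$, this says $\mathrm{E}_{N(\nu,\sigma^2)}\phi\le\alpha$ for every $\nu\in[\xi-\bar b,\xi+\bar b]$. The data are generated at $h=0$, so power is $\mathrm{E}_{N(\mu,\sigma^2)}\phi$ with $\mu=m_0\in[-\bar b,\bar b]$. We need a bound on $\sup_{|m_0|\le\bar b}\mathrm{E}_{N(m_0,\sigma^2)}\phi$ that holds for every such $\phi$. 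This transfer is the main obstacle. Passing $\limsup$ through requires a subsequence argument: pick a subsequence attaining the $\limsup$, then apply the representation theorem along it. Uniformity over the nuisance is handled because only the finitely many local points $(h,m)\in\{(0,m_0),(\xi,m)\}$ are needed. In the exactly confounded settings of Lemma~\ref{lem:exactconf} the reduction is exact, and this step is vacuous.

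\textbf{Step 2: the limit problem.} Fix the true-law nuisance $m_0$ and pick the closest admissible null mean $\nu^\ast$ in $[\xi-\bar b,\xi+\bar b]$. Take $\xi>0$ by symmetry.
\begin{itemize}
\item If $|\xi|\le2\bar b$, choose $m_0=\bar b$. Then $\nu^\ast=\xi-\bar b\le\bar b$ can be matched exactly, since $\mu=m_0=\xi-\bar b\in[-\bar b,\bar b]$ is attainable. The true law then coincides with an admissible null law, so the power is at most $\alpha$. This is the overlap observation stated before the proposition, and in particular it shows the worst case is at most $\alpha$.
\item If $|\xi|>2\bar b$, the distributions are a single pair of Gaussians $N(\mu,\sigma^2)$ versus $N(\nu,\sigma^2)$ with separation $d=|\nu-\mu|\ge\xi-2\bar b$. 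The Neyman--Pearson lemma for that pair gives
\[
\mathrm{E}_{N(\mu,\sigma^2)}\phi\le\Phi\!\big(d/\sigma-z_{1-\alpha}\big),
\]
since any level-$\alpha$ test against $N(\nu,\sigma^2)$ has at most this power against $N(\mu,\sigma^2)$. With the least-favourable choice $m_0=\bar b$ and $\nu=\xi-\bar b$, the separation is $d=\xi-2\bar b$, which yields $\pi_\alpha(\xi)$.
\end{itemize}

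\textbf{Step 3: interpreting the bound.} The bound is over worst-case power, meaning the power under the least-favourable truth nuisance. A single two-point pair per $\xi$ therefore suffices; there is no need to bound all $m_0$ simultaneously. I would then note that the bound is attained by the one-sided bias-aware test with critical value $\bar b+z_{1-\alpha}\sigma$, which confirms sharpness. Finally, combining Steps 1 and 2 gives
\[
\limsup_n \Pr\{A_n\notin\mathcal B_n\}\le\pi_\alpha(\xi),
\]
and $\pi_\alpha(\xi)=\alpha$ on $|\xi|\le2\bar b$ follows because $(\cdot)_+=0$ there.
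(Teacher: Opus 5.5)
Your proposal follows the paper's proof: you pass to the bounded-normal-mean limit via Le Cam, then use the overlap of the truth slice $[-\bar b,\bar b]$ with the tested slice $[\xi-\bar b,\xi+\bar b]$ when $|\xi|\le 2\bar b$, and otherwise apply Neyman--Pearson to the pair $N(\bar b,\sigma^2)$ versus $N(\xi-\bar b,\sigma^2)$, with the same attaining one-sided test; your Step~1 just spells out the representation-theorem transfer that the paper leaves to a citation. One small slip: in the first bullet of Step~2 you write $m_0=\bar b$ and then $\mu=m_0=\xi-\bar b$, which conflates two different choices, but either one works, since each places the truth mean inside $[\xi-\bar b,\xi+\bar b]$ when $0\le\xi\le 2\bar b$.
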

\begin{proof}
By \citet{lecam1972limits} it suffices to argue in the limit $X\sim N(\nu,\sigma^2)$,
$\nu=h+m$, $|m|\le\bar b$ (the truth $\beta_0$ sits at $h=0$ by the centring of the localisation, a change of origin only); take $\xi\ge0$. If $\xi\le 2\bar b$, the nuisance $m'=-\bar b$
gives alternative mean $\nu_1=\xi-\bar b\in[-\bar b,\bar b]$, an admissible null mean, so
honesty forces power $\le\alpha$. If $\xi>2\bar b$, honesty gives the dual test size at most $\alpha$ at the target-$\xi$ law
with the least favourable mean $\nu_1=\xi-\bar b$, and the Neyman--Pearson lemma bounds its
rejection probability at the truth law with mean $\nu_0=\bar b$ by
$\Phi\big((\xi-2\bar b)/\sigma-z_{1-\alpha}\big)$, the two means being $(\xi-2\bar b)/\sigma$
standard deviations apart. The bound is attained by the dual test rejecting $\xi$ when
$X<\xi-\bar b-z_{1-\alpha}\sigma$: it has size at most $\alpha$ at every admissible
target-$\xi$ mean, and its rejection probability at $\nu_0=\bar b$ is exactly the frontier. The two-sided case differs by a
negligible tail \citep[Le~Cam's two-point method;][]{tsybakov2009}.
\hfill
\end{proof}

The standardised separation $(|\xi|-2\bar b)_+/\sigma$ in the frontier anticipates
Section~\ref{sect:primitives}: it is the confounding
separation of Definition~\ref{def:confsep} evaluated in the limit experiment. Exact confounding
makes it degenerate, zero throughout the window and kinked at $|\xi|=2\bar b$, the point beyond
which admissible null and alternative means can no longer meet; under approximate confounding
the same object is strictly increasing in the target gap, and the resulting frontier of
Proposition~\ref{prop:confprim} rises smoothly instead of staying flat.

In the same limit the bias-aware interval excludes the perturbation $\xi$ with worst-case
power at most $\alpha$ for $|\xi|\le2\bar b$ and, for $|\xi|>2\bar b$, with worst-case power
$\Phi\big((|\xi|-2\bar b)/\sigma-z_{1-\alpha/2}\big)$ up to a negligible tail.
This reproduces the dead zone
$|\xi|\le 2\bar b$ exactly and matches the frontier $\pi_\alpha$ up to the fixed
critical-value gap $z_{1-\alpha/2}-z_{1-\alpha}$, which the directional value
$\mathrm{cv}_\alpha$ of \citet{armstrong2018optimal} removes (there
$\mathrm{cv}_\alpha(t)-t\to z_{1-\alpha}$).

\begin{definition}[Power dead zone]\label{def:deadzone}
With the data at the truth $\beta_0$ and the bias untestable, the \emph{dead zone} of an honest
interval $\mathcal{B}_n$ is the set of local perturbations of the truth whose exclusion cannot be
guaranteed beyond size: with $A_n=\beta_0+\xi\,n^{-\rho}$,
\begin{align*}
\mathcal{D}\;=\;\Big\{\,\xi:\ \limsup_{n\to\infty}\ \inf_{\theta\in\mathcal C_n}\
\Pr_{P^{(n)}_{\beta_0,\theta}}\{A_n\notin\mathcal{B}_n\}\le\alpha\,\Big\}.
\end{align*}
Inside $\mathcal{D}$ there is an admissible bias under which the perturbation is excluded no more
often than the size of the dual test, so its exclusion cannot be promised to carry evidence.
\end{definition}

The worst case over the nuisance is the operative reading precisely because the bias is
untestable. Under any \emph{single} law with bias $c\bar b$ the undetected window is
$[(c-1)\bar b,(c+1)\bar b]$ by Theorem~\ref{thm:nondeg}(i), which has half-width $\bar b$ about the
bias-shifted centre, which is what the shaded regions of the simulations realise. The dead zone
is the union of these windows over the admissible biases,
\begin{align*}
\bigcup_{c\in[-1,1]}\big[(c-1)\bar b,\ (c+1)\bar b\big]\;=\;[-2\bar b,\ 2\bar b],
\end{align*}
the least-favourable bias being chosen per perturbation ($m_0=\operatorname{sign}(\xi)\,\bar b$):
a perturbation such as $\xi=\tfrac32\bar b$ is excluded wpa1 under a zero-bias law, yet never
excluded under any bias $c\ge\tfrac12$; since the analyst cannot know $c$, only the union can be
promised. By Proposition~\ref{prop:sharp}, at the matched rate the dead zone of every honest
interval contains the window $|\xi|\le 2\bar b$, and for the bias-aware interval with
$\rho<\epsilon$ it is exactly $[-2\bar b,2\bar b]$. In the units
of the estimand this is a half-width $2\bar B$ (as $\bar b=\lim_n n^{\delta}\bar B$), the
diameter of the local identified set.

The scope follows Lemma~\ref{lem:exactconf}: in the exactly confounded settings the statement
binds every honest procedure with no further conditions, while in regression discontinuity it
binds every test built from the reported estimator, with the frontier over all tests given by
Proposition~\ref{prop:confprim}; in the weak-factor model only the qualitative $s(0)=0$
content is claimed. Reporting this single
number alongside the interval is the disclosure recommended in Section~\ref{sect:discussion}.
Figure~\ref{fig:unionzone} makes the per-law/union distinction visible in a single plot.
Sweeping the realised bias over its admissible range produces the family of sliding windows,
whose lower envelope, the empirical counterpart of the $\inf_\theta$ in
Definition~\ref{def:deadzone}, is flat on exactly $\pm2\bar b$. In contrast, the \emph{average}
over a uniformly drawn bias first reaches one at the zone's edge: a randomly drawn
bias locates $2\bar b$, but only the worst case is flat, because the dead zone is an infimum,
not a mean.

\begin{figure}[!ht]
\centering
\includegraphics[width=0.75\textwidth]{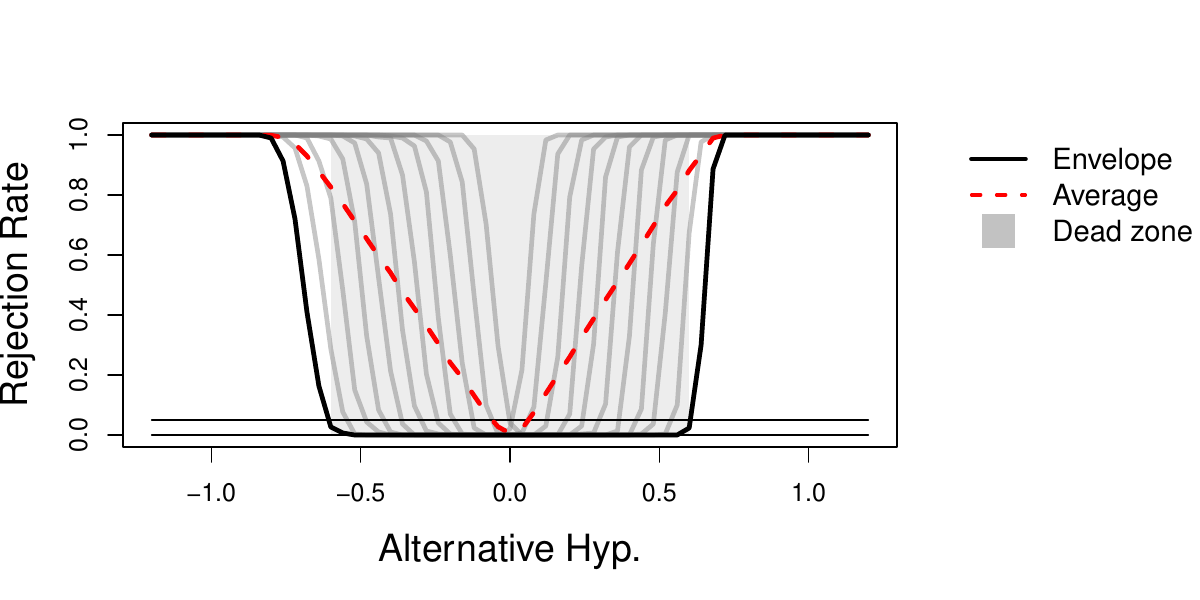}
\caption{The union dead zone realised by simulation: plausibly exogenous IV as in
Section~\ref{sect:simulations}, with $\bar\gamma=0.3$, $\pi=1$ (so $\bar b=0.3$) and $n=1000$.
Thin grey: rejection rates under
$\gamma=c\bar\gamma$, $c\in[-1,1]$ with per-law half-width $\bar b$,
sliding with the realised bias. Black: pointwise minimums over $c$, the empirical
counterpart of Definition~\ref{def:deadzone}, flat on exactly $\pm2\bar b$
and rising outside, as in Proposition~\ref{prop:sharp}. Red dashed: average over
biases which first reaches one
at the zone edge $\pm2\bar b$. Shaded region is where envelope is at most $\alpha$, spanning $[-0.6,0.6]=\pm2\bar b$.}
\label{fig:unionzone}
\end{figure}

This is the dead zone of Figure~\ref{fig:deadzone}, now made sharp.
The frontier $\pi_\alpha$, the envelope under Assumption~\ref{ass:LE} (exact confounding, and
in RD tests built from the committed estimate),
is flat at $\alpha$ throughout $|\xi|\le 2\bar b$, and its half-width is $2\bar b$ rather than
$\bar b$ precisely because the least-favourable bias shifts the centre by $\bar b$ under the null
and again under the alternative (Figure~\ref{fig:deadzone}(b)).

\begin{corollary}[Rate-optimality of the bias-aware interval]\label{cor:rateoptimal}
    Under the assumptions of Theorem~\ref{thm:lowerbound}, the bias-aware interval
    $\mathcal{B}_n$, whose dual test excludes $A_n$ when $A_n\notin\mathcal{B}_n$, attains the
    transition rate of the bound. By Theorems~\ref{thm:LocalPower}--\ref{thm:nondeg} its power to
    exclude perturbations of the truth is asymptotically zero for $\delta<\rho$ and non-degenerate
    for $\delta=\rho$; for $\delta>\rho$ with $\rho<\epsilon$, the rescaled half-width vanishes
    (immediately, $n^{\rho}\bar B=\Theta(n^{\rho-\delta})\to0$) while $|\hat B|\le\bar B$ pins the rescaled centre at the truth,
    so Lemma~\ref{lem:containment} gives exclusion of every fixed $\xi\neq0$ wpa1. Hence honest
    local power transitions at $\rho=\delta$,
    and no honest interval improves on this rate.
\end{corollary}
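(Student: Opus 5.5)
The plan is to split the claim into an upper bound, valid for every honest procedure, and an attainment statement for the bias-aware interval $\mathcal{B}_n$. Each piece will be read off from results already established, case by case in $\delta$ versus $\rho$.

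\emph{Attainment.} For $\delta<\rho$, Theorem~\ref{thm:LocalPower} (and Theorem~\ref{thm:LocalCI2} when the bound is tight) gives $\Pr_{\beta_0}\{A_n\in\mathcal B_n\}\to1$ for every fixed $\xi$, so exclusion power is zero. For $\delta=\rho$, Theorem~\ref{thm:nondeg} gives the explicit step-function or Gaussian limit. This limit is neither identically zero nor identically one in $\xi$: every $|\xi|>2\bar b$ is excluded with probability tending to one in part (i), and with probability above $\alpha$ in part (ii). For $\delta>\rho$ with $\rho<\epsilon$, I will apply Lemma~\ref{lem:containment} to $L_n,U_n$ from \eqref{eqn:containment}:
\begin{itemize}
\item $|n^\rho\hat B|\le n^\rho\bar B=\Theta(n^{\rho-\delta})\to0$;
\item $n^\rho se(\hat\beta)=\Theta_p(n^{\rho-\epsilon})=o_p(1)$;
\item the $O_p(n^{\rho-\epsilon})$ remainder vanishes.
\end{itemize}
Hence $(L_n,U_n)\xrightarrow{p}(0,0)$. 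Every $\xi\neq0$ is then a non-atom of the degenerate limit $(L,U)=(0,0)$, and the lemma gives exclusion probability $\Pr\{\xi\notin[0,0]\}=1$.

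\emph{No honest interval does better.} This rests on Theorem~\ref{thm:lowerbound}. For any honest interval and $\delta<\rho$, part (i) supplies a nuisance $\theta_0$ under which exclusion power is at most $\alpha$. So no honest procedure detects alternatives shrinking faster than $n^{-\delta}$, that is, it cannot transition at any $\rho>\delta$. At $\delta=\rho$, part (ii) caps power at $\alpha+\tau(|\xi|/2\bar b)<1$ inside $|\xi|<2\bar b$. Consequently the detection boundary of every honest test is at rate no faster than $n^{-\delta}$, and the bias-aware interval attains full power at every slower rate $\rho<\delta$ as well as non-degenerate power at $\rho=\delta$. Combining the two parts, the power transition of $\mathcal B_n$ occurs exactly at $\rho=\delta$, which is the best rate any honest interval can achieve.

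\emph{Expected obstacle.} The delicate step is matching the assumptions of the two parts. The upper bound is worst case over $\theta$, whereas the attainment results use a particular bias path ($\hat B=o_p(\bar B)$ or $c\bar B$). I would phrase rate-optimality as a statement about the rate at which power becomes nontrivial, not about the level of power. The $\delta>\rho$ attainment needs only $|\hat B|\le\bar B$, which holds uniformly over $\theta$ by definition of $\bar B$, so it covers the worst case. The $\delta<\rho$ upper bound needs only one least-favourable $\theta_0$. With this phrasing no uniformity mismatch arises. The only extra check is that $\Delta_n=|\xi|n^{-\rho}\le2\bar B$ eventually, so that Assumption~\ref{ass:confounding} applies; this holds because $\delta\le\rho$.
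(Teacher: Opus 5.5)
Your proposal is correct and follows the paper's own route: the bias-aware interval's behaviour in the three regimes comes from Theorems~\ref{thm:LocalPower}--\ref{thm:nondeg} and Lemma~\ref{lem:containment} (rescaled endpoints converging to $(0,0)$ when $\delta>\rho$), and the matching ceiling for every honest interval comes from Theorem~\ref{thm:lowerbound}. Two of your side remarks are inaccurate but not load-bearing: Theorem~\ref{thm:LocalCI2} with $c=\pm1$ leaves one-sided power, so zero power at every $\xi$ holds only for interior or least-favourable bias; and under Theorem~\ref{thm:nondeg}(ii) exclusion just beyond $2\bar b$ can fall below $\alpha$ (at $c=1$, $\xi\downarrow2\bar b$ it tends to $\alpha/2+\Phi(-2\bar b/\sigma-z_{1-\alpha/2})<\alpha$), so non-degeneracy should rest on the limit rising to one as $|\xi|\to\infty$, not on it exceeding $\alpha$ everywhere outside the window.
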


Corollary~\ref{cor:rateoptimal} establishes that the bias-aware interval is not
needlessly wasteful. 
The rate at which power degrades is the fastest any honest
procedure can achieve. 
The comparison is sharpest at the matched rate $\delta=\rho$, where
Proposition~\ref{prop:sharp} gives the achievable frontier (of its limit experiment: exact
confounding, or tests built from the committed estimator) and
Theorem~\ref{thm:nondeg} the interval's actual power. By Proposition~\ref{prop:sharp}
the gap between them is exactly the critical-value gap $z_{1-\alpha/2}-z_{1-\alpha}$, the
slack a directional critical value recovers without violating coverage.
This is the asymptotic counterpart of the finite-sample finding of \citet{armstrong2022robust} that
the bias-aware critical value cannot be reduced by more than roughly a factor of two
without sacrificing coverage.

In the weak-factor setting of Example~\ref{rem:WeakFactors},
Assumption~\ref{ass:confounding} is the formal content of weak-factor
non-detectability.
A factor at the boundary of detection is not consistently detectable by any test
\citep{onatskimoreirahallin2013}, its principal-components estimate being inconsistent
\citep{onatski2012}, so a coefficient shift
aligned with its loading is absorbed into the nuisance, the pair realising target gaps up to
the reach $2\bar B$, with detectability vanishing at small fractions of the reach. The
confounding here is approximate rather than exact: only $\tau(0)=0$ is claimed, so the
operative bound is Theorem~\ref{thm:lowerbound}'s $\alpha+\tau$, not the flat dead zone of the
exactly confounded settings.
Therefore the lower bound applies straightforwardly to Example~\ref{rem:WeakFactors}.

Proposition~\ref{prop:sharp} is the local-power counterpart of the length
non-adaptivity of \citet{low1997nonparametric} and \citet{cailow2004adaptation}.
They show that an honest interval cannot be shorter than the worst-case modulus of
continuity, whereas here it is shown that its dual test cannot have local power inside the resulting
modulus-width window. The two-point modulus construction of
\citet{donoho1994statistical} is common to both, and the constant that lower-bounds
their length is the same $\bar b$ that bounds power here.
By the length-power duality of
\citet{pratt1961length}, an interval that cannot beat the worst-case modulus induces a
test that cannot detect within it.

The mechanism is in fact more general than honest inference, and is most transparent
through partial identification. Say $\beta$ is partially identified with identified set
$\mathrm{ID}(P)$, the values of $\beta$ consistent with the data law $P$ over the
maintained class, and recall the dual test excludes a value when that value leaves
the interval.

\begin{proposition}[Partial identification forces local-power degeneracy]\label{prop:partialID}
Let $\mathrm{ID}(P)$ be a compact interval of positive length, let the data be generated at
the truth $\beta_0$ (law $P$), and let $\mathcal{B}_n$ be any interval that is \emph{locally
valid at $P$}: $\Pr_P\{b_n\in\mathcal{B}_n\}\ge1-\alpha+o(1)$ for every sequence
$b_n\in\mathrm{ID}(P)$ with $b_n\to\beta_0$. For perturbations
$A_n=\beta_0+\xi\,n^{-\rho}$ with $\rho>0$:
\begin{enumerate}[(i)]
\item if $\beta_0\in\operatorname{int}\mathrm{ID}(P)$, then for every fixed $\xi$,
$A_n\in\mathrm{ID}(P)$ for $n$ large and $\limsup_n\Pr_P\{A_n\notin\mathcal{B}_n\}\le\alpha$:
the interval cannot exclude perturbations of the truth, and there is no local power;
\item if $\beta_0$ is an endpoint of $\mathrm{ID}(P)$, the same holds for every $\xi$
pointing into $\mathrm{ID}(P)$, so there is at best one-sided local power.
\end{enumerate}
\end{proposition}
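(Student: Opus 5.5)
The argument reduces to a deterministic geometric fact about where the perturbations $A_n$ sit relative to $\mathrm{ID}(P)$, followed by a direct application of the local validity hypothesis. No limit experiment or rescaling of endpoints is needed, because the conclusion is a pointwise bound under the single law $P$.

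For (i), write $\mathrm{ID}(P)=[a,b]$ with $a<\beta_0<b$ and set $r=\min\{\beta_0-a,\,b-\beta_0\}>0$. Fix $\xi$. Since $\rho>0$, we have $|\xi|n^{-\rho}\to0$, so there is an $N$ with $|\xi|n^{-\rho}<r$ for all $n\ge N$. For these $n$, $A_n\in(a,b)\subset\mathrm{ID}(P)$. We also have $A_n\to\beta_0$. Define $b_n=A_n$ for $n\ge N$ and $b_n=\beta_0$ for $n<N$. This is a sequence in $\mathrm{ID}(P)$ converging to $\beta_0$, so it is admissible in the local validity hypothesis. That hypothesis then gives $\Pr_P\{A_n\in\mathcal B_n\}\ge1-\alpha+o(1)$. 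Taking complements, $\Pr_P\{A_n\notin\mathcal B_n\}\le\alpha+o(1)$, and passing to the $\limsup$ yields the claim.

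For (ii), suppose without loss of generality that $\beta_0=a$ is the left endpoint, so that $\xi$ points into $\mathrm{ID}(P)$ exactly when $\xi\ge0$. The case $\xi=0$ is trivial, since then $A_n=\beta_0\in\mathrm{ID}(P)$. For $\xi>0$ and $n$ large, $0<\xi n^{-\rho}<b-a$, which places $A_n\in(a,b)$. The padding device and the complement step from (i) then apply verbatim. The right endpoint follows by reflection.

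To justify the phrase ``at best one-sided,'' I would add a remark: for $\xi$ pointing outward, $A_n\notin\mathrm{ID}(P)$, so local validity places no restriction there. The proposition asserts only the ceiling, so this remark is commentary rather than part of the proof.

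There is no real obstacle here. The only delicate point is that local validity is stated for sequences lying in $\mathrm{ID}(P)$ for every $n$, whereas $A_n$ enters the set only eventually. The padding of finitely many terms handles this and leaves the $\limsup$ unchanged. I would also note, connecting back to Lemma~\ref{lem:containment}, that this is the $\bar b=\infty$, positive-volume limit of the width--power duality. That link is interpretive and is not needed for the proof.
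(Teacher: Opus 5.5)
Your proof is correct and matches the paper's argument: the paper also notes that $A_n\to\beta_0$ puts $A_n$ in $\mathrm{ID}(P)$ for large $n$ (for every $\xi$ in the interior case, and for inward-pointing $\xi$ at an endpoint), then applies local validity to get $\Pr_P\{A_n\in\mathcal{B}_n\}\ge1-\alpha+o(1)$ and takes complements. Your padding of the finitely many early terms is a rigour step the paper leaves implicit, and it is valid because $\beta_0\in\mathrm{ID}(P)$ in both cases, the set being a compact interval.
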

\begin{proof}
If $\beta_0$ and $A_n$ both lie in $\mathrm{ID}(P)$ they are consistent with the same law
$P$, so under the truth $\beta_0$ the value $A_n$ is an equally admissible parameter; local
validity at $P$ therefore forces the interval to cover it, $\Pr_P\{A_n\in\mathcal{B}_n\}\ge1-\alpha+o(1)$,
whence $\Pr_P\{A_n\notin\mathcal{B}_n\}\le\alpha+o(1)$. In (i), $A_n\to\beta_0\in
\operatorname{int}\mathrm{ID}(P)$ gives $A_n\in\mathrm{ID}(P)$ for $n$ large, for every
$\xi$; in (ii), the same holds for $\xi$ pointing inward, while perturbations leaving the
set are realised by distinct laws and may be excludable.
\hfill
\end{proof}

Honesty appears nowhere in the hypothesis or the proof. It enters only as a sufficient
condition: every honest interval is locally valid at every admissible law, since the infimum
in \eqref{eqn:standardHonest} runs over the drifting label $(A_n,\theta_n')$ that realises
$P$. The degeneracy therefore belongs to partial identification and local coverage; in the
exactly confounded settings honest inference under a fixed bound and inference on a partially
identified parameter are the same problem, which is the sense in which the mechanism is more
general than honest inference.

Proposition~\ref{prop:partialID} unifies the paper's regimes under one principle. 
Local power degenerates whenever a procedure's interval width fails to shrink fast enough
relative to the alternatives. Partial identification is the extreme since the width
converges to a set that does not shrink at all, and honest inference under a fixed
bound ($\delta=0$) is exactly this case in the exactly confounded settings of
Lemma~\ref{lem:exactconf}, where the identified set has diameter $2\bar B$; under approximate
confounding $\beta$ may remain point identified and the operative bound is
Theorem~\ref{thm:lowerbound}'s $\alpha+\tau$.
The intermediate regimes $0<\delta\le\rho$ are
``partial identification at the local scale'', since after rescaling by $n^{\rho}$ the honest
half-width $n^{\rho}\bar B=\Theta(n^{\rho-\delta})$ does not vanish, so the rescaled
identified set has positive volume and, under the exact confounding that makes in-window
perturbations realisations of the same law, the same degeneracy applies, by
Proposition~\ref{prop:partialID} with $\beta_0$ interior (power at most $\alpha$) or at the boundary
(one-sided power, recovering Theorem~\ref{thm:LocalCI2}). The power loss is thus a feature
of any procedure that endorses confidence intervals with slowly converging width, of
which conservative honest inference is one instance \citep[cf.][]{imbens2004confidence,stoye2009more}.
The bound in part~(i) is $\alpha$ rather than zero, and this is sharp.
Honesty caps the exclusion power \emph{in excess of size}, not the exclusion probability itself:
a dual test that exhausts its size at $A_n$ still excludes $A_n$, under the truth $\beta_0$, with
probability $\alpha+o(1)$.
A particular honest interval may exclude perturbations with vanishing
probability, but none can guarantee excluding one with more than $\alpha$, which is exactly the
local power that degenerates.

Taken together, Sections~\ref{sect:powerLossBACI}--\ref{sect:minimaxpowerLossHonest} show that the bounds on honest
local power coincide in rate and transition at $\delta=\rho$. The power loss of
conservative confidence intervals is not a deficiency of particular constructions, but intrinsic to the uniform coverage requirement over a class that the data cannot
discipline. 
Against a class wide enough to contain the bias, no honest procedure recovers the local power that conservatism forgoes.
The only way to regain power is to relax the strict honesty requirement, for instance by accepting an undersized test that retains
power \citep[cf.][]{low1997nonparametric,rambachan2023more}.

\subsection{Primitive conditions and the modulus}\label{sect:primitives}

The Gaussian limits in Section~\ref{sect:minimaxpowerLossHonest} rest on standard primitives. The matched-rate normal
approximation of Theorem~\ref{thm:nondeg}(ii) and Proposition~\ref{prop:sharp} holds
whenever $\hat\beta$ is asymptotically linear:
$\hat\beta-\beta-\hat B=\sum_i w_{ni}u_i+o_p(n^{-\epsilon})$; and the array
$\{w_{ni}u_i\}$ satisfies the Lindeberg condition with
$n^{2\epsilon}\sum_i w_{ni}^2\,\mathrm{E}u_i^2\to\sigma^2$; the variance estimator is
consistent, $n^{\epsilon}se(\hat\beta)\xrightarrow{p}\sigma$; and the bias is stable,
$n^{\delta}\hat B\xrightarrow{p}c\bar b$, $n^{\delta}\bar B\to\bar b$. These conditions
certify the Gaussian limit for tests built from the estimator; the every-honest-test scope of
Proposition~\ref{prop:sharp} additionally requires the asymptotic-sufficiency clause of
Assumption~\ref{ass:LE}. For
$\epsilon=1/2$ this is the usual Lyapunov central limit theorem.
For $\epsilon<1/2$ it is
the standard Lindeberg condition for kernel estimators with effective sample size
$n^{2\epsilon}$, and holds for the estimators of the designs in the simulations of Section~\ref{sect:simulations}.

The function $\tau$ of Assumption~\ref{ass:confounding} is not a free primitive but a
single function implied by the model. 
It shows how statistically different two laws within the same class of DGPs must be under a shift in the parameter of interest. 

\begin{definition}[Confounding separation]\label{def:confsep}
For $u\in[0,1]$ let the admissible pairs at gap $2u\bar B$ be the pairs
$\big(P^{(n)}_{\beta_0,\theta_0},\,P^{(n)}_{\beta_0+2u\bar B,\theta_1}\big)$ with
$\theta_0,\theta_1\in\mathcal C_n$ (Definition~\ref{def:confpair}). Define
\begin{equation*}
s_n(u)\ :=\ \inf_{\theta_0,\theta_1\in\mathcal C_n}\
2\,\Phi^{-1}\!\Big(\tfrac12\Big[1+TV\big(P^{(n)}_{\beta_0,\theta_0},\,P^{(n)}_{\beta_0+2u\bar B,\theta_1}\big)\Big]\Big):
\end{equation*}
smallest TV across pairs, expressed as a mean gap in standard-error
units of the Gaussian location pair with that same total variation. In Gaussian-shift
settings this is $\sqrt n\,\inf\norm{f_1-f_0}/\sigma$, the rescaled inverse of the
\citet{donoho1994statistical} modulus of continuity. The \emph{confounding separation} is
$s(u)=\lim_n s_n(u)$, assumed continuous and nondecreasing with $s(0)=0$. When the argument is the local coefficient rather than the
fraction of the reach, we write $s(\xi):=s(|\xi|/2\bar b)$.
\end{definition}

It is worth spelling out what is fixed, what follows,
and where the infimum enters.
(i) \emph{Fixed by the analyst}: two target values, the truth $\beta_0$ and the rival
$\beta_0+2u\bar B$, and the nuisance set $\mathcal C_n$ which is conceded by assumption.
(ii) \emph{Implied for laws}: each target value is carried not by one law but by its
\emph{slice} $\{P^{(n)}_{b,\theta}:\theta\in\mathcal C_n\}$; both hypotheses are composite in
the nuisance.
(iii) \emph{The infimum}: honesty demands size control over the whole null slice and coverage
over the whole rival slice, so achievable worst-case power is set by the closest cross-pair of
laws, one from each slice; $s_n(u)$ is that pair's separation. The infimum is not an extra
assumption but the operational meaning of ``least favourable'': a uniform promise is priced
at its binding case.
(iv) \emph{The Neyman--Pearson reduction}: if the class is a singleton with zero bias,
$\mathcal C_n=\{0\}$, each slice is one law, the closest pair is the only pair, and
$\Phi(s_n(u)-z_{1-\alpha})$ is the textbook two-point Gaussian power function of the one-sided
likelihood-ratio test. The frontier below is therefore Neyman--Pearson with the two-point
distance replaced by the distance between slices; exact confounding is the opposite pole, with
intersecting slices at distance zero, and power at size.

\begin{proposition}[Primitive condition for the bound]\label{prop:confprim}
Suppose the confounding separation $s(u)$ of Definition~\ref{def:confsep} exists and the
localised experiments carry the Gaussian-shift structure, so that each admissible pair at gap
$2u\bar B$ is, in the limit, a Gaussian location pair whose standardised mean gap equals its
separation under Definition~\ref{def:confsep}; this holds in the white-noise settings of
\citet{donoho1994statistical}, and trivially under the exact confounding of
Lemma~\ref{lem:exactconf}, where $s$ is identically zero. Then every
asymptotically honest level-$\alpha$ test has power at most $\alpha+\tau(u)$ at the confounded
alternative, with
\begin{align*}
\tau(u)=\Phi\big(s(u)-z_{1-\alpha}\big)-\alpha,
\end{align*}
which has $\tau(0)=0$ and, since $\Phi<1$, lies in $[0,1-\alpha)$. Definition~\ref{def:confsep}
thus delivers the conclusion of Theorem~\ref{thm:lowerbound} directly, the Neyman--Pearson
bound replacing the total-variation step in its proof with the sharper $\tau$, whose range
$[0,1-\alpha)$ comes for free.
Moreover, over convex classes and linear functionals the bound is \emph{attained}: the
likelihood-ratio test between the closest pair of the two convex slices is exactly honest over
the tested slice, by the separating-hyperplane property of the closest pair, and its worst-case
power is exactly $\Phi(s(u)-z_{1-\alpha})$, so
$\Phi\big(s(\xi)-z_{1-\alpha}\big)$ is the exact honest local-power frontier, attained
pointwise (the attaining test varies with the alternative).
\end{proposition}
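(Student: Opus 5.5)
The plan has two parts. First I show the ceiling $\alpha+\tau(u)$ for every asymptotically honest test. Then I show that, for convex classes and linear functionals, a likelihood-ratio test between the closest pair of slices attains it.

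\emph{Upper bound.} Fix $u\in[0,1]$ and a sequence of nuisance pairs $(\theta_0^{(n)},\theta_1^{(n)})$ that is asymptotically least favourable: its total variation, mapped through $2\Phi^{-1}(\tfrac12[1+\cdot])$, converges to $s(u)$. By the Gaussian-shift hypothesis, the localised pair $(P_{0,n},P_{1,n})=(P^{(n)}_{\beta_0,\theta_0},P^{(n)}_{\beta_0+2u\bar B,\theta_1})$ converges in Le Cam's sense to a Gaussian location pair $N(0,1)$ versus $N(s(u),1)$, after standardisation. This is the convention in which total variation equals $2\Phi(s/2)-1$, which is what Definition~\ref{def:confsep} inverts.

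Take any asymptotically honest test $\phi_n$ of the rival value. Honesty at the rival slice gives $\limsup_n \mathrm{E}_{P_{1,n}}\phi_n\le\alpha$; this is the same size-control step as in the proof of Theorem~\ref{thm:lowerbound}, applied at the parameter of $P_{1,n}$. Pass to a subsequence along which $\mathrm{E}_{P_{0,n}}\phi_n$ converges. By the asymptotic representation theorem (van der Vaart, Ch.~15), the limit is the power of a test in the Gaussian pair with size at most $\alpha$. The Neyman--Pearson lemma then bounds that power by $\Phi(s(u)-z_{1-\alpha})$. Subtracting $\alpha$ gives $\tau(u)$.

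The stated properties of $\tau$ follow from those of $s$. We have $\tau(0)=0$ because $s(0)=0$. Continuity and monotonicity carry over from $s$. Finally $\tau(u)\in[0,1-\alpha)$ because $\Phi<1$ and $s\ge0$. Hence $\tau$ satisfies the requirements of Assumption~\ref{ass:confounding}'s range, and Theorem~\ref{thm:lowerbound} goes through with the Neyman--Pearson bound in place of the total-variation step. Under exact confounding (Lemma~\ref{lem:exactconf}), $s\equiv0$ and the bound collapses to $\alpha$.

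\emph{Attainment.} Over convex slices and a linear functional, I work in the white-noise (Gaussian-sequence) representation. Let $(f_0^*,f_1^*)$ be the closest pair between the truth slice and the rival slice; compactness or closedness of the class gives existence. The candidate is the linear test rejecting the rival when $\langle Y,f_1^*-f_0^*\rangle$ falls below the appropriate critical value, calibrated to size $\alpha$ at $f_1^*$.

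The separating-hyperplane property of the closest pair is the key step. It gives $\langle f-f_1^*,f_1^*-f_0^*\rangle\ge0$ for every $f$ in the rival slice, so the mean of the statistic under any rival law is at least its mean at $f_1^*$. The test is therefore exactly size $\alpha$ uniformly over the rival slice. The analogous inequality over the truth slice gives worst-case power $\Phi(\norm{f_1^*-f_0^*}/\sigma-z_{1-\alpha})=\Phi(s(u)-z_{1-\alpha})$ in the limit. This is the Donoho (1994) construction.

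I expect the main obstacle to be the limit step in the upper bound: transferring honesty through Le Cam convergence along a least-favourable sequence that may itself depend on $n$. To handle it I would try contiguity plus Le Cam's third lemma for the specific pair, rather than full experiment convergence. If that fails, I would argue directly by Hellinger/total-variation continuity of Gaussian Neyman--Pearson power in the mean gap.
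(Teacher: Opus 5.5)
Your proposal is correct and follows the paper's proof. The ceiling is the Neyman--Pearson bound $\Phi(s(u)-z_{1-\alpha})$ on the least-favourable Gaussian location pair, with honesty supplying the size control at the rival law; attainment is the closest-pair linear test, whose size and worst-case power are fixed by the two separating-hyperplane inequalities. The only difference is the limit passage: you use a subsequence and the asymptotic representation theorem, whereas the paper applies Neyman--Pearson pair by pair under the Gaussian-shift hypothesis, takes the infimum over pairs, and invokes continuity of $\Phi$, so your version is a more careful rendering of the same step.
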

\begin{proof}
\emph{Bound.} Honesty caps size at every admissible law carrying the alternative, so for any
admissible pair $(P_0,P_1)$ at gap $2u\bar B$, by the Gaussian-shift hypothesis a Gaussian
location pair with standardised mean gap $s$, the Neyman--Pearson lemma gives (as throughout,
subscript $0$ marks the law carrying the truth, where power is read, and subscript $1$ the
law carrying the tested value, where honesty caps size: the tested value is the dual test's
null, so the classical roles of the subscripts are exchanged)
\begin{align*}
\mathrm E_{P_0}\phi\ \le\ \sup\{\mathrm E_{P_0}\psi:\ \mathrm E_{P_1}\psi\le\alpha\}
\ =\ \Phi\big(s-z_{1-\alpha}\big),
\end{align*}
increasing in $s$; taking the infimum over admissible pairs bounds power by
$\Phi(s_n(u)-z_{1-\alpha})=\alpha+\tau_n(u)$, replacing the cruder
$\mathrm E_{P_0}\phi\le\alpha+TV$ step in Theorem~\ref{thm:lowerbound}.
\emph{Attainment.}
Write $\mathcal F_0$ and $\mathcal F_1$ for the two slices, the sets of
admissible mean objects carrying targets $\beta_0$ and $\beta_0+2u\bar B$ (convex, by
convexity of the class and linearity of the target). 
Let $(f_0^*,f_1^*)$ attain the closest pair between the convex slices and
$\delta=f_1^*-f_0^*$. Minimality of the pair gives the variational inequalities
$\langle f_1-f_1^*,\delta\rangle\ge0$ on $\mathcal F_1$ and
$\langle f_0-f_0^*,\delta\rangle\le0$ on $\mathcal F_0$. Write $dY$ for the Gaussian-shift observation, the data as mean object plus white noise at
scale $\sigma/\sqrt n$, so that the $\delta$-weighted average of the data satisfies
$\langle\delta,dY\rangle\sim N(\langle\delta,f\rangle,\ \sigma^2\norm{\delta}^2/n)$ under
mean object $f$. The test rejecting when
$\langle\delta,dY\rangle<c$, with $c$ set by size $\alpha$ at $f_1^*$, therefore has size at
most $\alpha$ on all of $\mathcal F_1$, and its worst-case power over $\mathcal F_0$ is
attained at $f_0^*$ and equals $\Phi(\sqrt n\,\norm{\delta}/\sigma-z_{1-\alpha})
=\Phi(s_n(u)-z_{1-\alpha})$. Both limits follow by continuity of $\Phi$.
\hfill
\end{proof}

Without the Gaussian-shift structure, existence of $s(u)$ alone still bounds power by
$\alpha+TV=\alpha+(2\Phi(s(u)/2)-1)$, the step of Theorem~\ref{thm:lowerbound}; the sharper
ceiling uses the Gaussian pair's likelihood ratio, and a two-point pair with the same total
variation attains $\alpha+TV$.

Figure~\ref{fig:lawspace} draws the proposition. Two classes of laws are fixed sets;
least-favourable pairs are their closest pair; the frontier is the overlap of the two
sampling distributions at the touching points.

\begin{figure}[!ht]
\centering
\includegraphics[width=0.98\textwidth]{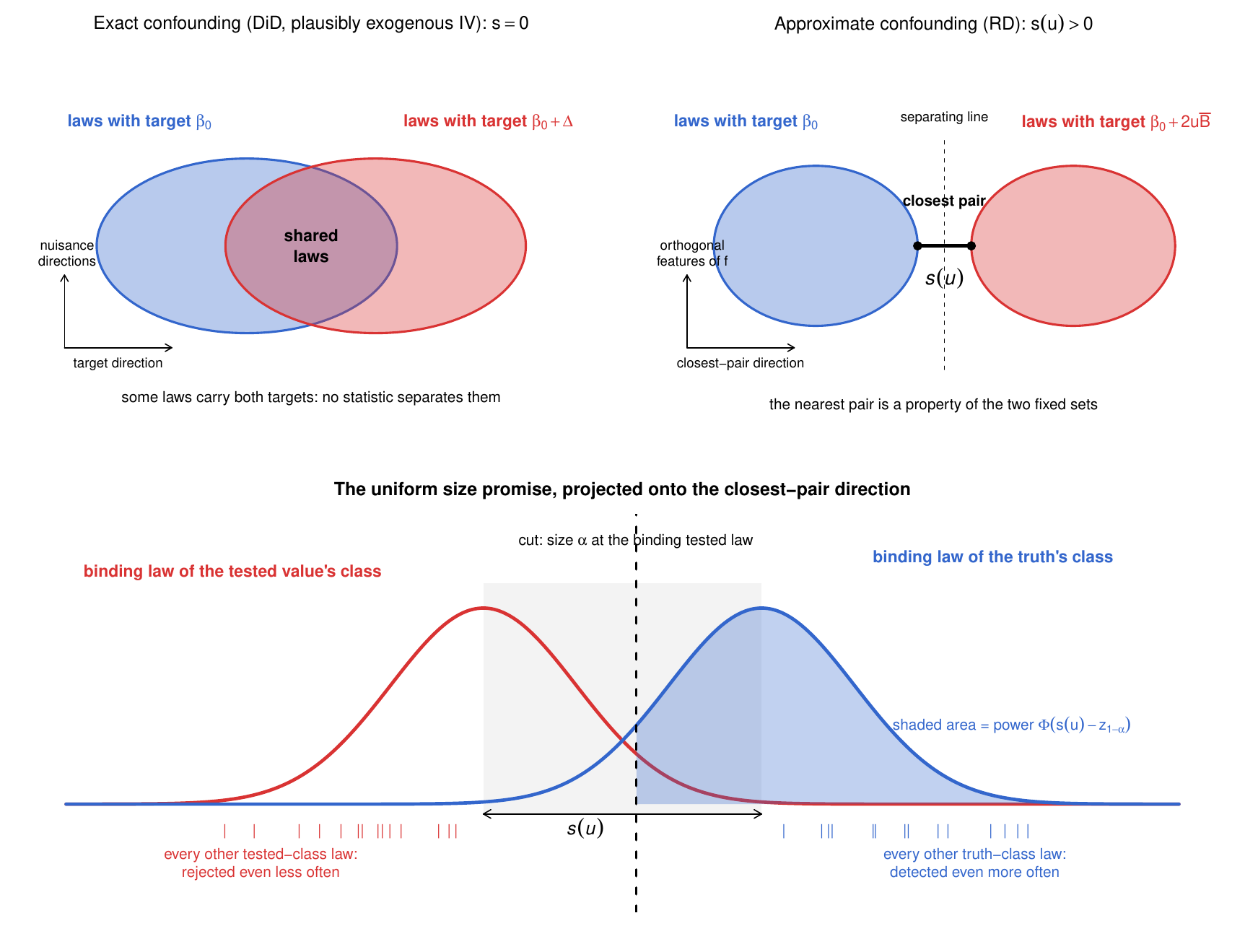}
\caption{The confounding separation in law space, drawn at one \emph{fixed} target gap
$2u\bar B$; the dependence on $u$ is Figure~\ref{fig:funnel}. Top left: under exact
confounding the two classes overlap, and the shared laws carry both targets, so $s=0$ and no
statistic separates them. Top right: in regression discontinuity the classes are disjoint,
and the closest pair, a property of the two fixed sets, realises the infimum $s(u)$ of
Definition~\ref{def:confsep}. Bottom: projected onto the closest-pair direction, the uniform
size promise binds at the touching laws: every other admissible law lies further out (tick
marks), size $\alpha$ is spent at the binding tested law, and the frontier
$\Phi(s(u)-z_{1-\alpha})$ of Proposition~\ref{prop:confprim} is the shaded area. The
least-favourable pair is the binding case of the uniform coverage promise.}
\label{fig:lawspace}
\end{figure}

The four settings are instances, distinguished by the \emph{leading behaviour} of $s$ at the
origin:

\begin{center}\small
\begin{tabular}{lll}
\hline
Setting & Leading behaviour of $s$ & Source \\
\hline
Plausibly-exog.\ IV & $s\equiv0$ on $[0,1]$ & exact confounding (Lemma~\ref{lem:exactconf}) \\
Honest DiD, $\Delta^{SD}(M)$ & $s\equiv0$ on $[0,1]$ & exact confounding (Lemma~\ref{lem:exactconf}) \\
Regression discontinuity & $s(u)\asymp u^{5/4}$ & inverse H\"older modulus at the noise scale \\
&& \citep{donoho1994statistical,donoholow1992} \\
Weak factors & $s(0)=0$ only claimed & contiguity \citep{onatskimoreirahallin2013}; see below \\
\hline
\end{tabular}
\end{center}

The endpoints are the theory's two extremes. Plausibly-exogenous IV and honest DiD have
$s\equiv0$ on $[0,1]$: by Lemma~\ref{lem:exactconf} a target shift of up to $2\bar B$ is matched
\emph{exactly} within the class, which is the perfect-confounding form of
Assumption~\ref{ass:LE} and the flat-dead-zone case: zero excess power over size, the
in-window bound being $\alpha$, not zero. Regression discontinuity
has $s$ strictly increasing: the confounding is approximate, its cost from the H\"older
modulus, so the frontier $\Phi(s(\xi)-z_{1-\alpha})$ rises smoothly rather than staying flat.
There is no exactly flat dead zone over the full class, only the (bounded) resolution limit of
the bias-aware procedures in use.
The exponent is worth a remark: for $|f''|\le M$ the modulus is
$\omega(\varepsilon)\asymp\varepsilon^{4/5}$, and the calibration $\bar B\asymp\omega(n^{-1/2})$
gives $s(u)\asymp u^{5/4}$. 

The RD mechanics are summarised by Figure~\ref{fig:confsep}. 
Take two candidate regression functions $f_1$ and $f_{-1}$, and define difference $g = f_1 - f_{-1}$. 
Closeness of the candidates makes $g$ close to zero, so the task is to find the smallest $\|g\|_2$ such that:\footnote{$\|g\|_2$ is the natural measure since the mean separation of these two equal-variance Gaussians is $\approx \sqrt{n}\|g\|_2/\sigma$. }
(a) candidates differ by $\Delta$ at the cut-off (placing the difference to the right of the cut-off without loss: $g(x) = 0$ for $x<0$ and $g(0) = \Delta$); and, (b) abide by $|f''|\le M$.
Notice $|f''|\le M$ implies $|g''|\le 2M$, so how close the candidates can be is governed by how fast $g$ can return to zero under $|g''|\le 2M$. 
Let $h$ be such that $g(x)\neq 0$ for $x\in[0,h)$ and $g(x) = 0$ for $x \geq h$. This $h$ satisfies $g(h) = g'(h) = 0$ by continuity;\footnote{$g'$ is Lipschitz (hence continuous) from $|g''|\le 2M$, so the transition to $g(h) = 0$ has no kinks, thus $g'(h) = 0$ from left and right. } and, by a Taylor expansion, 
\begin{align*}
    g(0) - g(h) = -h g'(h) + h^2/2\cdot g''(\tilde h)
    \iff 
    g(0) = h^2/2\cdot g''(\tilde h).
\end{align*}
By hypothesis $g(0) = \Delta$. Since $g''(\tilde h) \le 2M$, this gives $h \ge \sqrt{\Delta/M}$; the minimising $g$ spends maximal curvature throughout, $g''\equiv 2M$ on $(0,h)$, attained by the parabola $g(x) = \Delta(1-x/h)^2$, so the bound binds.
This leaves $h = \sqrt{\Delta/M}$.

Thus the total measure of function difference is, 
\begin{align*}
    \sqrt{\int_0^h g(x)^2dx }
    \asymp
    \sqrt{h}\cdot \Delta \asymp M^{-1/4}\cdot \Delta^{5/4}.
\end{align*}
In terms of noise units in the Gaussian limit experiment, this is $s_n( u) \asymp \frac{\sqrt{n}}{\sigma} M^{-1/4}\cdot ( 2u\bar B)^{5/4}$, which substitutes $\Delta = 2u\bar B$.
This is the total confounding distance, i.e. the closest the laws can be that generate $g(0) = \Delta$. 
The RD bias bound $\bar B \asymp M^{1/5}(\sigma/\sqrt n)^{4/5}$ implies 
$s_n( u) \asymp u^{5/4}$.

\begin{figure}[!ht]
\centering
\includegraphics[width=0.95\textwidth]{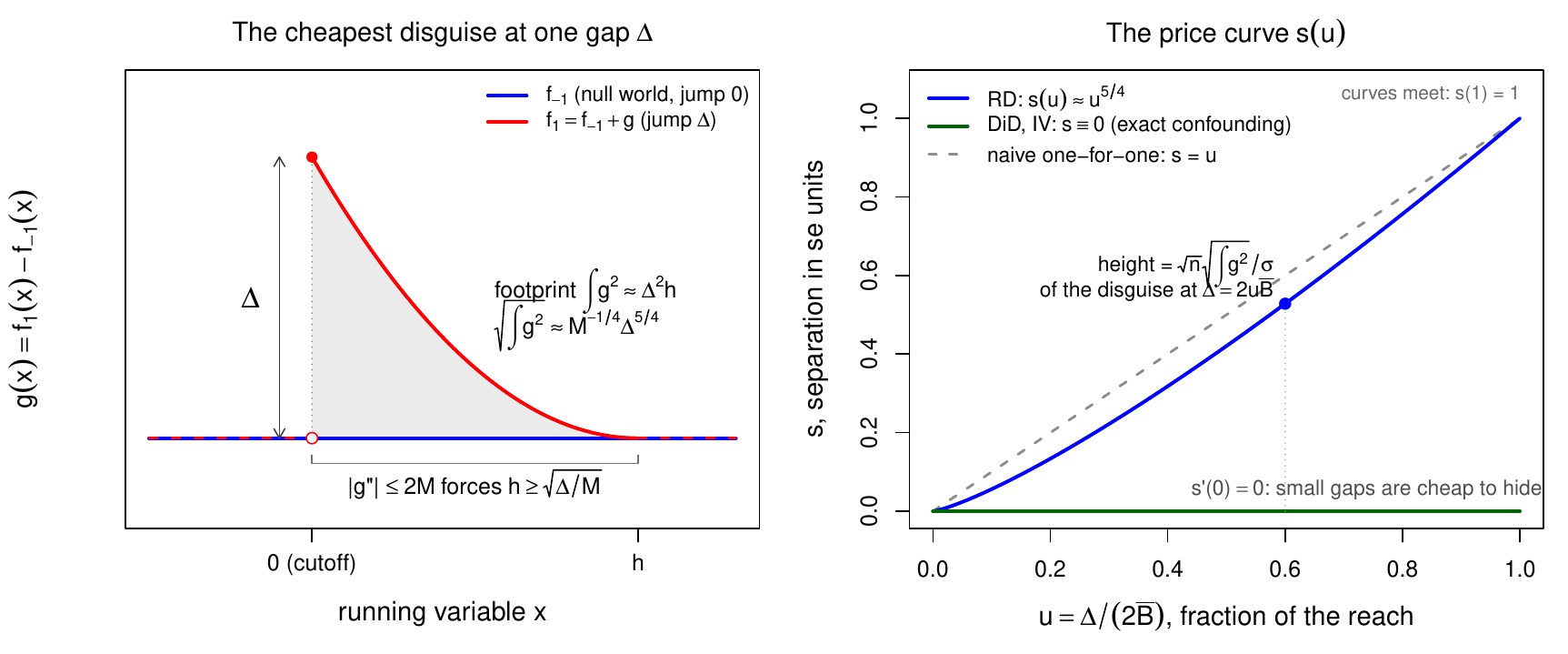}
\caption{The confounding separation, primal and dual. Left: at one target gap $\Delta$, the
least-favourable configuration merges the two candidate regression functions as fast as the
curvature budget
allows; the shaded footprint is what the data can still see, and the best test between the
two worlds has power $\Phi(\sqrt n\,\norm{g}_2/\sigma-z_{1-\alpha})$. Right: tracing the
minimised footprint over every gap $\Delta=2u\bar B$ gives the price curve $s(u)$ of
Definition~\ref{def:confsep}: identically zero under exact confounding (DiD, plausibly
exogenous IV), $u^{5/4}$ in regression discontinuity, with the naive one-for-one conversion
$s=u$ dashed for reference.}
\label{fig:confsep}
\end{figure}

The computation assembles two classical ingredients, the modulus geometry of
\citet{donoho1994statistical} and the noise-level calibration of the bound; the following
lemma records their joint implication for the power frontier, which does not appear to have
been stated in this form.

\begin{lemma}[Power law]\label{lem:pricecurve}
Let the experiment be the Gaussian-shift (white-noise) model over a convex class with a
linear target functional, and let the modulus $\omega$ be
regularly varying at zero with index $r\in(0,1]$, as holds with $r=4/5$ for the H\"older
class $\{|f''|\le M\}$ \citep{donoholow1992}. Calibrate the bound at the matched
rate, so that $s(1)=\lim_n\sqrt n\,\omega^{-1}(2\bar B)/\sigma\in(0,\infty)$. Then the
confounding separation of Definition~\ref{def:confsep} satisfies
\begin{align*}
s(u)=s(1)\,u^{1/r},\qquad u\in(0,1],
\end{align*}
and the honest local-power frontier of Proposition~\ref{prop:confprim} is
$\Phi\big(s(1)\,u^{1/r}-z_{1-\alpha}\big)$.
\end{lemma}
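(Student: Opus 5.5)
The plan is to express $s_n(u)$ directly through the modulus, and then let regular variation turn the calibration at $u=1$ into the power law. In the Gaussian-shift model two laws with mean objects $f_0,f_1$ are a Gaussian location pair with standardised gap $\sqrt n\norm{f_1-f_0}/\sigma$. Moreover, $2\Phi^{-1}(\tfrac12[1+TV])$ returns exactly this gap, because $TV=2\Phi(\text{gap}/2)-1$ for equal-variance Gaussians. Definition~\ref{def:confsep} therefore reads
\begin{align*}
s_n(u)=\frac{\sqrt n}{\sigma}\,\inf\big\{\norm{f_1-f_0}:\ f_0\in\mathcal F_0,\ f_1\in\mathcal F_1(u)\big\},
\end{align*}
where $\mathcal F_1(u)$ is the slice carrying target $\beta_0+2u\bar B$.

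The first step identifies this infimum with the inverse modulus, $\omega^{-1}(2u\bar B)$. Here $\omega(\varepsilon)=\sup\{|L f_1-L f_0|:\norm{f_1-f_0}\le\varepsilon\}$ over the convex class. Convexity and linearity of $L$ make $\omega$ concave and nondecreasing; it is continuous and, for $r>0$, strictly increasing near zero. The smallest distance achieving a target gap $\Delta$ is then $\omega^{-1}(\Delta)$. I would verify the two inequalities directly. If a pair at distance $\varepsilon$ has gap $\Delta$, then $\omega(\varepsilon)\ge\Delta$. Conversely, the sup defining $\omega(\varepsilon)$ is approximately attained. Convex combinations of an attaining pair with a common point let the gap be reduced continuously to exactly $\Delta$ without increasing the distance. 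This gives $s_n(u)=\sqrt n\,\omega^{-1}(2u\bar B)/\sigma$.

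The second step is the scaling. If $\omega$ is regularly varying at $0$ with index $r$, then $\omega^{-1}$ is regularly varying at $0$ with index $1/r$ (standard inversion for monotone regularly varying functions). Hence
\begin{align*}
\frac{\omega^{-1}(2u\bar B)}{\omega^{-1}(2\bar B)}\to u^{1/r}\quad\text{as }\bar B\to0,
\end{align*}
uniformly on compact subsets of $(0,1]$ by the uniform convergence theorem. Multiplying by $\sqrt n\,\omega^{-1}(2\bar B)/\sigma\to s(1)$ gives $s(u)=\lim_n s_n(u)=s(1)u^{1/r}$. For the H\"older case the exponent $r=4/5$ reproduces the $u^{5/4}$ computed above. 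The frontier then follows by substituting into $\Phi(s(u)-z_{1-\alpha})$ from Proposition~\ref{prop:confprim}, whose Gaussian-shift hypothesis is exactly this setting.

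The main obstacle is the first step, namely that the infimum over pairs with target gap exactly $2u\bar B$ equals $\omega^{-1}(2u\bar B)$ rather than only bounding it. This needs the intermediate-gap argument via convexity, together with a check that $\omega$ is strictly increasing near $0$, so that $\omega^{-1}$ is well defined. The matched-rate condition requires $\bar B\to0$, so that regular variation at zero applies. In the fixed-bound case $\delta=0$ one would instead need exact homogeneity of $\omega$, which holds for the H\"older class by rescaling.
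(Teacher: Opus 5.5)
Your proposal is correct and follows the paper's proof. You identify $s_n(u)=\sqrt n\,\omega^{-1}(2u\bar B)/\sigma$ through the Gaussian total-variation identity and the inverse modulus. You then use that $\omega^{-1}$ is regularly varying with index $1/r$, with $\bar B\to0$ forced by the matched-rate calibration, apply the uniform convergence theorem, and multiply by $s_n(1)\to s(1)$. Your two-inequality check, with the convex-combination shrinking of a near-attaining pair, spells out a step the paper treats as immediate from the definition of the modulus.
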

\begin{proof}
Under the Gaussian-shift structure the least-separated admissible pair whose targets differ
by $\Delta$ lies at $L^2$ distance $\omega^{-1}(\Delta)$: the modulus is by definition the
largest target gap available at a given $L^2$ distance, so its inverse is the smallest $L^2$
distance at which a given gap is available. Hence
$s_n(u)=\sqrt n\,\omega^{-1}(2u\bar B)/\sigma$ by Definition~\ref{def:confsep}. Since
$\omega$ is regularly varying at zero with index $r$, its inverse is regularly varying with
index $1/r$, so $\omega^{-1}(2u\bar B)/\omega^{-1}(2\bar B)\to u^{1/r}$ as $\bar B\to0$ (the matched-rate
calibration enforces this: $s(1)$ finite requires $\omega^{-1}(2\bar B)\asymp n^{-1/2}\to0$,
hence $\bar B=\Theta(\omega(n^{-1/2}))\to0$),
uniformly over $u$ in compact subsets of $(0,1]$ by the uniform convergence theorem for
regularly varying functions. Multiplying by $s_n(1)\to s(1)$ gives the result.
\hfill
\end{proof}

Figure~\ref{fig:funnel} draws the two price curves.

\begin{figure}[!ht]
\centering
\includegraphics[width=0.98\textwidth]{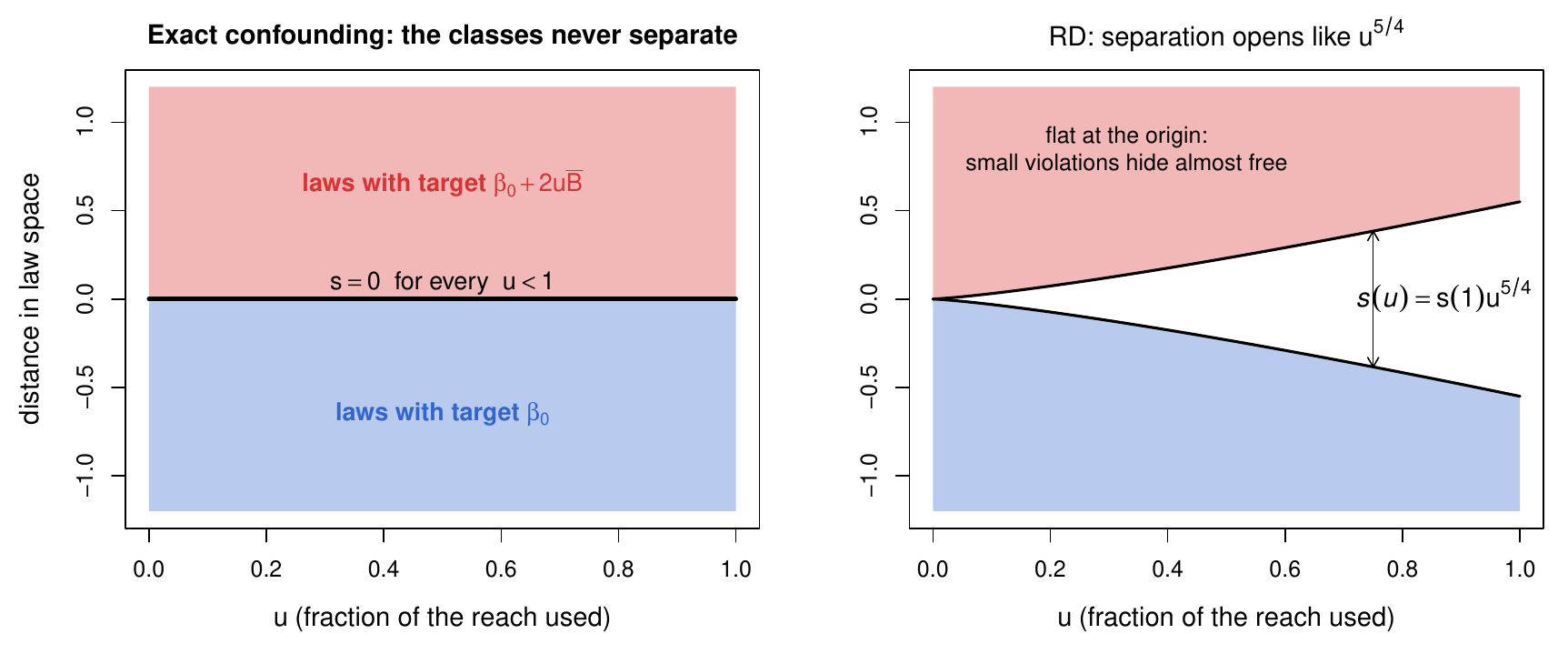}
\caption{The price curve of Lemma~\ref{lem:pricecurve} as the mouth of a funnel: the $u$
dimension, held fixed in Figure~\ref{fig:lawspace}, is here the horizontal axis, and at each
$u$ the vertical section is that figure's two-class picture. Left: exact
confounding, the classes never separate, $s\equiv0$ on $[0,1]$. Right: regression
discontinuity, the separation opens like $u^{5/4}$ and is flat at the origin: small
violations hide almost free, so the frontier lifts off $\alpha$ with zero slope
(Corollary~\ref{cor:rdcheap}).}
\label{fig:funnel}
\end{figure}

\begin{corollary}[Honest power in RD]\label{cor:rdcheap}
For the H\"older class $\{|f''|\le M\}$ the frontier over all honest tests is
$\Phi\big(s(1)\,u^{5/4}-z_{1-\alpha}\big)$. Three readings follow. First, the frontier over the full class is nowhere flat:
the ceiling exceeds $\alpha$ at every $u>0$, some test tuned to that alternative beating
$\alpha$, while tests built from the committed estimator remain at exactly $\alpha$ inside the
window (Proposition~\ref{prop:sharp}). Second, no honest test lifts off with positive slope:
the ceiling itself is nearly flat near the truth, so the power forgone by the bias-aware
class inside its bounded window is at most $\Phi(s(1)u^{5/4}-z_{1-\alpha})-\alpha$, small
throughout the window when $s(1)$ is moderate. Third, outside the window the bias-aware test
attains the limit-experiment frontier of Proposition~\ref{prop:sharp} up to the fixed
critical-value gap $z_{1-\alpha/2}-z_{1-\alpha}$, which the directional critical value of
\citet{armstrong2018optimal} removes. Honesty in regression discontinuity is therefore cheap
in power, not only in length.
\end{corollary}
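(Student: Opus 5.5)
The plan is to assemble the corollary from three earlier results rather than prove anything new. The frontier comes from Proposition~\ref{prop:confprim}, with the separation function supplied by Lemma~\ref{lem:pricecurve}. The three readings then follow by comparing that frontier with Proposition~\ref{prop:sharp} and Theorem~\ref{thm:nondeg}(ii).

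\textbf{Step 1: the frontier.} I would first check the hypotheses of Lemma~\ref{lem:pricecurve} for the H\"older class $\{|f''|\le M\}$.
\begin{itemize}
\item[(a)] The class is convex.
\item[(b)] The jump at the cut-off is a linear functional.
\item[(c)] The experiment is the Gaussian-shift limit, as in the primitives of Section~\ref{sect:primitives}.
\item[(d)] The modulus is regularly varying with index $r=4/5$, by \citet{donoholow1992}. This is corroborated by the explicit parabola construction $g(x)=\Delta(1-x/h)^2$, $h=\sqrt{\Delta/M}$, which gives $\norm{g}_2\asymp M^{-1/4}\Delta^{5/4}$.
\end{itemize}
At the matched-rate calibration $\bar B\asymp M^{1/5}(\sigma/\sqrt n)^{4/5}$ we have $s(1)\in(0,\infty)$. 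Lemma~\ref{lem:pricecurve} then gives $s(u)=s(1)u^{5/4}$. Substituting into the attained frontier of Proposition~\ref{prop:confprim} (convex class, linear functional, so the bound is attained) yields the stated frontier $\Phi(s(1)u^{5/4}-z_{1-\alpha})$ over all honest tests.

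\textbf{Step 2: the three readings.}
\begin{itemize}
\item[(i)] Since $s(u)>0$ for every $u>0$ and $\Phi$ is strictly increasing, the ceiling exceeds $\alpha$ everywhere on $(0,1]$. Attainment in Proposition~\ref{prop:confprim} supplies, for each $u$, a closest-pair likelihood-ratio test that beats $\alpha$. For the contrast with the committed estimator I would invoke Proposition~\ref{prop:sharp} under Assumption~\ref{ass:LE}. That assumption holds through the committed local-linear weights, as argued after its statement, so inside $|\xi|\le2\bar b$ every test built from the estimate has worst-case power exactly $\alpha$.
\item[(ii)] Differentiate: $\frac{d}{du}\Phi(s(1)u^{5/4}-z_{1-\alpha})=\tfrac54 s(1)u^{1/4}\varphi(\cdot)$, which vanishes at $u=0$. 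The gap between the full-class ceiling and the committed class's $\alpha$ is therefore $\Phi(s(1)u^{5/4}-z_{1-\alpha})-\alpha$. For $u\le1$ this is at most $\Phi(s(1)-z_{1-\alpha})-\alpha$, which is small when $s(1)$ is moderate.
\item[(iii)] Compare Theorem~\ref{thm:nondeg}(ii) with $c=\pm1$ (the worst case) against Proposition~\ref{prop:sharp}. As already noted after Proposition~\ref{prop:sharp}, the interval's worst-case power outside the window is $\Phi((|\xi|-2\bar b)/\sigma-z_{1-\alpha/2})$ up to a negligible tail. This matches the frontier up to $z_{1-\alpha/2}-z_{1-\alpha}$, a gap that the directional critical value of \citet{armstrong2018optimal} closes since $\mathrm{cv}_\alpha(t)-t\to z_{1-\alpha}$.
\end{itemize}

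\textbf{Main obstacle.} The delicate step is the regular-variation input and the uniformity in Lemma~\ref{lem:pricecurve}. The parabola calculation gives only the order $u^{5/4}$. An exact power law $s(1)u^{5/4}$ needs the modulus to be \emph{exactly} regularly varying in the limit, not merely bounded above and below by multiples of $\varepsilon^{4/5}$. I would try to obtain this from the scale invariance of the H\"older class: rescaling $x\mapsto x/h$ maps the extremal problem at gap $\Delta$ to the one at gap $1$, making $\omega(\varepsilon)=C\varepsilon^{4/5}$ exactly, which is stronger than regular variation. This requires the cut-off neighbourhood to be effectively unbounded at the local scale, i.e.\ the support of the design exceeds $h=\sqrt{\Delta/M}\to0$, which holds asymptotically. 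The remaining readings are direct comparisons.
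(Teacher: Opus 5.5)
Your proposal is essentially the paper's proof. Both feed the exponent $r=4/5$ from \citet{donoholow1992} through Lemma~\ref{lem:pricecurve} into the attained frontier of Proposition~\ref{prop:confprim}, use strict monotonicity of $\Phi$ for the first reading, and compare with Proposition~\ref{prop:sharp} (equivalently Theorem~\ref{thm:nondeg}(ii) at $c=\operatorname{sign}(\xi)$) for the other two; your zero-derivative computation and the scale-invariance identity $\omega(\varepsilon)=\omega(1)\varepsilon^{4/5}$ are sound refinements the paper leaves implicit. One citation to adjust in check (c): the Lindeberg primitives of Section~\ref{sect:primitives} certify Gaussian limits only for statistics built from the estimator, whereas the full-class frontier needs the whole experiment to have white-noise (Gaussian-shift) form, which Lemma~\ref{lem:pricecurve} and the paper simply assume.
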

\begin{proof}
For $|f''|\le M$ the modulus has exponent $r=4/5$ \citep{donoholow1992}, so
Lemma~\ref{lem:pricecurve} gives $s(u)=s(1)u^{5/4}$ and Proposition~\ref{prop:confprim}
turns this into the frontier $\Phi(s(1)u^{5/4}-z_{1-\alpha})$, valid for every honest test
and attained pointwise. The first reading follows since $s(1)u^{5/4}>0$ for $u>0$ and $\Phi$
is strictly increasing. 
The second since the same expression is the ceiling on any test's
power inside the window, while every test built from the estimator has power exactly
$\alpha$ there (Proposition~\ref{prop:sharp}), so the forgone power is their difference. 
The
third is Proposition~\ref{prop:sharp} applied to the estimator's limit experiment, where the
bias-aware interval's power is $\Phi\big((|\xi|-2\bar b)_+/\sigma-z_{1-\alpha/2}\big)$,
below the one-sided frontier by exactly the critical-value gap.
\hfill
\end{proof}

The separation $s(u)$ has a single definition. 
Fix a target gap $2u\bar B$ and ask how
close in distribution the closest admissible pair carrying that gap can be. Honesty is what
makes the infimum operative. 
A uniform size promise binds at whichever admissible pair is
hardest to distinguish, and $s(u)$ measures how hard it binds.
The two leading settings are the same infimum with different values. In honest
difference-in-differences the violation is a free additive shift in the space of the
observables, so a target move of up to $2\bar B$ is cancelled exactly: the confounded worlds
share one distribution, and $s\equiv0$. In regression discontinuity moving the jump by
$\Delta$ requires bending the regression function against the curvature bound, so the two
mean functions differ by a residual profile $g$ with $\norm{g}_2\asymp M^{-1/4}\Delta^{5/4}$,
and $s(u)\asymp u^{5/4}$.

The residual is why the estimator's scope matters in regression discontinuity and not in
difference-in-differences. The reported estimate $\hat\tau=\sum_iw_iy_i$ (the hat distinguishing it from the
detectability $\tau(\cdot)$) is nearly blind to
$g$ (the least-favourable pair is built that way), but the weighting $w_i=g(x_i)$ attains
the full mean gap $\sqrt n\,\norm{g}_2/\sigma$, the largest any weighting
achieves.\footnote{Optimality is elementary: the two worlds are equal-variance
Gaussians whose means differ by $(g(x_i))_i$, so the log likelihood ratio is linear with
coefficients $g(x_i)$ and the $g$-weighted average is the Neyman--Pearson statistic.
Equivalently, a weighting $w$ has standardised mean gap
$\sum_i w_ig(x_i)/(\sigma\norm{w})\le\norm{g}/\sigma$ by Cauchy--Schwarz, with equality iff
$w\propto g$: off-$g$ components add noise but no signal.} Such detecting statistics are
poor estimates of the jump and no analyst would use one; they enter because the frontier is
an impossibility claim, only as strong as the tests it rules out. By
Proposition~\ref{prop:confprim} the ceiling binds every statistic, so $s(u)$ prices the
value of choosing the statistic freely: zero under exact confounding, at most
$\Phi(s(1)u^{5/4}-z_{1-\alpha})-\alpha$ in regression discontinuity
(Lemma~\ref{lem:pricecurve}); the attaining weighting varies with the alternative, and
whether one procedure meets the ceiling everywhere at once is beyond this paper's scope.

The factor model is the one non-convex instance. Modulus machinery does not apply, and
the detection-boundary contiguity of \citet{onatskimoreirahallin2013} supplies only the qualitative content $s(0)=0$
that Theorem~\ref{thm:lowerbound}(i) needs; no leading term is claimed. The one function
$s$ thus determines all three objects the paper reports: the power ceiling, through
$\tau(u)=\Phi(s(u)-z_{1-\alpha})-\alpha$ (Proposition~\ref{prop:confprim}); the dead zone,
the region where $s$ vanishes or stays near zero; and the honest interval's minimal length,
since $s$ is the rescaled inverse of the modulus that lower-bounds honest length
\citep{donoho1994statistical,low1997nonparametric}.

\section{Simulations}\label{sect:simulations}

This section illustrates the power loss of Theorems~\ref{thm:LocalCI}--\ref{thm:lowerbound}
across five distinct honest-inference problems. Each design pairs a conservative,
honest procedure with a conventional competitor that ignores the
worst-case bias; the difference-in-differences design adds an intermediate trend-adjusted
procedure, and the factor design a pre-test procedure, each described there. In every case the data are generated at a fixed true parameter
value, a level-$\alpha$ confidence interval is constructed once per Monte Carlo
replication, and the dual test is obtained by inversion, rejecting any value that falls
outside the interval. With the data generated at the true value $\beta_0$, sweeping the
local alternative $A_n=\beta_0+\xi\,n^{-\rho}$ and averaging the rejection indicator
$\mathbbm 1\{A_n\notin\mathcal B_n\}$ over replications traces the local-power curve as a
function of $\xi$. Throughout,
$\alpha=0.05$ and $z_{1-\alpha/2}$ denotes the standard normal quantile. Notation inside the
simulation subsections follows each setting's own literature ($\tau$ for the
regression-discontinuity jump, $\rho$ for the endogeneity correlation, $s$ and $w$ for trend
parameters), not the symbols of the theory sections; the distinction is clear from context.

The five designs span the regimes of Section~\ref{sect:main}. The interactive
fixed-effects, difference-in-differences, and plausibly-exogenous designs place
the bound at a constant (or sample-size-inflated) order relative to the
$n^{-1/2}$ sampling rate, illustrating the zero-power collapse of
Theorems~\ref{thm:LocalPower}--\ref{thm:LocalCI2}. The regression-discontinuity design is genuinely
nonparametric, with sampling error and worst-case bias of the same order,
illustrating the non-degenerate power loss of Theorem~\ref{thm:nondeg}. The
linear-functional design contrasts a parametric estimator with an honest one that
converges at a slower nonparametric rate, so the honest interval's $n^{1/2}$-rescaled
width diverges and its rejection probability is asymptotically flat in $\xi$: the dual
test cannot discriminate among $n^{-1/2}$ alternatives (Lemma~\ref{lem:containment}). In
the three bound-dominated designs the honest procedure attains size arbitrarily close to
zero away from its designed worst case, and exactly $\alpha$ at it (the kinked trend
configuration of the difference-in-differences design); in the regression-discontinuity and linear-functional designs, where the worst-case
bias and the standard error are of the same order, size is conservative but bounded away
from zero. In every case the formal price is the documented loss of local
power.\footnote{Replication code for all five designs is available from the author; the
empirical applications of Section~\ref{sect:applications} are reproduced by the script
described there.}

\subsection{Interactive fixed effects: no, weak, and strong factors}
\label{sect:simFactor}

The leading example is the panel regression with interactive fixed effects of
\cite{armstrong2022robust}. With $n=NT$, data are generated as
\begin{equation*}
  y_{it}=x_{it}\beta+c_n\,\lambda_i f_t+\varepsilon_{it},
  \qquad
  x_{it}=\lambda_i f_t+\eta_{it},
  \qquad i\le N,\ t\le T,
\end{equation*}
with $\lambda_i,f_t\sim\mathrm{i.i.d.}\,N(0,1)$, the product $\lambda_i f_t$
normalised to unit variance, $\varepsilon_{it},\eta_{it}\sim\mathrm{i.i.d.}\,N(0,1)$,
$\beta=1$, and $T=\lfloor N^{3/5}\rfloor$. The scalar $c_n$ controls the
strength of the omitted factor component in the outcome, the channel through which the
regressor's factor structure confounds $\beta$: $c_n = 0$ removes it entirely
(Figure~\ref{fig:Factors} top panel), $c_n=n^{-1/3}$ yields a
\emph{weak} factor (Figure~\ref{fig:Factors} middle panel) and $c_n=1$ a \emph{strong}
factor (Figure~\ref{fig:Factors} bottom panel). Local alternatives are taken at the
parametric rate, $A_n=\beta\pm\xi/\sqrt{n}$.

\paragraph{Estimators (Figure~\ref{fig:Factors}).} Three procedures are compared; the first
two allow \emph{two} factors, one more than the truth ever contains: the conventional
\emph{factor} estimator, the iterated least squares of \cite{Bai2009} with interval
$\hat\beta\pm z_{1-\alpha/2}\,\mathrm{se}(\hat\beta)$; the honest \emph{bias-aware} interval
of \cite{armstrong2022robust},
$\{\hat\beta_{ba}\pm[\bar B+z_{1-\alpha/2}\,\mathrm{se}(\hat\beta_{ba})]\}$ with $\bar B$ the
implied worst-case bias bound; and a \emph{pre-test} estimator. The pre-test is the
eigenvalue ratio of \cite{ahn2013eigenvalue} applied to the least-squares residuals, a mock
zeroth eigenvalue permitting $\hat R_{strong}=0$; it consistently counts the strong factors
and only those, sub-threshold factors being not consistently detectable by any test
\citep{onatskimoreirahallin2013}. Its output is used twice. As a stand-alone estimator it reports least
squares when $\hat R_{strong}=0$ and the factor estimator with $\hat R_{strong}$ factors
otherwise, each with its conventional interval. Inside the bias-aware interval it sets the
declaration: the $\hat R_{strong}$ certifiably strong factors are exempt and every remaining
slot is declared weak, $R_{weak}=\max\{R-\hat R_{strong},0\}$, so one weak factor is declared under
the strong design and two under the weak and no-factor designs, and the premium $\bar B$ is
wider where less can be certified.

\paragraph{Results.} Under the weak factor the factor estimator's interval undercovers and
its dual test over-rejects the truth, while under the strong factor the bias-aware interval
is conservative even though the true bias is negligible: the estimated factors absorb the
true strong one, and the premium is insuring a \emph{second}, hypothetical weak factor the
data cannot rule out; one identifiable factor in hand, the analyst can still defend against
the next. With no factor the factor estimator is nominally sized with the usual Gaussian
power, and the bias-aware interval pays the dead-zone premium with nothing to insure. The
no-factor and weak-factor designs are the contiguous pair: no test can establish which of
the two generated the panel, so the premium cannot be avoided even in principle; and under
the declaration rule it is state-dependent, the dead zone wider in the no-factor and weak
panels (both slots declared weak) than under the strong factor (one exempt): detection
releases budget, it never eliminates it. The stand-alone pre-test makes the same point from
the other side: with no factor it selects $\hat R_{strong}=0$ and is nominally sized; under
the strong factor it coincides with the factor estimator; under the weak factor its
detection rate is essentially zero, it collapses to least squares, and, the estimated factor
directions being uninformative below the threshold, factor estimator, least squares, and
pre-test are effectively one biased procedure with indistinguishable rejection curves. The
eigenvalue statistic thus plays two roles with opposite standing: counting strong factors,
where it is consistent and legitimately releases bias budget, and certifying the absence of
weak ones, where no procedure can succeed; selection substitutes for estimation of the
strong structure, never for the honest widening, and the safe direction remains
overspecification \citep{MoonWeidner2015}.

\begin{figure}[!ht]
    \centering
    \includegraphics[width=0.32\linewidth, height = 5cm]{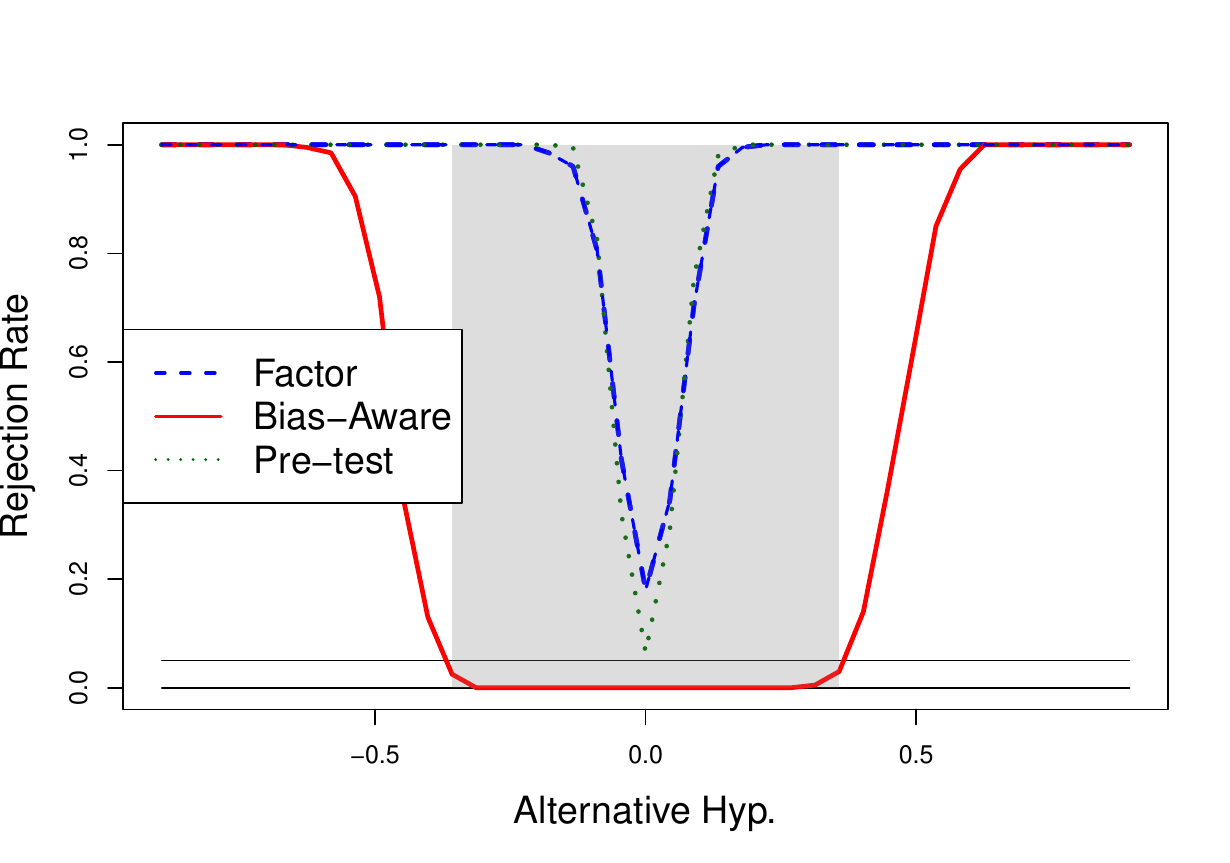}
    \includegraphics[width=0.32\linewidth, height = 5cm]{NoPrateNonRejectionRegionsN=200T=24.pdf}
    \includegraphics[width=0.32\linewidth, height = 5cm]{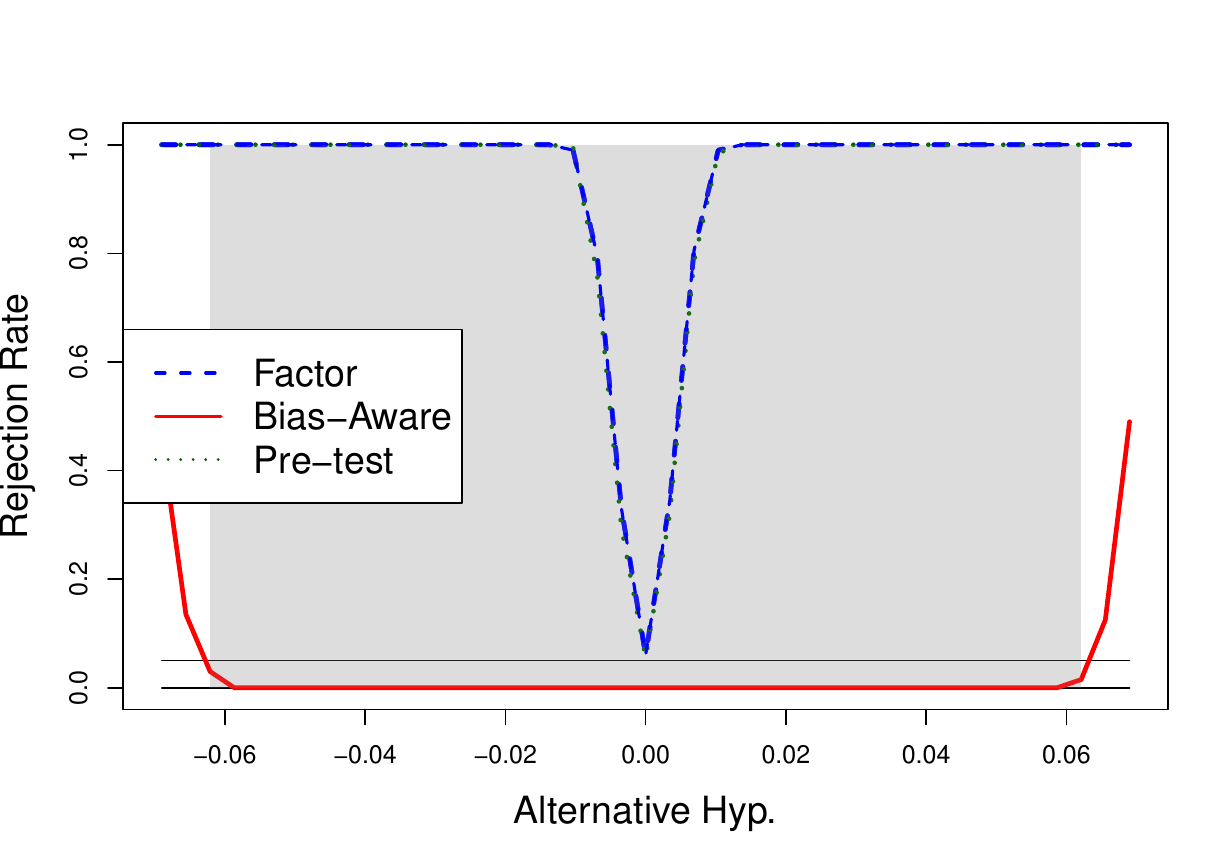}
    \centering
    \includegraphics[width=0.32\linewidth, height = 5cm]{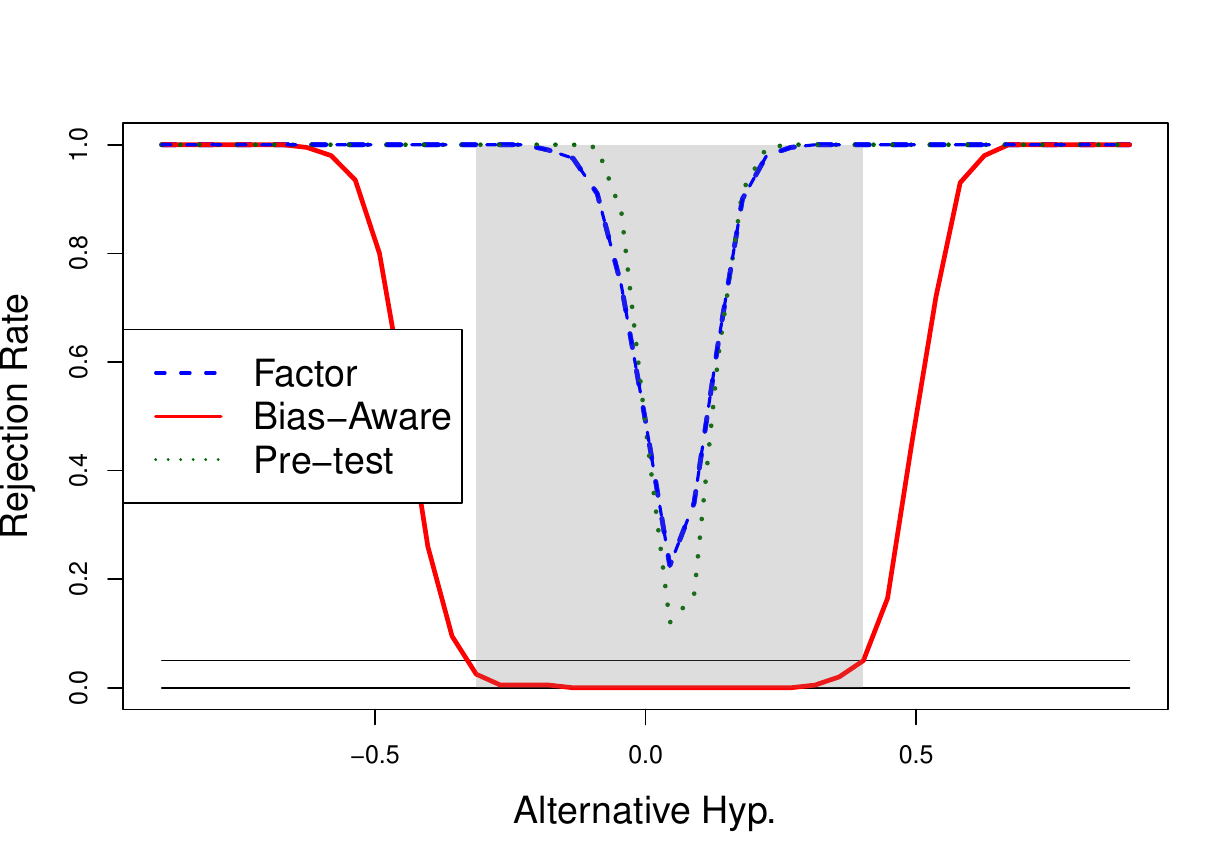}
    \includegraphics[width=0.32\linewidth, height = 5cm]{WeakPrateNonRejectionRegionsN=200T=24.pdf}
    \includegraphics[width=0.32\linewidth, height = 5cm]{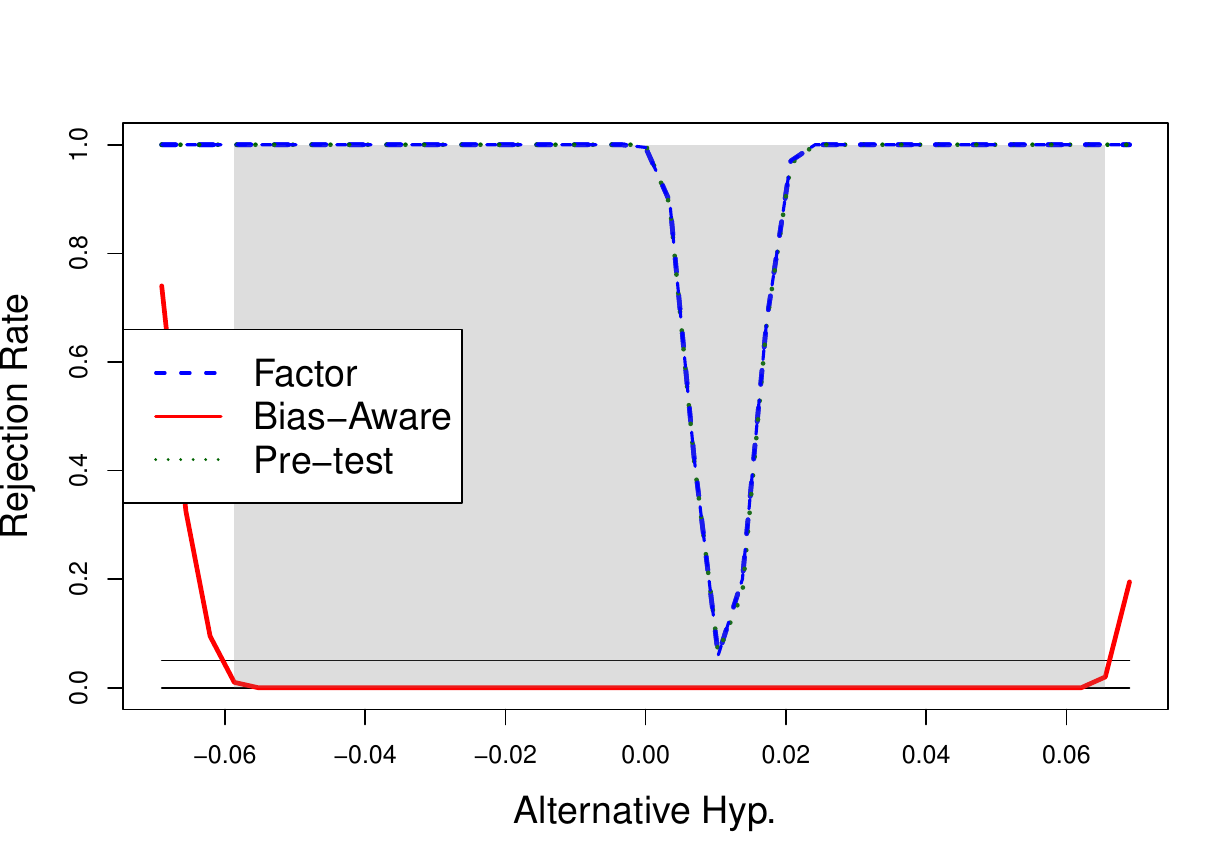}
    \centering
    \includegraphics[width=0.32\linewidth, height = 5cm]{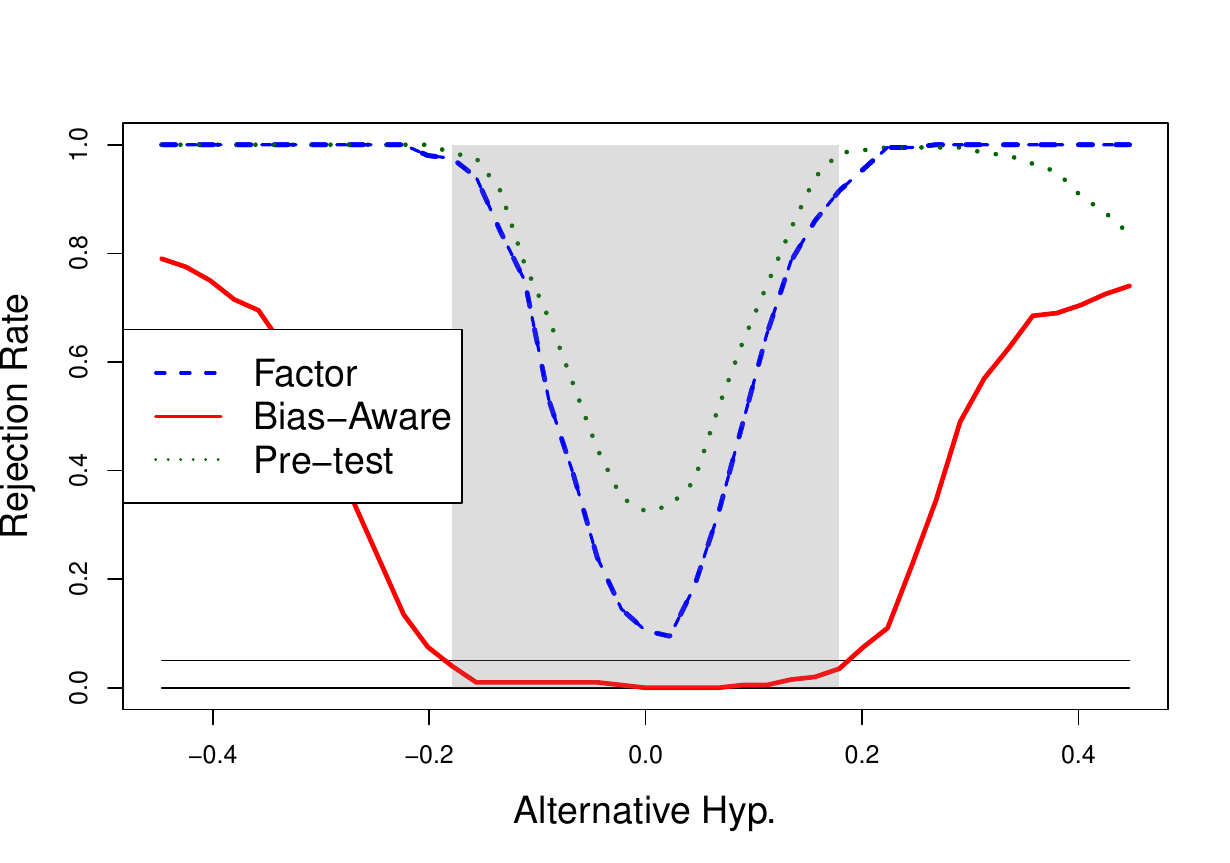}
    \includegraphics[width=0.32\linewidth, height = 5cm]{StrongPrateNonRejectionRegionsN=200T=24.pdf}
    \includegraphics[width=0.32\linewidth, height = 5cm]{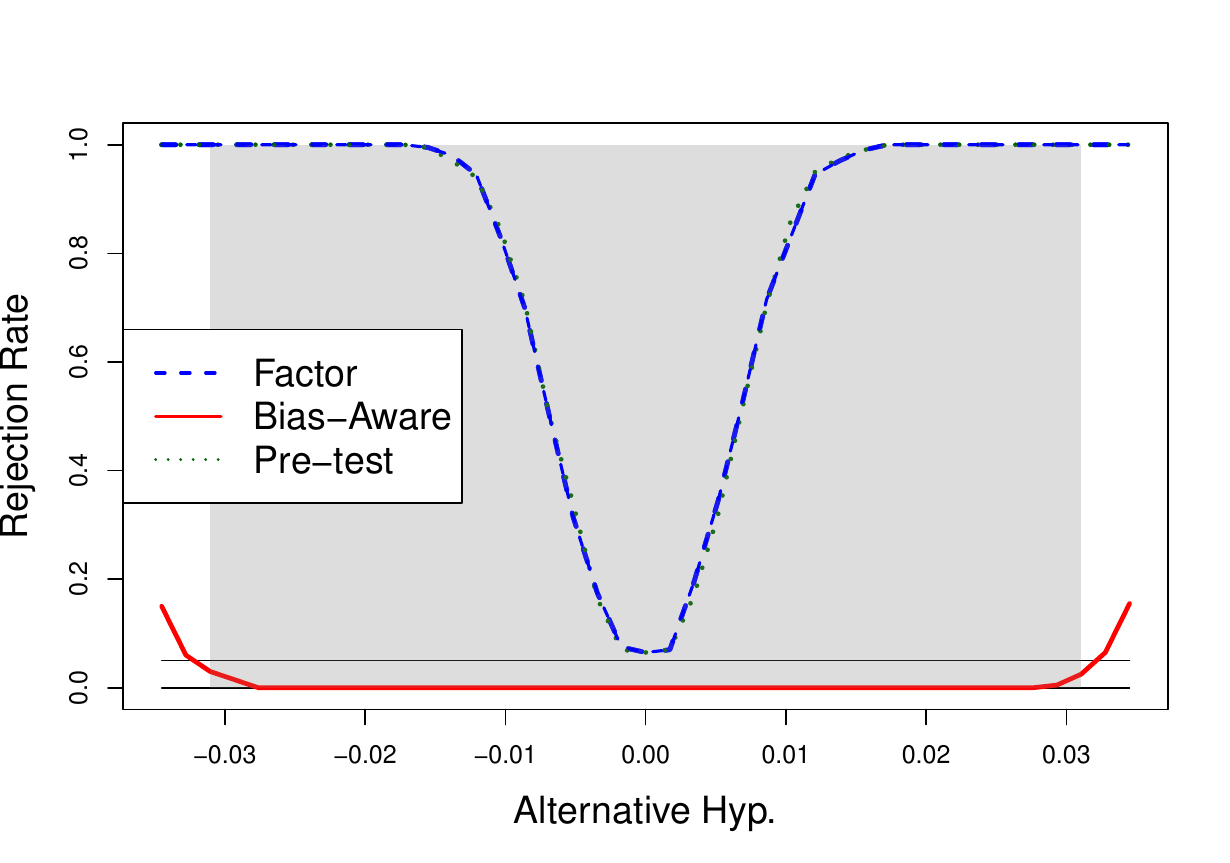}
    \caption{Factor model $n^{-1/2}$ local alternatives. 
    Top to bottom: No; Weak; Strong factor.
    Dashed blue line: Factor model;
    Solid red line: bias-aware intervals;
    Dotted green line: pre-test (eigenvalue ratio test);
    Grey area: dead-zone (both factors declared weak in the no and weak rows; one in the
    strong row, the detected strong factor being exempt).
    Left to right: $(N,T) \in \{(50, 10),(200, 24),(1200, 70)\}$.
    }
    \label{fig:Factors}
\end{figure}

\subsection{Honest difference-in-differences}
\label{subsec:did}

The second design is the honest difference-in-differences setting of \cite{rambachan2023more}. An event study with three pre-treatment periods and one
post-treatment period is summarised by its coefficient vector
$\hat\beta=(\hat\beta_{\mathrm{pre}}^\top,\hat\beta_{\mathrm{post}})^\top\in\mathbb{R}^4$,
generated directly as a Gaussian draw
\begin{equation*}
  \hat\beta\sim N\!\left(\mu,\ \Sigma_0/N\right),
  \qquad
  \mu=\big(-3s,\ -2s,\ -s,\ \beta_{\mathrm{post}}+s+w\big),
  \qquad \Sigma_0=I_4,
\end{equation*}
where $\beta_{\mathrm{post}}=0$ is the true post-period effect and the entries of
$\mu$ encode a linear differential trend $\delta_t=s\cdot t$ over event time
$t\in\{-3,-2,-1,1\}$, plus a second-difference kink $w$ at $t=1$, the reference
period $t=0$ being normalised out. The scalars
$s\ge 0$ and $w\ge0$ index the strength of the linear pre-trend and of the deviation
from linearity, and the conventional standard
error is of order $N^{-1/2}$. The target is the post-period effect,
$\theta=\ell^\top\beta_{\mathrm{post}}=\beta_{\mathrm{post}}$ with $\ell=1$, whose true value
remains $\beta_{\mathrm{post}}$ for every $(s,w)$. Local alternatives are taken at
$A_n=\beta_{\mathrm{post}}\pm\xi/\sqrt N$, and the sample sizes are
$N\in\{100,500,2500\}$.

Setting $s=w=0$ recovers exact parallel trends, in which the honest interval is
needlessly conservative. For $s>0$ with $w=0$ there is a linear pre-trend: the
pre-treatment coefficients trend linearly and the post-period coefficient is
contaminated by $\delta_1=s$. Because a linear trend has zero second differences,
it lies in $\Delta^{SD}(M)$ for \emph{every} $M\ge 0$ (here set to $M =0.2$). The honest interval
extrapolates the estimated pre-trend and continues to cover the true effect
$\beta_{\mathrm{post}}$, whereas the conventional event-study interval ignores the
pre-periods and centres on the biased post coefficient $\beta_{\mathrm{post}}+s$,
so its coverage of the truth collapses as $N$ grows.
Note the honest interval is not the event-study interval widened about a common centre, as in
the plausibly-exogenous design below: the extrapolation \emph{recentres} it at the truth, so the
two non-rejection windows separate, and with $s=M$ the event-study window collapses onto the
upper edge of the honest window. This is the upper endpoint of the local identified set $[-M,M]$.

The design therefore includes a third, intermediate procedure that realises the recentring
without the widening: the \emph{trend-adjusted} interval, the fixed-length interval over
$\Delta^{SD}(0)$, which extrapolates the estimated linear pre-trend with no allowance for
curvature. This is the parametric linear adjustment common in applied work
\citep{dobkin2018economic,bhuller2013broadband}, corresponding to $M=0$ in the class of
\citet{rambachan2023more}. Under an exactly linear trend it is correctly centred and nominally
sized and retains full local power. 
It claims the power the honest interval forgoes by
asserting the untestable $w=0$, exactly as the event study asserts $\delta=0$, the same
wager, one derivative up. The kinked configuration ($s=0$, $w=M$) prices that wager. It is the
least-favourable path of Lemma~\ref{lem:exactconf}(ii), flat pre-periods with $\delta_1=M$, on
which the extrapolation of the flat pre-trend is zero, so the conventional and trend-adjusted
intervals are \emph{both} miscentred by $M$ and reject the truth with probability approaching
one, while the honest interval, operating at its designed worst case, spends exactly its size
$\alpha$ at the truth: its undetected window is the one-sided $[0,2M]$, the $c=1$ sliding
window of Theorem~\ref{thm:nondeg}(i). Figure~\ref{fig:did} reports
the strong pre-trend ($s=0.2$, $w=0$), the kinked violation ($s=0$, $w=0.2$), and no pre-trend
($s=w=0$).

\paragraph{Estimators (Figure~\ref{fig:did}).} 
Terms $M$ and $\Delta^{SD}(M)$ in the following are defined in \cite{rambachan2023more}. 
The conventional \emph{event-study} interval is the
standard confidence set that imposes parallel trends ($\delta=0$),
$\hat\theta\pm z_{1-\alpha/2}\,\mathrm{se}(\hat\theta)$. The \emph{trend-adjusted}
interval is the fixed-length interval over $\Delta^{SD}(0)$, computed with the same
machinery at $M=0$. The \emph{honest}
interval is the fixed-length confidence interval over the smoothness class
$\Delta^{SD}(M)$, which bounds the second differences of the differential trend
by a user-chosen $M$ (here $M=0.2$). The honest interval is centred at the
extrapolation-corrected estimator and widened by the worst-case post-period bias that a
violation within $\Delta^{SD}(M)$ can induce; as $N$ grows with $M$ fixed, this constant-order
term dominates the shrinking standard error.

\begin{figure}[!ht]
  \centering
  \includegraphics[width = 0.85\textwidth, height = 5cm]{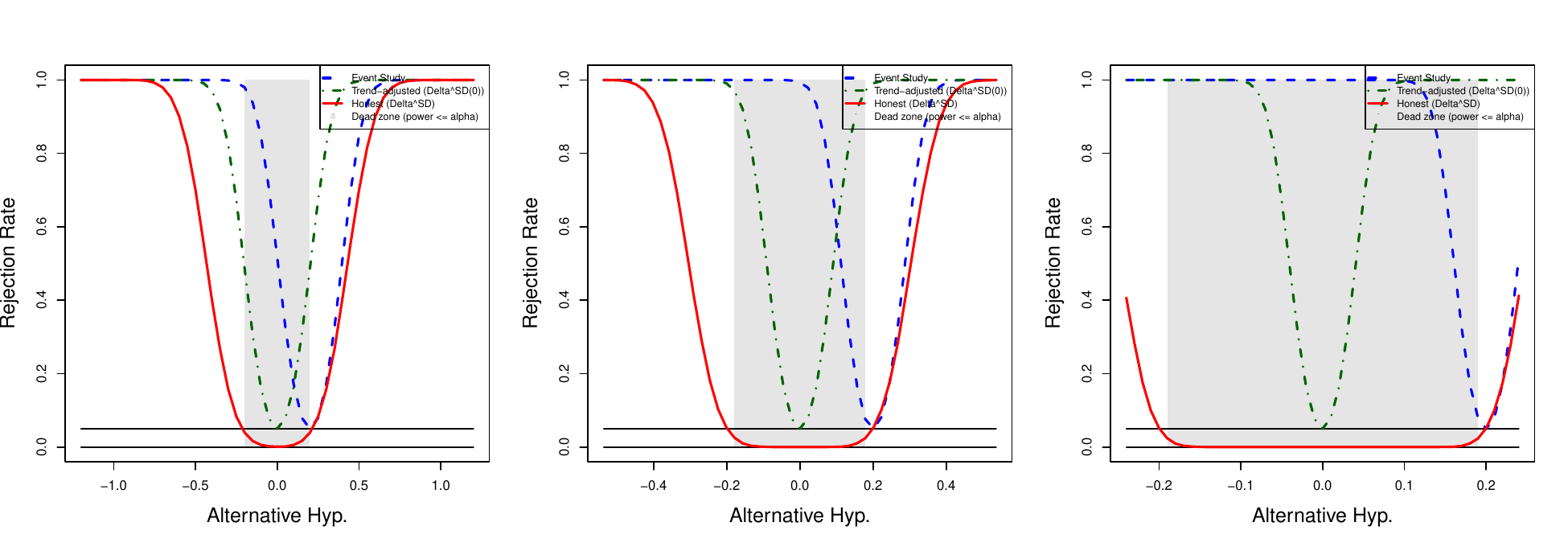}
  \centering
  \includegraphics[width = 0.85\textwidth, height = 5cm]{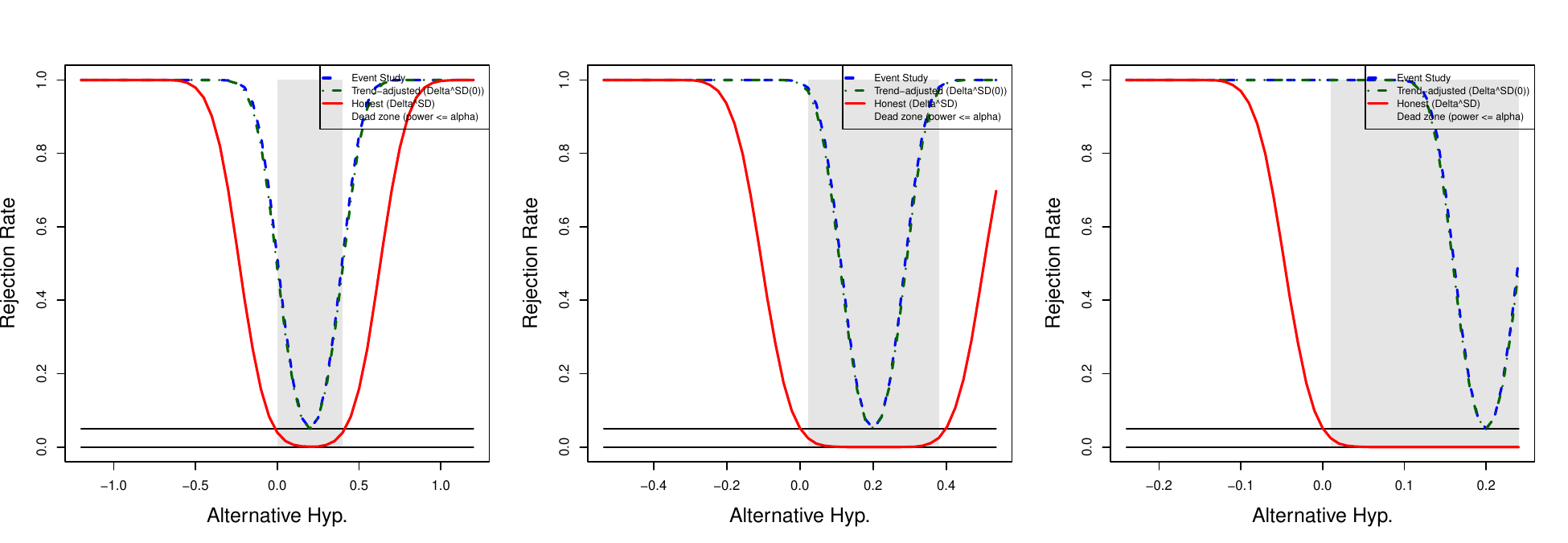}
  \centering
  \includegraphics[width = 0.85\textwidth, height = 5cm]{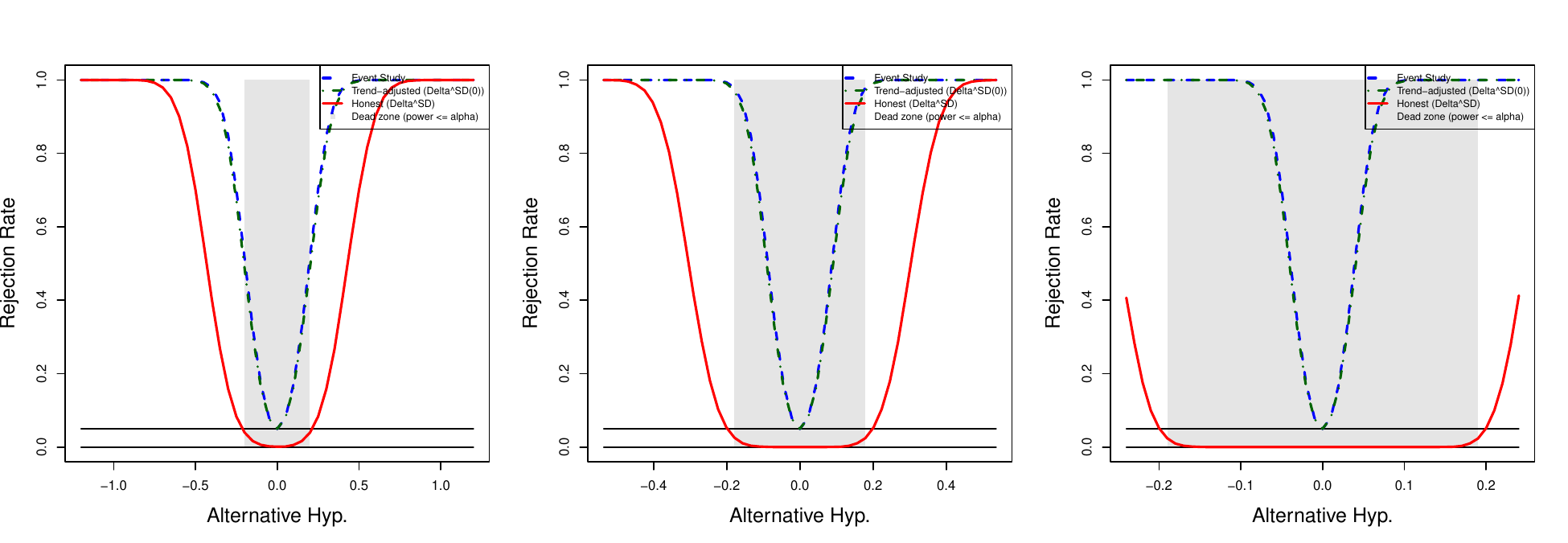}
  \caption{Honest DiD rejections for $n^{-1/2}$ local
  alternatives. Left to right: $N\in\{100,500,2500\}$.
  Blue dashed: event-study interval;
  green dot-dash: trend-adjusted $\Delta^{SD}(0)$ interval;
  red solid: honest $\Delta^{SD}(M)$ interval, $M=0.2$.
  Shaded: the dead zone $\{$power $\le\alpha\}$.
  Top to bottom: strong pretrend; kinked violation; no pretrend.
  With strong pretrends the event-study window rests at the honest region's right
  edge: the honest estimator extrapolates the trend away and is centred at the truth, while the
  event-study interval remains centred at its bias $s$, which here equals $M$; the
  trend-adjusted interval is correctly centred and nominally sized, its full power bought by
  asserting the untestable $w=0$. In the kinked row that wager fails: the event-study and
  trend-adjusted windows coincide, both miscentred by $w=M$, while the honest interval, at its
  designed worst case, spends exactly its size $\alpha$ at the truth.
  }
  \label{fig:did}
\end{figure}

\subsection{Honest regression discontinuity}
\label{subsec:rd}

The third design is the honest sharp regression-discontinuity setting of
\citet{armstrong2018optimal, armstrong2020simple}, with the Monte Carlo design of
their supplemental. 
Running variable $x_i\sim\mathrm{Uniform}[-1,1]$,
$\varepsilon_i\sim\mathrm{i.i.d.}\,N(0,\sigma^2)$ with $\sigma^2=0.1295$, and
\begin{equation*}
  y_i=\tau\,\mathbf{1}\{x_i\ge 0\}+ a\,f(x_i)+\varepsilon_i,
\end{equation*}
where $\tau=0$ is the true jump and $f(x)=C\,\operatorname{sign}(x)\,x^2$ ($C=1$)
is the least-favourable odd quadratic, saturating $|f''|=2C$ everywhere, scaled by
an amplitude $a\in[0,1]$. The sample sizes are $n\in\{500,2500,12500\}$ and, the
estimator being nonparametric, local alternatives are taken at the rate
$A_n=\tau\pm\xi\,n^{-2/5}$.

\paragraph{Estimators (Figure~\ref{fig:rd}).} Both intervals use the triangular-kernel
local-linear estimate $\hat\tau$ at a common bandwidth $h$ (H\"older class
$\{|f''|\le M\}$, $M=2C$). The
\emph{conventional} interval $\hat\tau\pm z_{1-\alpha/2}\,\mathrm{se}(\hat\tau)$
drops the smoothing bias, whereas the \emph{honest} interval replaces $z_{1-\alpha/2}$ by the bias-aware value
$\mathrm{cv}_\alpha(\overline{\mathrm{bias}}/\mathrm{se})$, with
$\overline{\mathrm{bias}}$ the worst-case bias over the class. At the optimal bandwidth the
bias-to-standard-error ratio $\overline{\mathrm{bias}}/\mathrm{se}$ is a bounded
constant ($\approx 0.5$), so the bias-aware critical value is only modestly above
$z_{1-\alpha/2}$ \citep[$2.18$ rather than $1.96$ at the $n^{-2/5}$ rate;][]{armstrong2020simple}.
RD is therefore the matched-rate case of Theorem~\ref{thm:nondeg} ($\delta=\rho=\epsilon=2/5$),
in which conservatism costs only a \emph{bounded} amount.
When $a=1$ the curvature
saturates the bound, so the conventional interval under-covers (coverage
$\approx 0.92$; honest needed), while when $a=0$ the design is linear, the
conventional interval is correct, and the honest interval is only
$\approx 11\%$ wider.
The much larger under-coverage documented by
\citet{armstrong2018optimal} arises with data-driven bandwidths that mis-estimate curvature; here only the honest/conventional margin at a common bandwidth is explored, for clarity.

\begin{figure}[!ht]
  \centering
  \includegraphics[width=0.85\textwidth, height = 5cm]{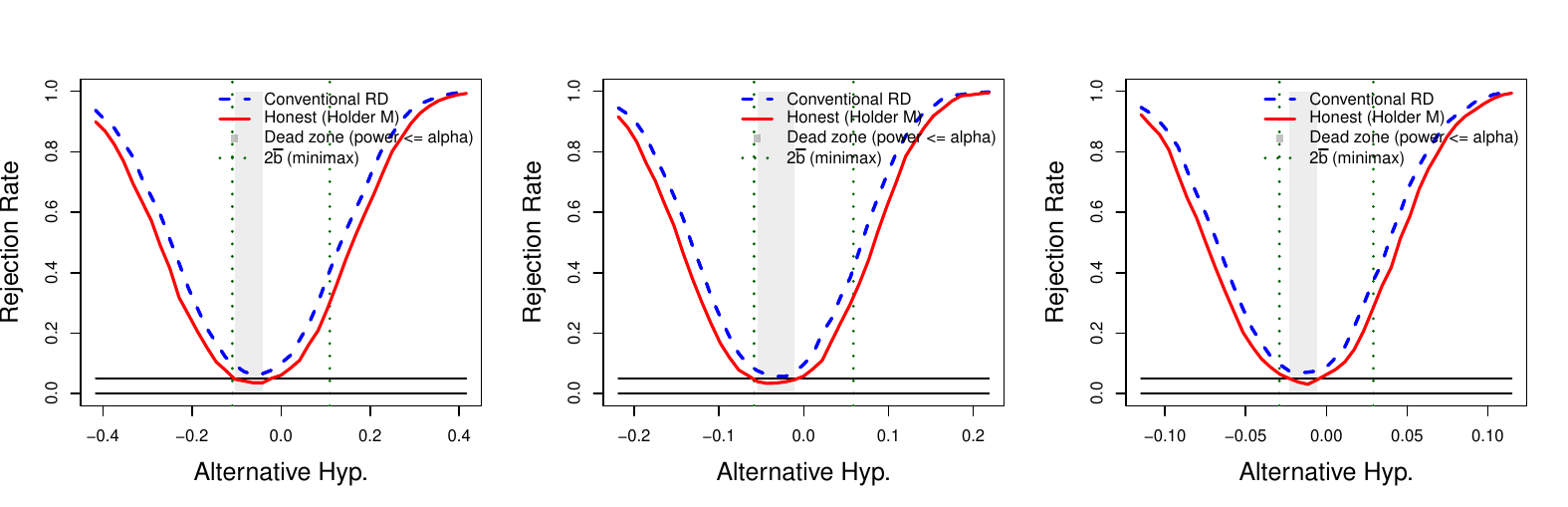}
  \\
  \centering
  \includegraphics[width=0.85\textwidth, height = 5cm]{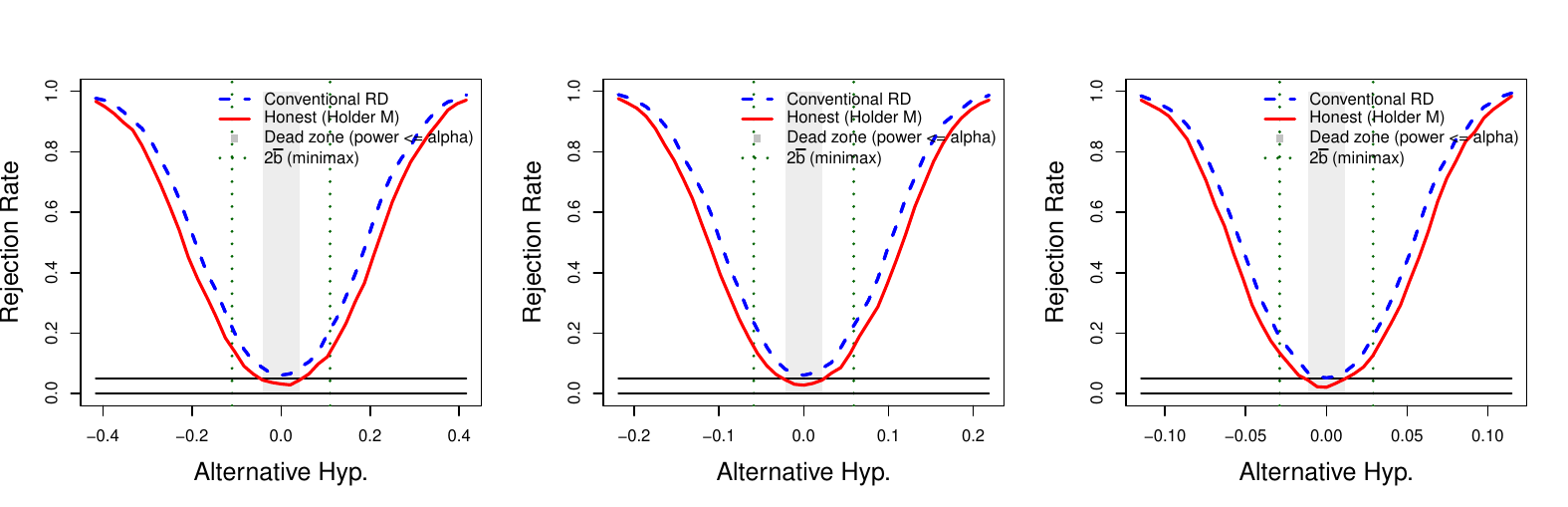}
  \caption{Honest RD rejection rates for $n^{-2/5}$ local
  alternatives. Left to right: $n\in\{500, 2500, 12500\}$. 
  Top to bottom: $a=1$ curvature saturates, $a = 0$ linear.
  Blue dashed:
  conventional local-linear interval; red solid: honest H\"older interval.
  Shaded: the realised dead zone $\{$power $\le\alpha\}$, the alternatives the honest test cannot
  detect (where the red curve sits on/below $\alpha$); green dotted lines: $\pm2\bar b$, the
  least-favourable dead-zone half-width for tests built from the estimator
  (Proposition~\ref{prop:sharp};
  in RD the confounding is approximate, so the frontier over all tests is the smooth one of
  Proposition~\ref{prop:confprim}). At
  matched rate ($\delta=\rho=\epsilon$) both are \emph{bounded}, the bounded cost of
  conservatism.}
  \label{fig:rd}
\end{figure}

\subsection{Estimation of a linear functional}
\label{subsec:donoho}

The fourth design is the stylised problem of \cite{donoho1994statistical}, and
unlike the others the honest procedure here converges at a \emph{slower rate}
than the conventional one. Design points $x_i\sim\mathrm{Uniform}[0,1]$,
$i=1,\dots,n$, generate
\begin{equation*}
  y_i=f(x_i)+\varepsilon_i,\qquad
  f(x)=\theta+a\,C\,\min(x,h),\qquad \varepsilon_i\sim\mathrm{i.i.d.}\,N(0,\sigma^2),
\end{equation*}
with $\sigma=1$ and target the boundary value $f(0)=\theta=0$. The term
$a\,C\min(x,h)$ is a Lipschitz wedge of amplitude $a\in[0,1]$: it rises over the window
$[0,h]$, whose width equals the honest bandwidth $h\sim n^{-1/3}$ defined below, and sits
flat at the level $a\,C\,h$ beyond it, so only its deviation from that flat bulk level is
confined to the window. Sample sizes are $n\in\{500,2500,12500\}$ and local alternatives are taken
at the \emph{parametric} rate $A_n=\theta\pm\xi/\sqrt n$.

\paragraph{Estimators (Figure~\ref{fig:donoho}).} The conventional \emph{parametric} estimator assumes the
approximately linear model of \citet[\S9.1]{donoho1994statistical} holds exactly
($f$ affine).
It is the OLS intercept $\hat\theta_{ols}$ with interval
$\hat\theta_{ols}\pm z_{1-\alpha/2}\,\mathrm{se}(\hat\theta_{ols})$ and
$\mathrm{se}$ of order $(n^{-1/2})$. The \emph{honest} estimator only assumes $f$ lies
in the Lipschitz class $\{|f(x)-f(x')|\le C|x-x'|\}$ ($C=2$).
It is the one-sided
local average over $[0,h]$ at the MSE-optimal bandwidth $h=(\sigma^2/(C^2 n))^{1/3}$,
widened by the modulus-of-continuity bias $C\,h$, $\hat\theta_{loc}\pm(C\,h+z_{1-\alpha/2}\,\mathrm{se}(\hat\theta_{loc}))$,
with the slower nonparametric rate $\mathrm{se}$ of order $(n^{-1/3})$. When $a=0$ the
function is flat and honesty is not needed, and the parametric interval is valid and
sharper, the honest interval needlessly wide.
When $a>0$ the OLS line extrapolates the bulk level $a\,C\,h$ and undercovers the
boundary value $f(0)$, while the honest interval stays valid. Crucially the wedge is
\emph{untestable}: its deviation from the flat bulk level is confined to $[0,h]$, perturbing
only $\sim nh=n^{2/3}$ observations
at magnitude $\sim n^{-1/3}$, so no specification test detects it. Indeed a $y\sim x+x^2$
test keeps size $0.05$ for every $n$, and the conservatism cannot be sidestepped by
first testing for non-linearity, unlike a \emph{global} quadratic whose curvature is
$\sqrt n$-estimable. In both cases the honest interval shrinks only at $n^{-1/3}$, so
against $n^{-1/2}$ alternatives its rejection probability is asymptotically flat in $\xi$
(Lemma~\ref{lem:containment}): no local discrimination, precisely the price of robustness.

\begin{figure}[!ht]
   \centering
  \includegraphics[width = 0.85\textwidth, height = 5cm]{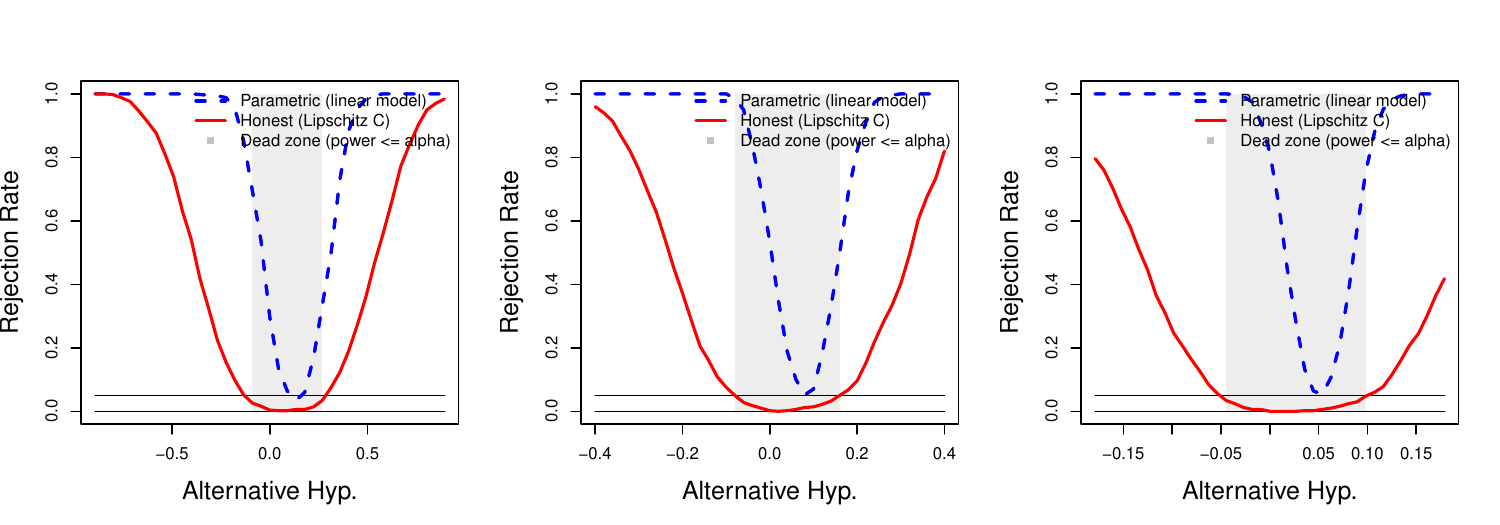}
  \centering
  \includegraphics[width = 0.85\textwidth, height = 5cm]{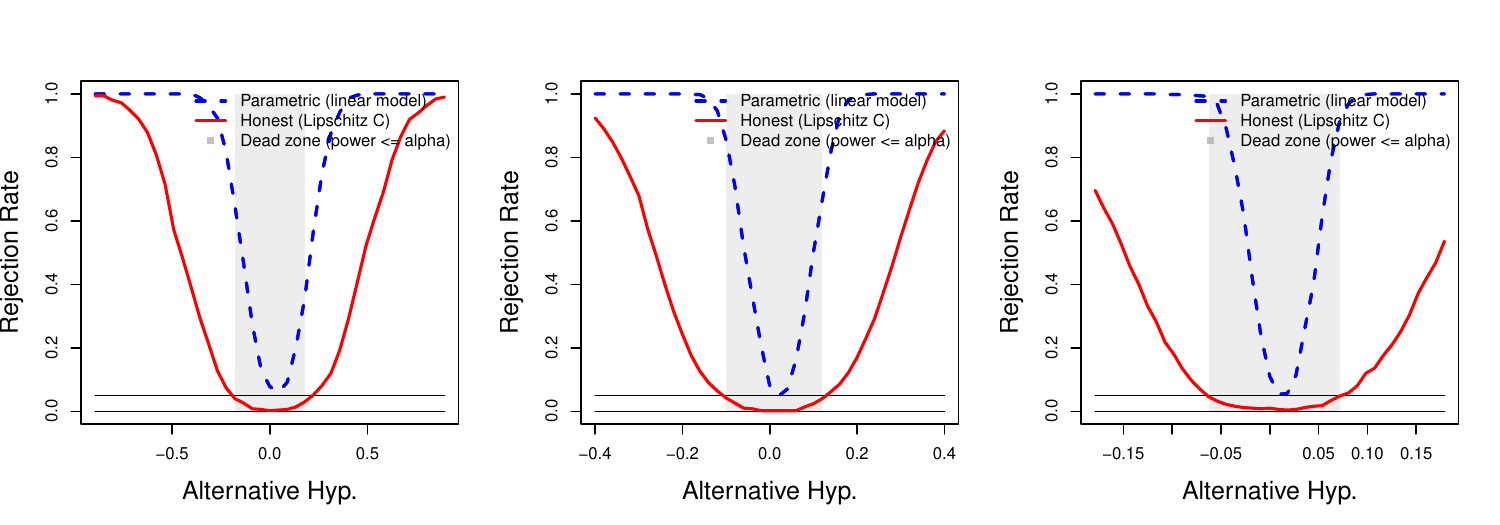}
  \centering
  \includegraphics[width = 0.85\textwidth, height = 5cm]{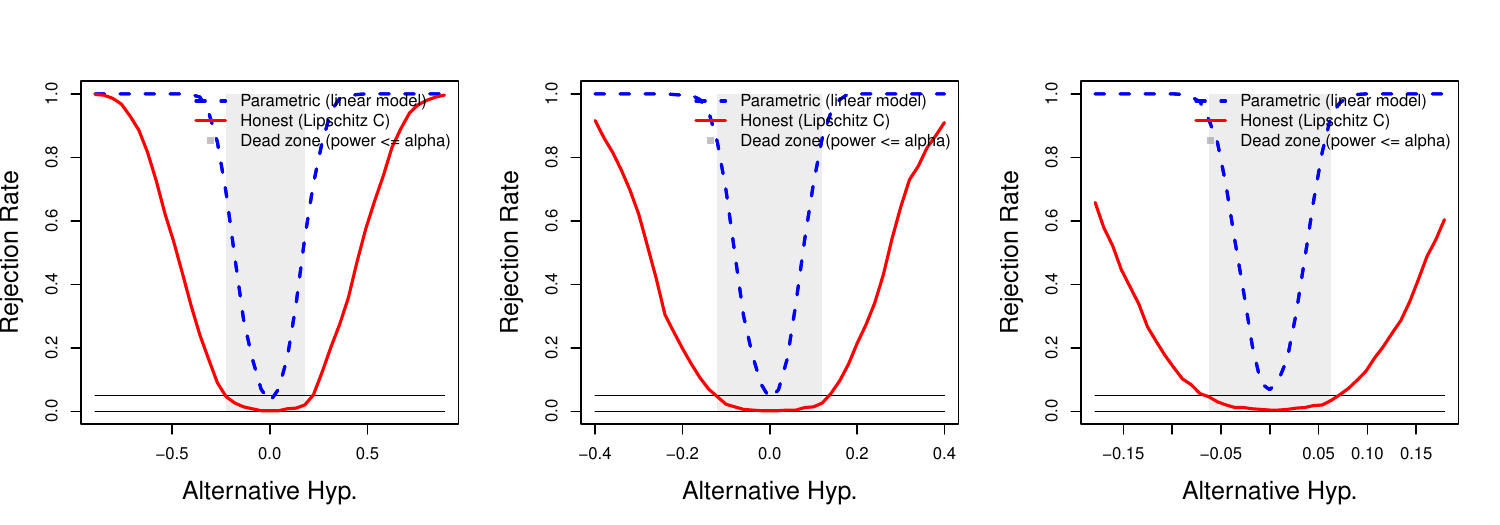}
  \caption{Rejection rates for $n^{-1/2}$ local alternatives.
  Left to right: $n\in\{500,2500,12500\}$. Blue dashed: OLS interval
  ($n^{-1/2}$); red solid: honest Lipschitz interval, $C=2$
  ($n^{-1/3}$). Localised wedge ($a>0$, honest needed): OLS rejects the truth;
  flat design ($a=0$, honest not needed): OLS valid and sharper.
  Shaded: the dead zone $\{$power $\le\alpha\}$; the honest interval shrinks only at
  $n^{-1/3}$, so in $n^{-1/2}$ units the zone widens across panels.
  Top to bottom: $a = 1$, $a = 0.25$, $a = 0$.
  }
  \label{fig:donoho}
\end{figure}

\subsection{Plausibly exogenous instrumental variables}
\label{subsec:plausexog}

The final design is the plausibly-exogenous instrumental-variables setting of
\cite{conley2012plausibly}. With a single instrument $z_i$, endogenous
regressor $x_i$, and structural outcome $y_i$,
\begin{equation*}
  z_i\sim N(0,1),\quad
  x_i=\pi z_i+v_i,\quad
  y_i=\beta x_i+\gamma z_i+\varepsilon_i,\quad
  \varepsilon_i=\rho\, v_i+\sqrt{1-\rho^2}\,\nu_i,
\end{equation*}
where $v_i,\nu_i\sim\mathrm{i.i.d.}\,N(0,1)$, first-stage strength $\pi=1$,
endogeneity $\rho=0.6$, and structural coefficient $\beta=1$. The analyst relaxes the
exclusion restriction to $|\gamma|\le\bar\gamma$ with $\bar\gamma=0.30$. The knob is the
\emph{true} exclusion violation, $\gamma\in\{0.3,\ 0.05,\ 0\}$: a \emph{strongly endogenous},
\emph{weakly endogenous}, and \emph{exogenous} instrument, respectively (first-stage strength
is held at $\pi=1$ throughout). Under the strongly endogenous instrument
($\gamma=0.3$, the violation at the bound) the conventional IV estimator is miscentred by
$\gamma/\pi=0.3$ and honesty is needed; under the exogenous instrument ($\gamma=0$) the
exclusion restriction in fact holds, the instrumental-variables analogue of the strong-factor
design, and the honest interval is conservative despite negligible true bias. Sample sizes are
$n\in\{100,500,2500\}$ and local alternatives are at the parametric rate
$A_n=\beta\pm\xi/\sqrt n$.

\paragraph{Estimators (Figure~\ref{fig:plausexog}).} Conventional \emph{standard IV} imposes
$\gamma=0$, $\hat\beta_{IV}=(z^\top y)/(z^\top x)$, with interval
$\hat\beta_{IV}\pm z_{1-\alpha/2}\,\mathrm{se}(\hat\beta_{IV})$. The \emph{honest}
interval is the union of confidence intervals of \cite{conley2012plausibly}: for
each admissible $g\in[-\bar\gamma,\bar\gamma]$ one forms the IV interval for
$\beta$ from the netted outcome $y-g z$,
$\hat\beta(g)=z^\top(y-g z)/(z^\top x)$, and takes the union
$\big[\min_{|g|\le\bar\gamma}\mathrm{lb}(g),\ \max_{|g|\le\bar\gamma}\mathrm{ub}(g)\big]$.
The union widens the interval by an amount of order $\bar\gamma\,(z^\top z)/(z^\top x)\approx\bar\gamma/\pi$,
a constant that dominates the $O(n^{-1/2})$ sampling error as $n$ grows.

\begin{figure}[!ht]
  \centering
  \includegraphics[width = 0.85\textwidth, height = 5cm]{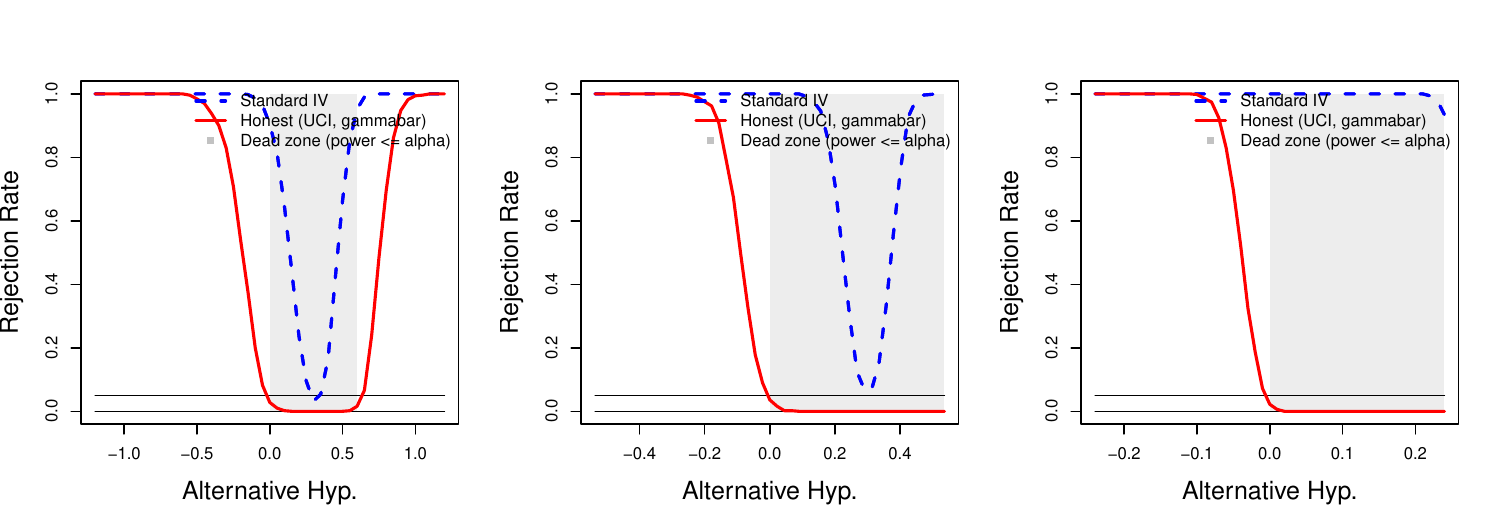}
  \centering
  \includegraphics[width = 0.85\textwidth, height = 5cm]{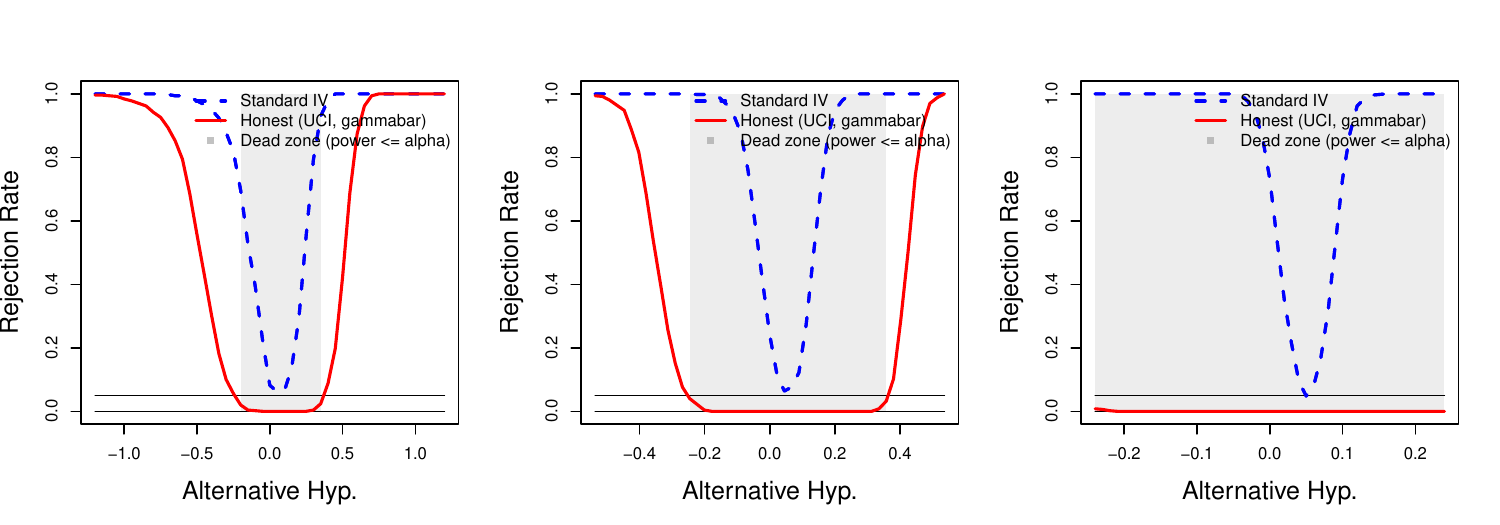}
  \centering
  \includegraphics[width = 0.85\textwidth, height = 5cm]{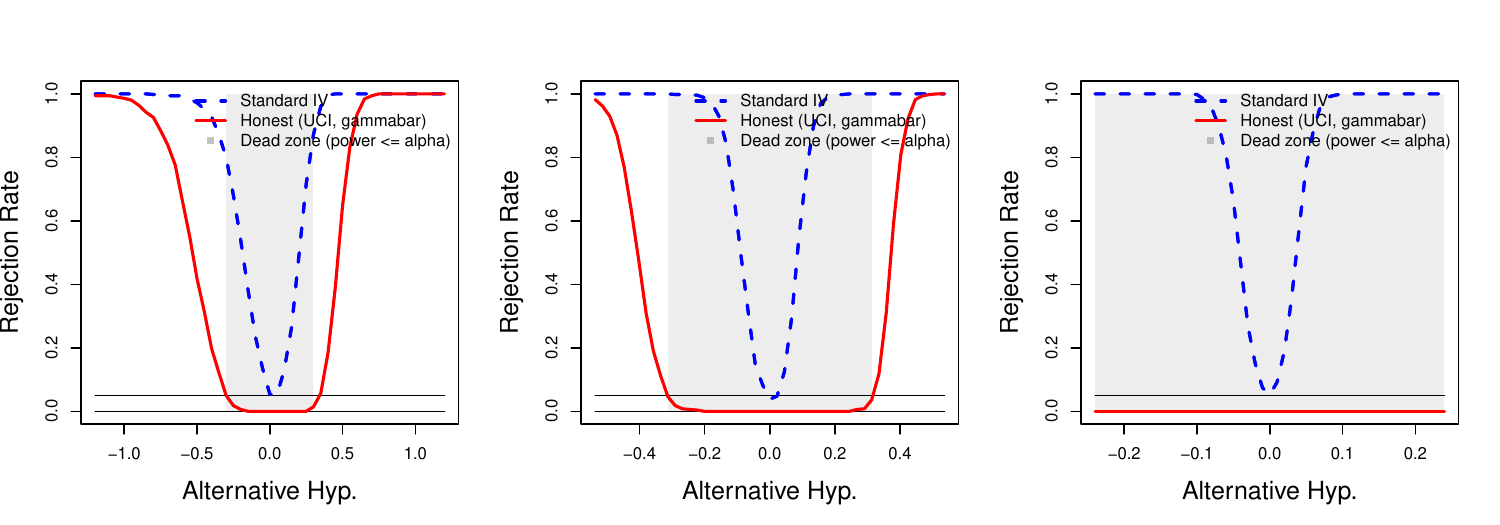}
  \caption{Plausibly-exogenous IV rejection rates for $n^{-1/2}$ local
  alternatives. Left to right: $n\in\{100,500,2500\}$. Blue dashed: standard IV
  interval; red solid: honest union-of-CIs with $\bar\gamma=0.30$.
  Shaded: the dead zone $\{$power $\le\alpha\}$; the $\bar\gamma$ term is constant in $n$,
  so the zone does not shrink across panels.
  Top to bottom: strongly endogenous instrument ($\gamma=0.3$, honest needed), weakly
  endogenous ($\gamma=0.05$), exogenous ($\gamma=0$, honest not needed); $\pi=1$ throughout.
  }
  \label{fig:plausexog}
\end{figure}

\section{Empirical applications}\label{sect:applications}

The paper's practical recommendation is to report the power ``dead zone'' alongside
bias-aware intervals. 
This is the set of effect sizes the chosen bound forgoes the ability to detect. 
This section makes the recommendation concrete on two published studies, chosen so that the exercise is easily reproducible.\footnote{The script \texttt{Applications.R} reproduces both numbers and figures from
the replication data bundled in the \texttt{RDHonest} and \texttt{HonestDiD} packages; no data
need be downloaded.} 
The two sit in the paper's two non-degenerate regimes: bounded and cheap at the matched rate,
decisive and non-vanishing when the bound dominates.

In both applications the relevant null is no effect, $\beta_0=0$, and the dead zone is the set of
true effects whose detection cannot be guaranteed.
That is, the zone symmetric about $\beta_0=0$ that an honest test rejects no more often than its size. 
The dead-zone half-width $2\bar B$ is a property of
the bound, not of the null: it is the same window around \emph{any} hypothesised value, so it is drawn at the economically natural $\beta_0=0$. 
At the matched rate (as in RD) it is a fixed band of half-width $2\bar B$. 
When the bound dominates (as in DiD) the band grows with the bound.

\subsection{Regression discontinuity: incumbency advantage (matched rate)}\label{subsec:app-rd}

Consider the \citet{lee2008randomized} regression-discontinuity estimate of the U.S. House
incumbency advantage, using the bias-aware procedure of \citet{armstrong2018optimal,
armstrong2020simple} with the data-driven smoothness constant of its default rule of thumb. The
point estimate is a $5.85$ percentage-point incumbency advantage. The conventional interval,
which drops the smoothing bias, is $(3.17,8.53)$; the honest interval, which inflates the
critical value from $z_{1-\alpha/2}=1.96$ to the bias-aware $\mathrm{cv}_\alpha=2.31$, is
$(2.69,9.01)$, only $18\%$ wider. This is the matched-rate case $\delta=\rho=\epsilon=2/5$ of
Theorem~\ref{thm:nondeg}. The worst-case bias and the standard error are of the same order
($\bar B/se\approx0.65$), so conservatism costs only a bounded amount. The worst-case bias is
$\bar B=0.89$, so the dead zone of the reported procedure has half-width
$2\bar B=1.78$ percentage points.
Here the estimate exceeds the dead-zone
half-width by a factor of more than three (Figure~\ref{fig:app_rd}), so the finding is robust to
the conservatism, and disclosing the dead zone confirms rather than overturns it.

\begin{figure}[!ht]
  \centering
  \includegraphics[width=0.8\textwidth]{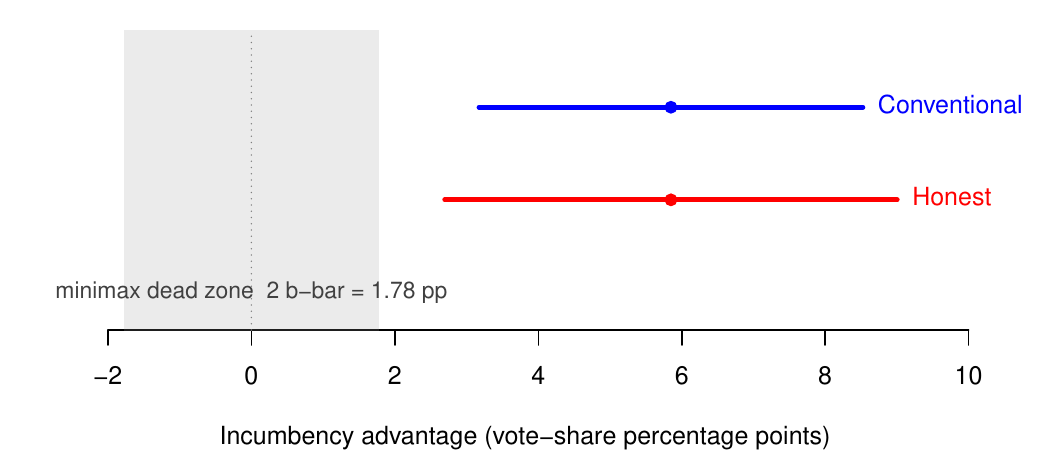}
  \caption{Incumbency advantage \citep{lee2008randomized}: conventional (blue) and honest
  bias-aware (red) confidence intervals, with the dead zone of the bias-aware procedure,
  half-width $2\bar B=1.78$ percentage points, shaded about the null $\beta_0=0$ (the band has
  the same width about any null). The estimate ($5.85$ pp) lies far outside the dead zone:
  detectable despite the conservatism.}
  \label{fig:app_rd}
\end{figure}

\subsection{Difference-in-differences: VAT pass-through to profits (bound dominates)}\label{subsec:app-did}

Consider the \citet{benzarti2019who} event study of a large French VAT cut, asking whether the
reduction was captured as firm profits, with the honest difference-in-differences procedure of
\citet{rambachan2023more}. Imposing exact parallel trends, the first post-reform year ($2009$)
shows a sharply significant profit effect of $0.196$ with conventional interval $(0.159,0.233)$
($t\approx10$). 
This is the bound-dominated regime: under the relative-magnitudes restriction
$\Delta^{RM}(\bar M)$, which allows the post-period violation of parallel trends to be at most a
multiple $\bar M$ of the largest pre-period violation, the honest interval widens with $\bar M$
and does not shrink with the sample. 
The lower limit reaches zero at a breakdown of
$\bar M\approx1.75$ (Figure~\ref{fig:app_did}).
\citet{rambachan2023more} analyse this application with the same machinery and report a
closely related sensitivity analysis of the profit effect; the incremental content here is
the dead-zone reading of that exercise. An effect the size of the point estimate enters the
dead zone of the no-effect test at $\bar M\approx1.23$, where $2\bar B(\bar M)$ reaches the
estimate ($\bar B(\bar M)=\bar M\times0.0795$, the largest consecutive pre-period deviation;
estimate $0.196$): past that bound, detection of an effect of that size can no longer be
guaranteed. The realised interval itself stops rejecting zero only at the breakdown
$\bar M\approx1.75$, the gap between the two being the per-law versus worst-case-union
distinction of Section~\ref{sect:minimaxpowerLossHonest}. Guaranteed detectability of the
conventionally significant effect is thus lost once one concedes post-period trend deviations
roughly $1.2$ times the largest one seen before treatment; the realised rejection survives
further, to $\bar M\approx1.75$.
Disclosure of the dead zone in this application makes explicit that the ability to reject a zero effect rests on the bound the analyst is prepared to defend, i.e. on the prescribed tolerance for pre-trend violations.

\begin{figure}[!ht]
  \centering
  \includegraphics[width=0.8\textwidth]{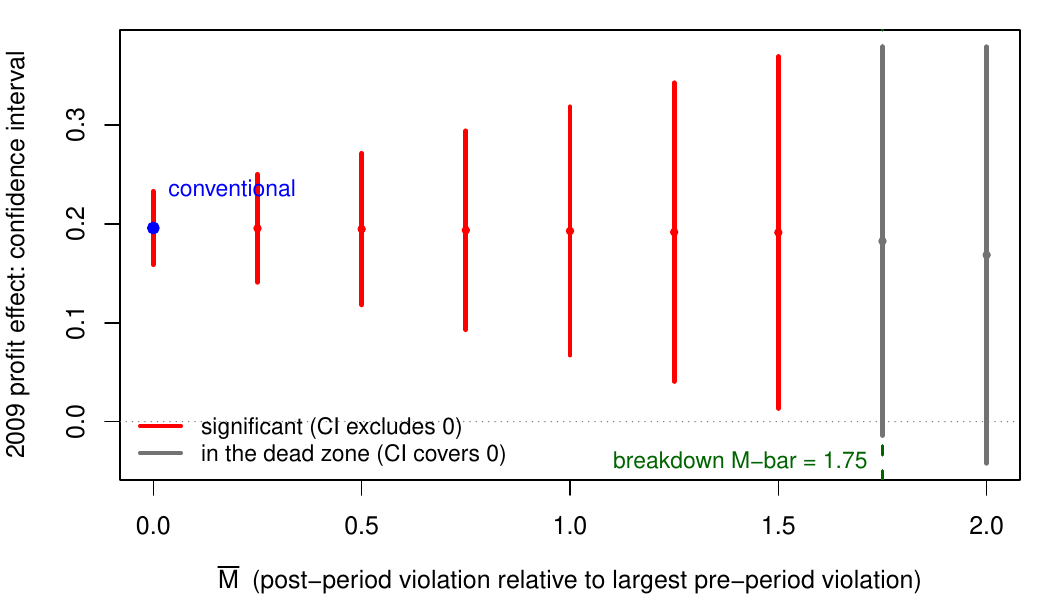}
  \caption{VAT pass-through to profits \citep{benzarti2019who}: the honest interval for the
  $2009$ profit effect as a function of the relative-magnitudes bound $\bar M$. Red bars exclude
  zero (significant); grey bars cover zero: past the breakdown $\bar M\approx1.75$ (green
  dashed) the honest interval no longer rejects zero. An effect the size of the point estimate
  enters the dead zone of the no-effect test earlier, at $\bar M\approx1.23$, where
  $2\bar B(\bar M)$ reaches the estimate. The conventional interval is the $\bar M=0$ point.
  The dead zone is set by $\bar M$, not by the sample size.}
  \label{fig:app_did}
\end{figure}

The two applications bracket the message. In regression discontinuity the dead zone is a
bounded, modest object and the headline survives it; in difference-in-differences the dead zone
is governed entirely by an untestable bound, and can be wide enough that even a very large $t$-statistic is not, by itself, decisive. In both cases the dead-zone half-width is a single number, in the units of the estimand,
that the interval's endpoints alone do not convey.

\section{Conclusion and recommendations}\label{sect:discussion}

This paper establishes a local-alternative power trade-off intrinsic to conservative
inference. At the usual parametric rate, an honest interval may have no local power, and
the loss is shared by every honest procedure, not just bias-aware constructions. The paper is
closed with some practical discussion. 

\paragraph{Disclosure of dead zone.} Consider comparing the bias bound
$\bar B$ with the standard error.
$\bar B=\sup_\theta|\hat B|$ is chosen by the analyst, so the comparison merely
re-expresses how conservative a bound is. 
The power cost is not a hidden risk to
be detected but a deterministic consequence of a stated belief about unobservable
biases, i.e. a monotone function of that preference. What this analysis recommends is
\emph{disclosure}: report the dead-zone half-width, $2\bar B$ in the units of the
estimand, alongside any bias-aware interval, since it states exactly which alternatives
the chosen bound forgoes the ability to detect. This is information the interval's endpoints
alone do not convey.

\paragraph{Honesty versus power.} When the bound dominates standard errors, which is the
typical parametric-rate case with zero honest local power, and the analyst
nonetheless needs power, the only escape is to relax strict honesty.
For instance, to defend a
\emph{tighter} bound on substantive grounds, or to report an undersized test that retains
local power \citep[cf.][]{low1997nonparametric,rambachan2023more}. Which is preferable is
a genuine modelling choice that the size--power frontier here makes explicit rather than
resolves.
Conservatism buys uniform coverage at a power cost that is, at the parametric
rate, total. Quantifying when the bound can be credibly tightened, and characterising the
optimal undersized test are natural next steps for research.

\setlength{\bibsep}{2pt} 
\bibliographystyle{chicago3}
\bibliography{refs}

\end{document}